\newtheorem{mydef}{Definition}
\newtheorem{mycor}{Corollary}
\newtheorem{myprop}{Proposition}
\newtheorem{mythe}{Theorem}
\newtheorem{mylem}{Lemma}
\newtheorem{mynot}{Notation}
\newtheorem{myrem}{Remark}
\newtheorem{myconj}{Conjecture}
\newtheorem{myex}{Example}
\begin{document}

\sloppy

\title{Polar Codes for Arbitrary DMCs and Arbitrary MACs} 

\author{Rajai Nasser and Emre Telatar,~\IEEEmembership{Fellow,~IEEE,}\\
School of Computer and Communication Sciences, EPFL\\
Lausanne, Switzerland\\
Email: \{rajai.nasser, emre.telatar\}@epfl.ch
\thanks{This paper was presented in part at the IEEE International Symposium on Information Theory, Istanbul, Turkey, July 2013.}
\thanks{This paper is submitted to the IEEE Transactions on Information Theory.}
}

\maketitle



\begin{abstract}
Polar codes are constructed for arbitrary channels by imposing an arbitrary quasigroup structure on the input alphabet. Just as with ``usual" polar codes, the block error probability under successive cancellation decoding is $o(2^{-N^{1/2-\epsilon}})$, where $N$ is the block length. Encoding and decoding for these codes can be implemented with a complexity of $O(N\log N)$. It is shown that the same technique can be used to construct polar codes for arbitrary multiple access channels (MAC) by using an appropriate Abelian group structure. Although the symmetric sum capacity is achieved by this coding scheme, some points in the symmetric capacity region may not be achieved. In the case where the channel is a combination of linear channels, we provide a necessary and sufficient condition characterizing the channels whose symmetric capacity region is preserved by the polarization process. We also provide a sufficient condition for having a maximal loss in the dominant face.
\end{abstract}


\section{Introduction}

Polar coding, invented by Ar{\i}kan \cite{Arikan}, is the first low complexity coding technique that achieves the capacity of binary-input symmetric memoryless channels. Polar codes rely on a phenomenon called \emph{polarization}, which is the process of converting a set of identical copies of a given single user binary-input channel, into a set of ``almost extremal channels", i.e., either ``almost perfect channels'', or ``almost useless channels". The probability of error of successive cancellation decoding of polar codes was proven to be equal to $o(2^{-N^{1/2-\epsilon}})$ by Ar{\i}kan and Telatar \cite{ArikanTelatar}.

Ar{\i}kan's technique was generalized by \c{S}a\c{s}o\u{g}lu et al. for channels with an input alphabet of prime size \cite{SasogluTelAri}. Generalization to channels with arbitrary input alphabet size is not simple since it was shown in \cite{SasogluTelAri} that if we use a group operation in an Ar{\i}kan-like construction, it is not guaranteed that polarization will happen as usual to ``almost perfect channels" or ``almost useless channels". \c{S}a\c{s}o\u{g}lu \cite{SasS} used a special type of quasigroup operation to ensure polarization.

Park and Barg \cite{ParkBarg} showed that polar codes can be constructed using the group structure $\mathbb{Z}_{2^r}$. Sahebi and Pradhan \cite{SahebiPradhan} showed that polar codes can be constructed using any Abelian group structure. The polarization phenomenon described in \cite{ParkBarg} and \cite{SahebiPradhan} does not happen in the usual sense, indeed, it was previously proven by \c{S}a\c{s}o\u{g}lu et al. that it is not the case. It is shown in \cite{ParkBarg} and \cite{SahebiPradhan} that while it is true that we don't always have polarization to ``almost perfect channels" or ``almost useless channels" if a general Abelian operation is used, we always have polarization to ``almost useful channels" (i.e., channels that are easy to be used for communication). The proofs in \cite{ParkBarg} and \cite{SahebiPradhan} rely mainly on the properties of Battacharyya parameters to derive polarization results. In this paper, we adopt a different approach: we give a direct elementary proof of polarization for the more general case of quasigroups using only elementary information theoretic concepts (namely, entropies and mutual information). The Battacharyya parameter is used here only to derive the rate of polarization.

In the case of multiple access channels (MAC), we find two main results in the literature: (i) \c{S}a\c{s}o\u{g}lu et al. constructed polar codes for the two-user MAC with an input alphabet of prime size \cite{SasogluTelYeh}, (ii) Abbe and Telatar used matroid theory to construct polar codes for the $m$-user MAC with binary input \cite{AbbeTelatar}. The generalization of the results in \cite{AbbeTelatar} to MACs with arbitrary input alphabet size is not trivial even in the case of prime size since there is no known characterization for non-binary matroids. We have shown in \cite{Rajai} that the use of matroid theory is not necessary; we used elementary techniques to construct polar codes for the $m$-user MAC with input alphabet of prime size. In this paper, we will see how we can construct polar codes for an arbitrary MAC where the input alphabet size is allowed to be arbitrary, and possibly different from one user to another.

In our construction, as well as in both constructions in \cite{SasogluTelYeh} and \cite{AbbeTelatar}, the symmetric sum capacity is preserved by the polarization process. However, a part of the symmetric capacity region may be lost in the process. We study this loss in the special case where the channel is a combination of linear channels (this class of channels will be introduced in section 8).

In section 2, we introduce the preliminaries for this paper. We describe the polarization process in section 3. The rate of polarization is studied in section 4. Polar codes for arbitrary single user channels are constructed in section 5. The special case of group structures is discussed in section 6. We construct polar codes for arbitrary MAC in section 7. The problem of loss in the capacity region is studied in section 8.

\section{Preliminaries}

We first recall the definitions for multiple access channels in order to introduce the notation that will be used throughout this paper. Since ordinary channels (one transmitter and one receiver) can be seen as a special case of multiple access channels, we will not provide definitions for ordinary channels.

\subsection{Multiple access channels}

\begin{mydef}
A discrete $m$-user multiple access channel (MAC) is an ($m+2$)-tuple $P=(\mathcal{X}_1,\;\mathcal{X}_2,\;\ldots,\;\mathcal{X}_m,\;\mathcal{Y},\;f_P)$ where $\mathcal{X}_1,\;\ldots,\;\mathcal{X}_m$ are finite sets that are called the \emph{input alphabets} of $P$, $\mathcal{Y}$ is a finite set that is called the \emph{output alphabet} of $P$, and $f_P:\mathcal{X}_1\times\mathcal{X}_2\times\ldots\times\mathcal{X}_m \times \mathcal{Y} \rightarrow [0,1]$  is a function satisfying $\forall(x_1,x_2,\ldots,x_m)\in\mathcal{X}_1\times\mathcal{X}_2\times\ldots\times\mathcal{X}_m,\;\displaystyle\sum_{y\in\mathcal{Y}} f_P(x_1,x_2,\ldots,x_m,y)=1$.
\end{mydef}

\begin{mynot}
We write $P: \mathcal{X}_1\times\mathcal{X}_2\times\ldots\times\mathcal{X}_m \rightarrow \mathcal{Y}$ to denote that $P$ has $m$ users, $\mathcal{X}_1,\;\mathcal{X}_2,\;\ldots,\;\mathcal{X}_m$ as input alphabets, and $\mathcal{Y}$ as output alphabet. We denote $f_P(x_1,x_2,\ldots,x_m,y)$ by $P(y|x_1,x_2,\ldots,x_m)$ which is interpreted as the conditional probability of receiving $y$ at the output, given that $(x_1,x_2,\ldots,x_m)$ is the input.
\end{mynot}

\begin{mydef}
A code $\mathcal{C}$ of block length $N$ and rate vector $(R_1,R_2,\ldots,R_m)$ is an $(m+1)$-tuple $\mathcal{C}=(f_1,f_2,\ldots,f_m,g)$, where $f_k: \mathcal{W}_k=\{1,2,\ldots,e^{NR_k}\}\rightarrow\mathcal{X}_k^N$ is the encoding function of the $k^{th}$ user and $g:\mathcal{Y}^n\rightarrow\mathcal{W}_1\times\mathcal{W}_2\times\ldots\times\mathcal{W}_m$ is the decoding function. We denote $f_k(w)=\big(f_k(w)_1,\ldots,f_k(w)_N\big)$, where $f_k(w)_n$ is the $n^{th}$ component of $f_k(w)$. The average probability of error of the code $\mathcal{C}$ is given by:
\begin{align*}
P_e(\mathcal{C})=\sum_{(w_1,\ldots,w_m)\in\mathcal{W}_1\times\ldots\times\mathcal{W}_m}
\frac{P_e(w_1,\ldots,w_m)}{|\mathcal{W}_1|\times\ldots\times|\mathcal{W}_m|},
\end{align*}
\begin{align*}
P_e(w_1,\ldots,w_m)=\sum_{\substack{(y_1,\ldots,y_N)\in\mathcal{Y}^N\\g(y_1,\ldots,y_N) \neq(w_1,\ldots,w_m)}}\prod_{n=1}^N P\big(y_n|f_1(w_1)_n,\ldots,f_m(w_m)_n\big).
\end{align*}
\end{mydef}

\begin{mydef}
A rate vector $R=(R_1,\ldots,R_m)$ is said to be achievable if there exists a sequence of codes $\mathcal{C}_N$ of rate vector $(R_1-\epsilon_{1,N},R_2-\epsilon_{2,N},\ldots,R_m-\epsilon_{m,N})$ and of block length $N$ such that the sequence $\{P_e(\mathcal{C}_N)\}_N$ and the sequences $\{\epsilon_{i,N}\}_N$ (for all $1\leq i\leq m$) tend to zero as $N$ tends to infinity. The capacity region of the MAC $P$ is the set of all achievable rate vectors.
\end{mydef}

\begin{mydef}
Given a MAC $P$ and a collection of independent random variables $X_1,\ldots,X_m$ taking values in $\mathcal{X}_1,\ldots,\mathcal{X}_m$ respectively, we define the  polymatroid region $\mathcal{J}_{X_1,\ldots,X_m}(P)$ in $\mathbb{R}^m$ by:
\begin{align*}
\mathcal{J}_{X_1,\ldots,X_m}(P):=\big\{ R=(R_1,\ldots,R_m)\in \mathbb{R}^m:0\leq R(S) \leq I_{X_1,\ldots,X_m}[S](P)\;\;\mathrm{for\;all}\;\;S\subset\{1,\ldots,m\}\big\},
\end{align*}
where $R(S):=\displaystyle\sum_{k=1}^{l_S} R_k$, $X(S):=(X_{s_1},\ldots,X_{s_{l_S}})$ for $S=\{s_1,\ldots,s_{l_S}\}$ and $I_{X_1,\ldots,X_m}[S](P):=I(X(S);YX(S^c))$. The mutual information is computed for the probability distribution $P(y|x_1,\ldots,x_m)\emph{\textrm{P}}_{X_1,\ldots,X_m}(x_1,\ldots,x_m)$ on $\mathcal{X}_1\times\ldots\times\mathcal{X}_m\times\mathcal{Y}$.
\end{mydef}

\begin{mythe}
(Theorem 15.3.6 \cite{Cover}) The capacity region of a MAC $P$ is given by the closure of the convex hull of the union of all information theoretic capacity regions of $P$ for all the input distributions, i.e, $\displaystyle \overline{\mathrm{ConvexHull}}\Bigg(\bigcup_{\substack{X_1,\ldots,X_m\\ \mathrm{are\;independent}\\\mathrm{random\;variables\;in}\\\mathcal{X}_1,\ldots,\mathcal{X}_m\;\mathrm{resp.}}}\mathcal{J}_{X_1,\ldots,X_m}(P)\Bigg)$.
\end{mythe}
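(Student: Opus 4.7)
The plan is to prove both directions of the standard MAC coding theorem: a direct part (any point in the stated region is achievable) and a converse (any achievable rate vector lies in the stated region). Convex-hull and closure operations will be handled by routine time-sharing and approximation arguments at the end.

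For the direct part, I would first fix a product input distribution $\textrm{P}_{X_1}\cdots\textrm{P}_{X_m}$ and show that every interior point of $\mathcal{J}_{X_1,\ldots,X_m}(P)$ is achievable. The tool is random coding: each user $k$ independently generates $e^{NR_k}$ codewords i.i.d.\ from $\textrm{P}_{X_k}$, and the decoder outputs the unique message tuple whose codewords are jointly typical with the received $Y^N$. A union bound over the error events indexed by the nonempty subsets $S\subset\{1,\ldots,m\}$ of users decoded incorrectly, combined with the joint AEP, shows that the probability of error vanishes with $N$ whenever $R(S)<I_{X_1,\ldots,X_m}[S](P)$ for every nonempty $S$. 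Time-sharing among different product distributions then produces arbitrary convex combinations, and the closure is obtained by approximation.

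For the converse, I would take any achievable rate vector $R$ with codes $\mathcal{C}_N$ of vanishing error probability and, for each nonempty $S$, apply Fano's inequality to the sub-decoder that reconstructs $W(S)$ given $W(S^c)$:
\begin{align*}
NR(S)=H(W(S)|W(S^c))\leq I(W(S);Y^N|W(S^c))+N\epsilon_N,
\end{align*}
with $\epsilon_N\to 0$. Using the chain rule, the memoryless property of $P$, and the fact that the encoders are deterministic functions of the independent $W_k$ (so the per-time inputs are independent), the right-hand side single-letterizes to a sum $\sum_{n=1}^N I(X(S)_n;\,Y_n\,X(S^c)_n)$, where $X(S)_n$ denotes the tuple $(X_{k,n})_{k\in S}$. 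Introducing a time-sharing variable $Q$ uniform on $\{1,\ldots,N\}$ and defining $X_k:=X_{k,Q}$ bounds $R(S)$ by a mixture of values $I_{X_1,\ldots,X_m}[S](P)$; sending $N\to\infty$ and absorbing $Q$ through the convex-hull and closure operations places $R$ in the claimed set.

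The delicate step will be the single-letterization in the converse, where independence of the users' messages and the determinism of the encoders have to be combined carefully so that the per-time input distribution factors as a product --- otherwise the single-letter mutual informations need not coincide with $I_{X_1,\ldots,X_m}[S](P)$ for any product input distribution, and the bound would not match the form of the claim. The other ingredients (Fano, chain rule, union bound, joint typicality) are routine. Since the theorem is quoted verbatim from Cover and Thomas, I would refer to their exposition for the full verification rather than reproduce it here.
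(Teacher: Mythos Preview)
The paper does not prove this theorem at all: it is stated with a citation to Cover and Thomas (Theorem 15.3.6) and used as a background fact, with no proof given in the paper. Your outline is the standard achievability-plus-converse argument for the MAC capacity region and is correct at the level of detail you give; since the paper's ``proof'' is simply a reference, your closing remark that one would defer to Cover and Thomas is exactly what the paper does.
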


\begin{mydef}
$I_{X_1,\ldots,X_m}(P):=I_{X_1,\ldots,X_m}[\{1,\ldots,m\}](P)$ is called the \emph{sum capacity} of $P$ for the input distributions $X_1,\ldots,X_m$. It is equal to the maximum value of $R_1+\ldots+R_m$ when $(R_1,\ldots,R_m)$ belongs to the information theoretic capacity region for input distributions $X_1,\ldots,X_m$. The set of points of the information theoretic capacity region satisfying $R_1+\ldots+R_m=I_{X_1,\ldots,X_m}(P)$ is called \emph{the dominant face} of this region.
\end{mydef}

\begin{mynot}
When $X_1,\ldots,X_m$ are independent and uniform random variables in $\mathcal{X}_1,\ldots,\mathcal{X}_m$ respectively, we will simply denote $\mathcal{J}_{X_1,\ldots,X_m}(P)$, $I_{X_1,\ldots,X_m}[S](P)$ and $I_{X_1,\ldots,X_m}(P)$ by $\mathcal{J}(P)$, $I[S](P)$ and $I(P)$ respectively. $\mathcal{J}(P)$ is called the \emph{symmetric capacity region} of $P$, and $I(P)$ is called the \emph{symmetric sum capacity} of $P$.
\end{mynot}

\subsection{Quasigroups}

\begin{mydef}
A quasigroup is a pair $(Q,\ast)$, where $\ast$ is a binary operation on the set $Q$ satisfying the following:
\begin{itemize}
\item For any two elements $a,b\in Q$, there exists a unique element $c\in Q$ such that $a=b\ast c$. We denote this element $c$ by $b\backslash_{\ast} a$.
\item For any two elements $a,b\in Q$, there exists a unique element $d\in Q$ such that $a=d\ast b$. We denote this element $d$ by $a/^{\ast}b$.
\end{itemize}
\end{mydef}

\begin{myrem}
If $(Q,\ast)$ is a quasigroup, then $(Q,/^{\ast})$ and $(Q,\backslash_{\ast})$ are also quasigroups.
\end{myrem}

\begin{mynot}
Let $A$ and $B$ be two subsets of a quasigroup $(Q,\ast)$. We define the set:
$$A\ast B:=\{a\ast b:\; a\in A,b\in B\}.$$
If $A$ and $B$ are non-empty, then $|A\ast B|\geq \max\{|A|,|B|\}$.
\end{mynot}

\begin{mydef}
Let $Q$ be any set. A partition $\mathcal{H}$ of $Q$ is said to be a \emph{balanced partition} if and only if all the elements of $\mathcal{H}$ have the same size. We denote the common size of its elements by $||\mathcal{H}||$. The number of elements in $\mathcal{H}$ is denoted by $|\mathcal{H}|$. Clearly, $|Q|=|\mathcal{H}|\times||\mathcal{H}||$ for such a partition.
\end{mydef}

\begin{mydef}
Let $\mathcal{H}$ be a balanced partition of a set $Q$. We define the \emph{projection onto} $\mathcal{H}$ as the mapping $\emph{Proj}_{\mathcal{H}}:Q\longrightarrow\mathcal{H}$, where $\emph{Proj}_{\mathcal{H}}(x)$ is the unique element $H\in\mathcal{H}$ such that $x\in H$.
\end{mydef}

\begin{mylem}
\label{lemsize}
Let $\mathcal{H}$ be a balanced partition of a quasigroup $(Q,\ast)$. Define: $$\mathcal{H}^*:=\{A\ast B:\; A,B\in\mathcal{H}\}.$$ If $\mathcal{H}^{\ast}$ is a balanced partition, then $||\mathcal{H}^\ast||\geq ||\mathcal{H}||$.
\end{mylem}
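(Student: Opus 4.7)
The proof should be essentially a one-line application of the size bound noted just before the lemma. The plan is as follows.

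First I would observe that any element of $\mathcal{H}^\ast$ has the form $A\ast B$ with $A,B\in\mathcal{H}$. Since $\mathcal{H}$ is a partition of $Q$, every block $A\in\mathcal{H}$ is non-empty, and in particular $|A|=|B|=||\mathcal{H}||$. The notation remark preceding the lemma gives $|A\ast B|\geq \max\{|A|,|B|\}=||\mathcal{H}||$ for any such pair.

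Next, I would use the hypothesis that $\mathcal{H}^\ast$ is a balanced partition: by definition of $||\mathcal{H}^\ast||$, every element of $\mathcal{H}^\ast$ has size exactly $||\mathcal{H}^\ast||$. Picking any $A,B\in\mathcal{H}$ (such a pair exists because $\mathcal{H}$ is non-empty, as it partitions the non-empty set $Q$), we obtain $||\mathcal{H}^\ast||=|A\ast B|\geq ||\mathcal{H}||$, which is the desired inequality.

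There is no real obstacle here; the content of the lemma is exactly the content of the cited product-size bound, combined with the defining property of a balanced partition (uniform block size). The only point to verify carefully is that both quantities $||\mathcal{H}||$ and $||\mathcal{H}^\ast||$ are well-defined under the hypotheses, which is guaranteed by the balanced-partition assumptions on $\mathcal{H}$ and $\mathcal{H}^\ast$ respectively.
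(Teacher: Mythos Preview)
Your proposal is correct and essentially identical to the paper's own proof: pick any $A,B\in\mathcal{H}$, so that $||\mathcal{H}^\ast||=|A\ast B|\geq \max\{|A|,|B|\}=||\mathcal{H}||$.
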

\begin{proof}
Let $A,B\in\mathcal{H}$ then $A\ast B\in\mathcal{H}^{\ast}$, we have:
$$||\mathcal{H}^{\ast}||=|A\ast B|\geq \max\{|A|,|B|\}=||\mathcal{H}||.$$
\end{proof}

\begin{mydef}
\label{defbal}
Let $(Q,\ast)$ be a quasigroup. A balanced partition $\mathcal{H}$ of $Q$ is said to be a \emph{stable partition} of \emph{period} $n$ of $(Q,\ast)$ if and only if there exist $n$ different balanced partitions $\mathcal{H}_1$, \ldots, $\mathcal{H}_n$ of $Q$ such that:
\begin{itemize}
\item $\mathcal{H}_1=\mathcal{H}$.
\item $\mathcal{H}_{i+1}=\mathcal{H}_i^{\ast}=\{A\ast B:\; A,B\in\mathcal{H}_i\}$ for all $i\leq n-1$.
\item $\mathcal{H}=\mathcal{H}_n^{\ast}$.
\end{itemize}
It is easy to see that if $\mathcal{H}$ is a stable partition of period $n$, then $||\mathcal{H}_i||=||\mathcal{H}||$ for all $1\leq i\leq n$ (from lemma \ref{lemsize},  we have $||\mathcal{H}||=||\mathcal{H}_1||\leq||\mathcal{H}_2||\leq\ldots\leq ||\mathcal{H}_n||\leq ||\mathcal{H}||$).
\end{mydef}

\begin{myrem}
Stable partitions always exist. Any quasigroup $(Q,\ast)$ admits at least the following two stable partitions of period 1: $\{Q\}$ and $\big\{\{x\}:x\in Q\big\}$, which are called the \emph{trivial stable partitions} of $(Q,\ast)$. It is easy to see that when $|Q|$ is prime, the only stable partitions are the trivial ones.
\end{myrem}

\begin{myex}
Let $Q=\mathbb{Z}_{n}\times\mathbb{Z}_{n}$, define $(x_1,y_1)\ast (x_2,y_2)=(x_1+y_1+x_2+y_2,y_1+y_2)$. For each $j\in\mathbb{Z}_n$ and each $1\leq i\leq n$, define $H_{i,j}=\{(j+(i-1)k,k):\; k\in\mathbb{Z}_n\}$. Let $\mathcal{H}_i=\{H_{i,j}:\; j\in\mathbb{Z}_n\}$ for $1\leq i\leq n$. It is easy to see that $\mathcal{H}_i^{\ast}=\mathcal{H}_{i+1}$ for $1\leq i\leq n-1$ and $\mathcal{H}_n^{\ast}=\mathcal{H}_1$. Therefore, $\mathcal{H}:=\mathcal{H}_1$ is a stable partition of $(Q,\ast)$ whose period is $n$.
\end{myex}
Note that the operation $\ast$ in the last example is not a group operation when $n>1$.

\begin{mylem}
\label{lemarb}
If $\mathcal{H}$ is a stable partition and $A_1$ is an arbitrary element of $\mathcal{H}$, then $\mathcal{H}^\ast=\{A_1\ast A_2:\;A_2\in\mathcal{H}\}$.
\end{mylem}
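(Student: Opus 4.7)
The plan is to prove the two inclusions separately. One direction, $\{A_1\ast A_2:\;A_2\in\mathcal{H}\}\subseteq\mathcal{H}^\ast$, is immediate from the definition of $\mathcal{H}^\ast$, so the whole content lies in proving the reverse inclusion $\mathcal{H}^\ast\subseteq\{A_1\ast A_2:\;A_2\in\mathcal{H}\}$.

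For the nontrivial direction, I would fix an arbitrary $C\in\mathcal{H}^\ast$ and try to construct a specific $A_2\in\mathcal{H}$ satisfying $A_1\ast A_2=C$. The idea is to use the quasigroup property to manufacture a candidate element: choose any $c\in C$ and any $a_1\in A_1$; by the defining property of a quasigroup there is a unique $a_2\in Q$ such that $a_1\ast a_2=c$ (namely $a_2=a_1\backslash_{\ast}c$). Let $A_2:=\mathrm{Proj}_{\mathcal{H}}(a_2)$, i.e., the unique element of $\mathcal{H}$ containing $a_2$. Then by construction $A_1\ast A_2\in\mathcal{H}^\ast$ and it contains the element $c=a_1\ast a_2$.

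The final step is to deduce $A_1\ast A_2=C$. This is where stability matters: because $\mathcal{H}$ is a stable partition, $\mathcal{H}^\ast$ is a balanced partition of $Q$, so its elements are pairwise disjoint. Since both $A_1\ast A_2$ and $C$ belong to $\mathcal{H}^\ast$ and share the common element $c$, they must coincide, giving $C=A_1\ast A_2\in\{A_1\ast A_2:\;A_2\in\mathcal{H}\}$.

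The only potential obstacle is making sure one really can invoke the partition property of $\mathcal{H}^\ast$ rather than merely its balancedness; but this is guaranteed by Definition~\ref{defbal}, which requires each $\mathcal{H}_i$ (in particular $\mathcal{H}_2=\mathcal{H}^\ast$) to be a balanced partition of $Q$, hence a genuine partition. Everything else is a one-line use of the quasigroup axiom ensuring the existence and uniqueness of $a_2$. The argument actually shows more: $a_1$ can be chosen arbitrarily in $A_1$, so the stated equality holds independently of which representative of $A_1$ is used when unpacking the construction.
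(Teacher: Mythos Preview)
Your proof is correct and uses essentially the same ingredients as the paper's: the trivial inclusion $\{A_1\ast A_2:\;A_2\in\mathcal{H}\}\subseteq\mathcal{H}^\ast$, the quasigroup solvability property, and the fact (from Definition~\ref{defbal}) that $\mathcal{H}^\ast$ is a genuine partition of $Q$. The paper phrases the nontrivial direction as a covering argument---writing $Q=A_1\ast Q=\bigcup_{A_2\in\mathcal{H}}A_1\ast A_2$ and concluding that a subcollection of a partition which covers $Q$ must be the whole partition---whereas you pick $c\in C$ and solve $a_1\ast a_2=c$ to exhibit an explicit $A_2$; unwinding either argument yields the other, so the two approaches are the same in substance.
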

\begin{proof}
We have:
\begin{align*}
Q &= A_1\ast Q= A_1\ast\Big(\bigcup_{A_2\in\mathcal{H}}A_2\Big) =\bigcup_{A_2\in\mathcal{H}}(A_1\ast A_2).
\end{align*}
Therefore, $\{A_1\ast A_2:\; A_2\in\mathcal{H}\}$ covers $Q$ and is a subset of $\mathcal{H}^\ast$ (which is a partition of $Q$ that does not contain the empty set as an element). We conclude that $\mathcal{H}^{\ast}=\{A_1\ast A_2:\; A_2\in\mathcal{H}\}$.
\end{proof}

\begin{mydef}
For any two partitions $\mathcal{H}_1$ and $\mathcal{H}_2$, we define:
$$\mathcal{H}_1 \wedge \mathcal{H}_2=\{A\cap B:\; A\in\mathcal{H}_1,\; B\in\mathcal{H}_2,\; A\cap B\neq \o\}.$$
\end{mydef}

\begin{mylem}
\label{lemWedge}
If $\mathcal{H}_1$ and $\mathcal{H}_2$ are stable then $\mathcal{H}_1\wedge\mathcal{H}_2$ is also a stable partition of $(Q,\ast)$, and $(\mathcal{H}_1\wedge\mathcal{H}_2)^{\ast}=\mathcal{H}_1^{\ast}\wedge \mathcal{H}_2^{\ast}$.
\end{mylem}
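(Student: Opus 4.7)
The plan is to first establish a translation property for a single stable partition, then use it to derive both balancedness of $\mathcal{H}_1\wedge\mathcal{H}_2$ and the identity $(\mathcal{H}_1\wedge\mathcal{H}_2)^\ast=\mathcal{H}_1^\ast\wedge\mathcal{H}_2^\ast$, and finally iterate to obtain stability.

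The key structural observation is that if $\mathcal{H}$ is stable, then by Definition \ref{defbal} all partitions in its cycle share the same block size $||\mathcal{H}||$, so $||\mathcal{H}^\ast||=||\mathcal{H}||$. Hence for any $A\in\mathcal{H}$, $a\in A$, and $A'\in\mathcal{H}$, the inclusion $a\ast A'\subseteq A\ast A'$ together with $|a\ast A'|=|A'|=||\mathcal{H}||=||\mathcal{H}^\ast||=|A\ast A'|$ forces $a\ast A'=A\ast A'$; symmetrically $A\ast a'=A\ast A'$ for any $a'\in A'$. In words, left multiplication by any $a\in A$ sends each block $A'\in\mathcal{H}$ onto the corresponding block $A\ast A'\in\mathcal{H}^\ast$, and likewise for right multiplication.

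With this in hand, set $\mathcal{K}:=\mathcal{H}_1\wedge\mathcal{H}_2$. To show $\mathcal{K}$ is balanced, take two non-empty blocks $A_1\cap A_2$ and $A_1'\cap A_2'$ of $\mathcal{K}$, pick $a\in A_1\cap A_2$, and consider the bijection $x\mapsto a\ast x$ on $Q$. The translation property gives $a\ast A_1'=A_1\ast A_1'$ and $a\ast A_2'=A_2\ast A_2'$, so this bijection sends $A_1'\cap A_2'$ onto $(A_1\ast A_1')\cap(A_2\ast A_2')$. The symmetric argument with right multiplication by some $b\in A_1'\cap A_2'$ sends $A_1\cap A_2$ onto the same intersection, forcing $|A_1\cap A_2|=|A_1'\cap A_2'|$. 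The same computation identifies $\mathcal{K}^\ast$: for $C=A_1\cap A_2$ and $C'=A_1'\cap A_2'$ in $\mathcal{K}$, picking $a\in C$ yields $C\ast C'\supseteq a\ast C'=(a\ast A_1')\cap(a\ast A_2')=(A_1\ast A_1')\cap(A_2\ast A_2')$, while $C\ast C'\subseteq(A_1\ast A_1')\cap(A_2\ast A_2')$ is immediate, so $C\ast C'\in\mathcal{H}_1^\ast\wedge\mathcal{H}_2^\ast$. Conversely, for any non-empty $\tilde{A}_1\cap\tilde{A}_2\in\mathcal{H}_1^\ast\wedge\mathcal{H}_2^\ast$, fix any $A_1\cap A_2\in\mathcal{K}$ with representative $a$, invoke Lemma \ref{lemarb} to get the unique $A_1',A_2'$ with $A_1\ast A_1'=\tilde{A}_1$ and $A_2\ast A_2'=\tilde{A}_2$, and use uniqueness of quasigroup solutions (applied to any $x\in\tilde{A}_1\cap\tilde{A}_2$ written as $a\ast y$) to conclude $A_1'\cap A_2'\neq\emptyset$; then $(A_1\cap A_2)\ast(A_1'\cap A_2')=\tilde{A}_1\cap\tilde{A}_2$.

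To finish, let $n_1,n_2$ be the periods of $\mathcal{H}_1,\mathcal{H}_2$ and set $n:=\mathrm{lcm}(n_1,n_2)$. Applying the preceding paragraph iteratively to $\mathcal{K}_i:=\mathcal{H}_{1,i}\wedge\mathcal{H}_{2,i}$ gives balanced partitions with $\mathcal{K}_{i+1}=\mathcal{K}_i^\ast$ and $\mathcal{K}_{n+1}=\mathcal{K}_1$, so $\mathcal{K}_1=\mathcal{H}_1\wedge\mathcal{H}_2$ is stable and the identity $(\mathcal{H}_1\wedge\mathcal{H}_2)^\ast=\mathcal{H}_1^\ast\wedge\mathcal{H}_2^\ast$ is precisely $\mathcal{K}_1^\ast=\mathcal{K}_2$. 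The main obstacle is the balancedness step: it does not follow from naive counting in terms of $|\mathcal{H}_1|,|\mathcal{H}_2|$, and specifically requires the translation property to realize each block of $\mathcal{K}$ as the image of a block of $\mathcal{H}_1^\ast\wedge\mathcal{H}_2^\ast$ under both a left and a right multiplication of $(Q,\ast)$.
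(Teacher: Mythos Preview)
Your proof is correct and takes a genuinely different route from the paper's. The paper argues by a global counting trick: it picks $A_1\cap B_1$ of maximal size, uses Lemma~\ref{lemarb} to express $|Q|$ as a double sum over $\mathcal{H}_1\times\mathcal{H}_2$, and then chains the inequalities $|(A_1\ast A_2)\cap(B_1\ast B_2)|\geq|(A_1\cap B_1)\ast(A_2\cap B_2)|\geq|A_1\cap B_1|\geq|A_2\cap B_2|$ into a telescoping comparison with $|Q|$, forcing every inequality to be an equality. You instead isolate the translation property $a\ast A'=A\ast A'$ (valid because $||\mathcal{H}^\ast||=||\mathcal{H}||$ for stable $\mathcal{H}$), which immediately gives a bijection between any block of $\mathcal{H}_1\wedge\mathcal{H}_2$ and the corresponding block of $\mathcal{H}_1^\ast\wedge\mathcal{H}_2^\ast$, and hence between any two blocks of $\mathcal{H}_1\wedge\mathcal{H}_2$. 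This is more transparent: it explains \emph{why} the intersections have equal size rather than deducing it from an extremal choice, and it yields the identity $(A_1\cap A_2)\ast(A_1'\cap A_2')=(A_1\ast A_1')\cap(A_2\ast A_2')$ directly rather than via a cardinality squeeze. The paper's approach, on the other hand, avoids having to separately verify the reverse inclusion $\mathcal{H}_1^\ast\wedge\mathcal{H}_2^\ast\subseteq(\mathcal{H}_1\wedge\mathcal{H}_2)^\ast$ (your use of Lemma~\ref{lemarb} and the quasigroup solution $y=a\backslash_\ast x$), since its equality of cardinalities already forces $(A_1\cap B_1)\ast(A_2\cap B_2)=(A_1\ast A_2)\cap(B_1\ast B_2)$ for every pair with non-empty intersection. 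Both arguments conclude stability the same way, with period dividing $\mathrm{lcm}(n_1,n_2)$.
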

\begin{proof}
Since $\mathcal{H}_1$ and $\mathcal{H}_2$ are two partitions of $Q$, it is easy to see that $\mathcal{H}_1\wedge\mathcal{H}_2$ is also a partition of $Q$. Now let $A_1,A_2\in \mathcal{H}_1$ and $B_1,B_2\in \mathcal{H}_2$. If $A_1\cap B_1\neq \o$ and $A_2\cap B_2\neq \o$, we have:
\begin{equation}
\label{eqref}
(A_1\cap B_1)\ast (A_2\cap B_2)\;\subset\;(A_1\ast A_2)\cap (B_1\ast B_2)\in \mathcal{H}_1^{\ast}\wedge\mathcal{H}_2^{\ast}.
\end{equation}
Let $A_1\in\mathcal{H}_1$ and $B_1\in\mathcal{H}_2$ be chosen such that $|A_1\cap B_1|$ is maximal. Lemma \ref{lemarb} implies that $\mathcal{H}_1^{\ast}=\{A_1\ast A_2:\; A_2\in\mathcal{H}_1\}$ and $\mathcal{H}_2^{\ast}=\{B_1\ast B_2:\; B_2\in\mathcal{H}_1\}$. Therefore,
$$|Q| =\sum_{\substack{(A_2,B_2)\in \mathcal{H}_1\times\mathcal{H}_2}}|(A_1\ast A_2)\cap (B_1\ast B_2)|,$$
which implies that
\begin{align}
\label{eqe1}
|Q| &\geq \sum_{\substack{(A_2,B_2)\in \mathcal{H}_1\times\mathcal{H}_2\\ A_2\cap B_2\neq \o}}|(A_1\ast A_2)\cap (B_1\ast B_2)|\\
&\geq \label{eqe2}
\sum_{\substack{(A_2,B_2)\in \mathcal{H}_1\times\mathcal{H}_2\\ A_2\cap B_2\neq \o}}|(A_1\cap B_1)\ast (A_2\cap B_2)|,
\end{align}
where \eqref{eqe2} follows from \eqref{eqref}. Now if $A_2\cap B_2\neq \o$, we must have 
\begin{equation}
\label{eqref2}
|(A_1\cap B_1)\ast (A_2\cap B_2)|\geq |A_1\cap B_1|\geq |A_2\cap B_2|.
\end{equation}
Therefore, we have:
\begin{align}
\sum_{\substack{(A_2,B_2)\in \mathcal{H}_1\times\mathcal{H}_2\\ A_2\cap B_2\neq \o}}|(A_1\cap B_1)\ast (A_2\cap B_2)|
\geq \sum_{\substack{(A_2,B_2)\in \mathcal{H}_1\times\mathcal{H}_2\\ A_2\cap B_2\neq \o}}|A_1\cap B_1| \geq \sum_{\substack{(A_2,B_2)\in \mathcal{H}_1\times\mathcal{H}_2\\ A_2\cap B_2\neq \o}}|A_2\cap B_2| \label{eqe3}
\end{align}
Now since $\mathcal{H}_1$ and $\mathcal{H}_2$ are two partitions of $Q$, we must have $\displaystyle \sum_{\substack{(A_2,B_2)\in \mathcal{H}_1\times\mathcal{H}_2\\ A_2\cap B_2\neq \o}}|A_2\cap B_2|=|Q|$. We conclude that all the inequalities in \eqref{eqe1}, \eqref{eqe2}, \eqref{eqref2} and \eqref{eqe3} are in fact equalities. Therefore, for all $A_2\in\mathcal{H}_1$ and $B_2\in\mathcal{H}_2$ such that $A_2\cap B_2\neq \o$, we have $|A_2\cap B_2|=|A_1\cap B_1|$ (i.e., $\mathcal{H}_1\wedge\mathcal{H}_2$ is a balanced partition), and $|(A_1\cap B_1)\ast (A_2\cap B_2)|=|(A_1\ast A_2)\cap (B_1\ast B_2)|$. Now \eqref{eqref} implies that $(A_1\cap B_1)\ast (A_2\cap B_2)=(A_1\ast A_2)\cap (B_1\ast B_2)$. Therefore, $(\mathcal{H}_1\wedge\mathcal{H}_2)^{\ast} =\mathcal{H}_1^{\ast}\wedge\mathcal{H}_2^{\ast}$.

If $\mathcal{H}_1$ and $\mathcal{H}_2$ are of periods $n_1$ and $n_2$ respectively, then $\mathcal{H}_1\wedge\mathcal{H}_2$ is a stable partition whose period is at most $\mathrm{lcm}(n_1,n_2)$.
\end{proof}

\section{Polarization process}

In this section, we consider ordinary channels having a quasigroup structure on their input alphabet.

\begin{mydef}
Let $(Q,\ast)$ be an arbitrary quasigroup, and let $P:Q\longrightarrow\mathcal{Y}$ be a single user channel. We define the two channels $P^-:Q\longrightarrow\mathcal{Y}\times\mathcal{Y}$ and $P^+:Q\longrightarrow\mathcal{Y}\times\mathcal{Y}\times Q$ as follows:
$$P^-(y_1,y_2|u_1)=\frac{1}{|Q|}\sum_{u_2\in Q}P(y_1|u_1\ast u_2)P(y_2|u_2),$$
$$P^+(y_1,y_2,u_1|u_2)=\frac{1}{|Q|}P(y_1|u_1\ast u_2)P(y_2|u_2).$$
For any $s=(s_1,\ldots,s_n)\in\{-,+\}^n$, we define $P^s:=((P^{s_1})^{s_2}\ldots)^{s_n}$.
\end{mydef}

\begin{myrem}
\label{rem1}
Let $U_1$ and $U_2$ be two independent random variables uniformly distributed in $Q$. Set $X_1=U_1\ast U_2$ and $X_2=U_2$, then $X_1$ and $X_2$ are independent and uniform in $Q$ since $\ast$ is a quasigroup operation. Let $Y_1$ and $Y_2$ be the outputs of the channel $P$ when $X_1$ and $X_2$ are the inputs respectively. It is easy to see that $I(P^-)=I(U_1;Y_1,Y_2)$ and $I(P^+)=I(U_2;Y_1,Y_2,U_1)$. We have:
\begin{align*}
I(P^-)+I(P^+)&=I(U_1;Y_1,Y_2)+I(U_2;Y_1,Y_2,U_1)=I(U_1,U_2;Y_1,Y_2)\\ &=I(X_1,X_2;Y_1,Y_2)=I(X_1;Y_1)+I(X_2;Y_2)=2I(P).
\end{align*}
It is clear that $$I(P^+)=I(U_2;Y_1,Y_2,U_1)\geq I(U_2;Y_2)=I(X_2;Y_2)=I(P).$$ We conclude that $I(P^-)\leq I(P)\leq I(P^+)$.
\end{myrem}

\begin{mydef}
Let $\mathcal{H}$ be a balanced partition of $(Q,/^{\ast})$, we define the channel $P[\mathcal{H}]:\mathcal{H}\longrightarrow\mathcal{Y}$ by:
$$P[\mathcal{H}](y|H)=\frac{1}{||\mathcal{H}||}\sum_{\substack{x\in Q,\\\emph{Proj}_{\mathcal{H}}(x) = H}}P(y|x).$$
\end{mydef}

\begin{myrem}
If $X$ is a random variable uniformly distributed in $Q$ and $Y$ is the output of the channel $P$ when $X$ is the input, then it is easy to see that $I(P[\mathcal{H}])=I(\emph{Proj}_{\mathcal{H}}(X);Y)$.
\end{myrem}

\begin{mydef}
\label{def1}
Let $\{B_n\}_{n\geq1}$ be i.i.d. uniform random variables in $\{-,+\}$. We define the channel-valued process $\{P_n\}_{n\geq0}$ by:
\begin{align*}
P_0 &:= P,\\
P_{n} &:=P_{n-1}^{B_n}\;\forall n\geq1.
\end{align*}
\end{mydef}

The main result of this section is that almost surely $P_n$ becomes a channel where the output is ``almost equivalent" to the projection of the input onto a stable partition of $(Q,/^{\ast})$:

\begin{mythe}
Let $(Q,\ast)$ be a quasigroup and let $P:Q\longrightarrow\mathcal{Y}$ be an arbitrary channel. Then for any $\delta>0$, we have:
\begin{align*}
\lim_{n\to\infty} \frac{1}{2^n} \bigg|\Big\{ s\in\{-,+\}^n:\; &\exists \mathcal{H}_s\; \emph{a stable partition of $(Q,/^{\ast})$},\\
&\big| I(P^s)-\log|\mathcal{H}_s|\big|<\delta, \big| I(P^s[\mathcal{H}_s])-\log|\mathcal{H}_s|\big|<\delta \Big\}\bigg| = 1.
\end{align*}
\label{mainthe1}
\end{mythe}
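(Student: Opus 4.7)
The plan is to combine a martingale convergence argument with a ``local polarization'' lemma, following the standard template for polarization proofs.

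First, I would establish that $\{I(P_n)\}_{n\geq 0}$ is a bounded martingale with respect to the filtration generated by $B_1,\ldots,B_n$. Indeed, by Remark~\ref{rem1}, $I(P_n^-) + I(P_n^+) = 2I(P_n)$, so
\begin{equation*}
\mathbb{E}[I(P_{n+1}) \mid B_1,\ldots,B_n] = \tfrac{1}{2}I(P_n^-) + \tfrac{1}{2}I(P_n^+) = I(P_n).
\end{equation*}
Since $0\le I(P_n)\le \log|Q|$, Doob's theorem gives almost-sure convergence of $I(P_n)$, and since $|I(P_{n+1}) - I(P_n)| = \tfrac{1}{2}(I(P_n^+) - I(P_n^-))$, we obtain $I(P_n^+) - I(P_n^-)\to 0$ almost surely. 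Equivalently, for every $\eta>0$,
\begin{equation*}
\lim_{n\to\infty} \frac{1}{2^n}\bigl|\bigl\{s\in\{-,+\}^n : I(P^{s+}) - I(P^{s-}) < \eta \bigr\}\bigr| = 1.
\end{equation*}

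The core of the proof is a \emph{local polarization lemma}: for every $\delta>0$ there exists $\eta>0$ such that, for every channel $P':Q\to\mathcal{Z}$, if $I((P')^+) - I((P')^-) < \eta$, then there is a stable partition $\mathcal{H}$ of $(Q,/^{\ast})$ with $\bigl|I(P') - \log|\mathcal{H}|\bigr| < \delta$ and $\bigl|I(P'[\mathcal{H}]) - \log|\mathcal{H}|\bigr| < \delta$. To prove it, I would use the variables of Remark~\ref{rem1}: $U_1,U_2$ i.i.d.\ uniform on $Q$, $X_1 = U_1\ast U_2$, $X_2 = U_2$, and $Y_1,Y_2$ the outputs of $P'$ on $X_1,X_2$. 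The hypothesis says $I(U_2;Y_1,Y_2,U_1) - I(U_1;Y_1,Y_2)$ is small, so $U_1$ and $U_2$ are nearly interchangeable from the viewpoint of $(Y_1,Y_2)$. I would associate to $P'$ a canonical partition $\mathcal{H}=\mathcal{H}(P')$ of $Q$ that groups together inputs whose output distributions are sufficiently close, and verify that (i) $\mathcal{H}$ is a stable partition of $(Q,/^{\ast})$, invoking Lemmas~\ref{lemarb} and~\ref{lemWedge} to transport, refine, or intersect candidate partitions, and (ii) the two quantitative bounds on $I(P')$ and $I(P'[\mathcal{H}])$ hold.

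Combining the two ingredients is routine: given $\delta$, take $\eta=\eta(\delta)$ from the local lemma, and use the martingale conclusion to deduce that the set of ``good'' $s$ has asymptotic density one.

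The main obstacle is the local polarization lemma, specifically the passage from an \emph{approximate} information-theoretic condition to a discrete combinatorial object. The natural route is to first handle the exact case $I((P')^+) = I((P')^-)$, in which one can hope to read off a genuine stable partition of $(Q,/^{\ast})$ attached to $P'$, and then leverage compactness: the collection of stable partitions of $(Q,/^{\ast})$ is finite, the ambient space of channels with input alphabet $Q$ is finite-dimensional, and for each stable partition $\mathcal{H}$ the set of channels with $I(P') = I(P'[\mathcal{H}]) = \log|\mathcal{H}|$ is closed, so a uniform continuity argument converts an $\eta$-gap into $\delta(\eta)$-proximity to one such set. The quasigroup setting is more delicate than the group one because the action of $\ast$ on a partition lacks the homogeneity of a group operation; the closure properties of stable partitions recorded in Section~2 are precisely what make the combinatorial manipulation tractable.
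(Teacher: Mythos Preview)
Your martingale setup is correct and matches the paper. The gap is in the local polarization lemma: as stated, it is too strong and almost certainly false, and the paper does \emph{not} prove anything like it.

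From a single near-equality $I((P')^+)-I((P')^-)<\eta$ one can only extract a \emph{balanced} partition $\mathcal{H}_0$ of $Q$ (this is the content of Lemma~\ref{MainLem}), not a stable one. Even in the exact case $I((P')^+)=I((P')^-)$, when $P'$ is precisely the projection channel onto $\mathcal{H}_0$, all you learn is that $|H_1/^{\ast}H_2|=||\mathcal{H}_0||$ for every $H_1,H_2\in\mathcal{H}_0$; nothing forces the collection $\{H_1/^{\ast}H_2\}$ to be a partition, let alone forces the sequence $\mathcal{H}_0,\mathcal{H}_0^{/^{\ast}},\ldots$ to return to $\mathcal{H}_0$. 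If $\mathcal{H}_0$ is merely pre-periodic under $\mathcal{H}\mapsto\mathcal{H}^{/^{\ast}}$, there is no stable $\mathcal{H}$ with $|\mathcal{H}|=|\mathcal{H}_0|$ refined by $\mathcal{H}_0$, so no stable $\mathcal{H}$ can satisfy both of your inequalities. Your compactness argument also does not rescue this: the output alphabets of the channels $P^s$ grow without bound, so there is no fixed finite-dimensional ambient space to invoke, and even after passing to posterior-equivalence classes the exact case does not yield stability from one step.

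What the paper actually does is prove a weaker two-step lemma (Lemma~\ref{MainLem2}): if $|I((P')^{--})-I(P')|$ is small, then there is a balanced partition $\mathcal{H}$ such that $\mathcal{H}^{/^{\ast}}$ is \emph{also} a balanced partition, with $\mathcal{P}_{\mathcal{H},\delta}(X;Y)$, $\mathcal{P}_{\mathcal{H}^{/^{\ast}},\delta}(U_1;Y_1Y_2)$ and $\mathcal{P}_{\mathcal{H},\delta}(U_2;Y_1Y_2U_1)$ all close to $1$. Stability is obtained \emph{globally}, not locally: one fixes $l$ larger than the number $m$ of balanced partitions of $Q$, restricts to those $s_1\in\{-,+\}^{n-l}$ for which the gap $|I(P^{(s_1,s_2,-)})-I(P^{(s_1,s_2)})|$ is small for \emph{every} suffix $s_2$ of length $\le l+1$, and uses Lemma~\ref{lemPart} to show the associated partitions satisfy $\mathcal{H}_{(s,-)}=\mathcal{H}_s^{/^{\ast}}$ and $\mathcal{H}_{(s,+)}=\mathcal{H}_s$. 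Following any $s_2$ containing at least $m$ minus signs, pigeonhole on the finite set of balanced partitions forces a repetition, so $\mathcal{H}_{(s_1,s_2)}$ is stable. The fraction of such $(s_1,s_2)$ tends to $1$ by first letting $n\to\infty$ and then $l\to\infty$.

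In short: you need to replace your one-step local lemma by a two-step one that only produces balanced partitions with balanced $/^{\ast}$-image, and then add a separate pigeonhole argument over a window of $l\gg m$ subsequent polarization steps to upgrade ``balanced'' to ``stable''.
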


\begin{myrem}
Theorem \ref{mainthe1} can be interpreted as follows: in a polarized channel $P^s$, we have $I(P^s)\approx I(P^s[\mathcal{H}_s])\approx \log|\mathcal{H}_s|$ for a certain stable partition $\mathcal{H}_s$ of $(Q,/^{\ast})$. Let $X_s$ and $Y_s$ be the channel input and output of $P^s$ respectively. $I(P^s[\mathcal{H}_s])\approx \log|\mathcal{H}_s|$ means that $Y_s$ ``almost" determines $\mathrm{Proj}_{\mathcal{H}_s}(X_s)$. On the other hand, $I(P^s)\approx I(P^s[\mathcal{H}_s])$ means that there is ``almost" no information about $X_s$ other than $\mathrm{Proj}_{\mathcal{H}_s}(X_s)$ which can be determined from $Y_s$.
\end{myrem}

In order to prove theorem \ref{mainthe1}, we need several lemmas:

\begin{mylem}
\label{lemEqPhi}
Let $(Q,\ast)$ be a quasigroup. If $A$, $B$ and $C$ are three non-empty subsets of $Q$ such that $|A|=|B|=|C|=|A\ast C|=|B\ast C|$, then either $A\cap B=\o$ or $A=B$.
\end{mylem}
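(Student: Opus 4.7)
The plan is to exploit the quasigroup property, which says that left- and right-translations $q\mapsto a\ast q$ and $q\mapsto q\ast c$ are bijections of $Q$. Writing $k:=|A|=|B|=|C|$, this bijectivity implies $|a\ast C|=|A\ast c|=|b\ast C|=|B\ast c|=k$ for every $a\in A$, $b\in B$, $c\in C$. Since each of these singleton-translates is contained in the size-$k$ set $A\ast C$ or $B\ast C$, equality must hold: $A\ast c=A\ast C$ for every $c\in C$, and $B\ast c=B\ast C$ for every $c\in C$.

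Now I would assume $A\cap B\neq\emptyset$ and pick any $x\in A\cap B$. Then on the one hand $x\ast C\subseteq A\ast C$, and on the other hand $|x\ast C|=k=|A\ast C|$, so $x\ast C=A\ast C$. The same argument with $B$ in place of $A$ gives $x\ast C=B\ast C$, and therefore $A\ast C=B\ast C$.

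Having shown $A\ast C=B\ast C$, I would fix any single element $c\in C$ (which exists since $C\neq\emptyset$). From the first paragraph, $A\ast c=A\ast C=B\ast C=B\ast c$. The map $q\mapsto q\ast c$ is a bijection on $Q$, so from $A\ast c=B\ast c$ we conclude $A=B$, which completes the proof of the dichotomy.

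The argument is really just a sequence of size comparisons combined with the bijectivity of quasigroup translations, so I do not expect any genuine obstacle; the one subtlety to be careful about is to invoke both left- and right-translation bijections (to get $|a\ast C|=k$ and $|A\ast c|=k$), and to avoid confusing $A\ast C$ with the individual slices $a\ast C$ and $A\ast c$ before the size argument forces them to coincide.
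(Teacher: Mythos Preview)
Your proof is correct and follows essentially the same approach as the paper: both pick a common element $x\in A\cap B$ and use the size hypothesis together with left-translation bijectivity to deduce $A\ast C=x\ast C=B\ast C$. The only minor difference is in the final step: the paper observes $(A\cup B)\ast C=x\ast C$ and then uses $|A|\leq|A\cup B|\leq|(A\cup B)\ast C|=|A|$ to force $|A\cup B|=|A|$, whereas you slice by a single $c\in C$ to get $A\ast c=B\ast c$ and cancel via the right-translation bijection---both are equally valid ways to finish.
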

\begin{proof}
Suppose that $A\cap B\neq \o$ and let $a\in A\cap B$. The fact that $|A\ast C|=|C|$ implies that $A\ast C= a\ast C$. Similarly, we also have $B\ast C= a\ast C$ since $a\in B$. Therefore, $(A\cup B)\ast C= a\ast C$, and so $|(A\cup B)\ast C|=|C|=|A|$. By noticing that $|A|\leq |A\cup B|\leq |(A\cup B)\ast C|=|A|$, we conclude that $|A\cup B|=|A|$, which implies that $A=B$ since $|A|=|B|$.
\end{proof}

\begin{mydef}
Let $Q$ be a set, and let $A$ be a subset of $Q$, we define the distribution $\mathbb{I}_A$ on $Q$ as $\mathbb{I}_A(x)=\frac{1}{|A|}$ if $x\in A$ and $\mathbb{I}_A(x)=0$ otherwise.
\end{mydef}

\begin{mylem}
\label{lemnorm}
Let $X$ be a random variable on $Q$, and let $A$ be a subset of $Q$. Suppose that there exist $\delta>0$ and an element $a\in A$ such that $|\emph{P}_X(x)-\emph{P}_X(a)|<\delta$ for all $x\in A$ and $\emph{P}_X(x)<\delta$ for all $x\notin A$. Then $||\emph{P}_X-\mathbb{I}_A||_{\infty}<|Q|\delta$.
\end{mylem}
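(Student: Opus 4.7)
\medskip
\noindent\textbf{Proof plan for Lemma \ref{lemnorm}.} The bound $\|\text{P}_X - \mathbb{I}_A\|_\infty < |Q|\delta$ must hold at every $x \in Q$, so I would split the analysis into the two cases $x \notin A$ and $x \in A$. The case $x \notin A$ is immediate: $\mathbb{I}_A(x) = 0$, and the hypothesis gives $|\text{P}_X(x) - \mathbb{I}_A(x)| = \text{P}_X(x) < \delta \leq |Q|\delta$. All the work is in the case $x \in A$, and the strategy there is a two-step estimate: first locate $\text{P}_X(a)$ near $1/|A|$ using the normalization $\sum_x \text{P}_X(x) = 1$, then transfer this estimate to other $x \in A$ by the triangle inequality.

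For the first step I would write
\begin{equation*}
1 \;=\; \text{P}_X(a) \;+\; \sum_{x\in A\setminus\{a\}}\text{P}_X(x) \;+\; \sum_{x\notin A}\text{P}_X(x),
\end{equation*}
and plug in the bounds from the hypothesis term by term. Each of the $|A|-1$ middle terms lies in $(\text{P}_X(a)-\delta,\ \text{P}_X(a)+\delta)$, while each of the $|Q|-|A|$ remaining terms lies in $[0,\delta)$. Summing yields
\begin{equation*}
|A|\,\text{P}_X(a) - (|A|-1)\delta \;<\; 1 \;<\; |A|\,\text{P}_X(a) + (|A|-1)\delta + (|Q|-|A|)\delta,
\end{equation*}
and rearranging gives $-\,(|Q|-1)\delta/|A| \;<\; \text{P}_X(a) - 1/|A| \;<\; (|A|-1)\delta/|A|$, so $|\text{P}_X(a) - 1/|A|| < (|Q|-1)\delta/|A|$.

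For the second step, for any $x \in A$ I would apply the triangle inequality
\begin{equation*}
\bigl|\text{P}_X(x) - \tfrac{1}{|A|}\bigr| \;\leq\; |\text{P}_X(x) - \text{P}_X(a)| + \bigl|\text{P}_X(a) - \tfrac{1}{|A|}\bigr| \;<\; \delta + \frac{(|Q|-1)\delta}{|A|},
\end{equation*}
and check that the right-hand side is $\leq |Q|\delta$, which reduces to $(|A|-1)(|Q|-1) \geq 0$. The only delicate point is preserving \emph{strict} inequality: for $|A|\geq 2$ the above bound is already strictly less than $|Q|\delta$ because $(|A|-1)(|Q|-1) > 0$; for the degenerate case $|A|=1$ the single element of $A$ is $a$ itself, and the step-one estimate $|\text{P}_X(a)-1| < (|Q|-1)\delta < |Q|\delta$ is directly strict. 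So this corner case is the only place that requires a moment's care, and it follows with no extra work.
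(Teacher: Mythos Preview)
Your proof is correct and follows essentially the same approach as the paper's: use the normalization $\sum_x \text{P}_X(x)=1$ to pin $\text{P}_X(a)$ within $(|Q|-1)\delta/|A|$ of $1/|A|$, then apply the triangle inequality to reach every $x\in A$. The paper streamlines the arithmetic slightly by relaxing $(|Q|-1)\delta/|A|$ to $(|Q|-1)\delta$ before the triangle step, so that $\delta+(|Q|-1)\delta=|Q|\delta$ falls out immediately without your $(|A|-1)(|Q|-1)\geq 0$ check or the separate $|A|=1$ case; your extra care about strictness is harmless but unnecessary, since the strict inequality from $|\text{P}_X(x)-\text{P}_X(a)|<\delta$ already propagates through.
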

\begin{proof}
We have:
\begin{align*}
\big|1-|A|P_X(a)\big| &= \Big|\Big(\sum_{x\in Q}\textrm{P}_X(x)\Big)-|A|\textrm{P}_X(a)\Big| =\Big|\sum_{x\in A}\big(\textrm{P}_X(x)-\textrm{P}_X(a)\big)+\sum_{x\in A^c}\textrm{P}_X(x)\Big|\\
&\leq \sum_{x\in A}\big|\textrm{P}_X(x)-\textrm{P}_X(a)\big|+\sum_{x\in A^c}\textrm{P}_X(x)<(|Q|-1)\delta.
\end{align*}
Therefore, $\big|\textrm{P}_X(a)-\frac{1}{|A|}\big|<\frac{|Q|-1}{|A|}\delta\leq (|Q|-1)\delta$. Let $x\in A$, then $$\Big|\textrm{P}_X(x)-\frac{1}{|A|}\Big| \leq \big|\textrm{P}_X(x)-\textrm{P}_X(a)\big|+ \Big|\textrm{P}_X(a)-\frac{1}{|A|}\Big|<|Q|\delta.$$
On the other hand, if $x\notin A$ we have $\textrm{P}_X(x)<\delta\leq |Q|\delta$. Thus, $||\emph{P}_X-\mathbb{I}_A||_{\infty}<|Q|\delta$.
\end{proof}

\begin{mydef}
Let $Q$ and $\mathcal{Y}$ be two arbitrary sets. Let $\mathcal{H}$ be a set of subsets of $Q$. Let $(X,Y)$ be a pair of random variables in $Q\times\mathcal{Y}$. We define:
$$\mathcal{A}_{\mathcal{H},\delta}(X,Y)=\Big\{y\in\mathcal{Y}:\;\exists H_y\in\mathcal{H},\; ||\emph{P}_{X|Y=y}-\mathbb{I}_{H_y}||_{\infty}<\delta\Big\},$$
$$\mathcal{P}_{\mathcal{H},\delta}(X;Y)=\emph{P}_Y\big(\mathcal{A}_{\mathcal{H},\delta}(X,Y)\big).$$
\end{mydef}

If $\mathcal{P}_{\mathcal{H},\delta}(X;Y)>1-\delta$ for a small enough $\delta$, then $Y$ is ``almost equivalent" to $\textrm{Proj}_{\mathcal{H}}(X)$. The next lemma shows that if $I(P^-)$ is close to $I(P)$, then the output $Y$ of $P$ is ``almost equivalent" to $\textrm{Proj}_{\mathcal{H}}(X)$, where $X$ is the input to the channel $P$ and $\mathcal{H}$ is a certain balanced partition of $Q$.

\begin{mylem}
\label{lemBalPart}
Let $Q$ and $\mathcal{Y}$ be two arbitrary sets with $|Q|\geq 2$. Let $(X,Y)$ be a pair of random variables in $Q\times\mathcal{Y}$ such that $X$ is uniform. Let $\mathcal{H}$ be a set of disjoint subsets of $Q$ that have the same size. If $\mathcal{P}_{\mathcal{H},\frac{1}{|Q|^2}}(X;Y)>1-\frac{1}{|Q|^2}$, then $\mathcal{H}$ is a balanced partition of $Q$.
\end{mylem}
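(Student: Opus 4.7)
The statement reduces to showing that $\bigcup_{H\in\mathcal{H}} H = Q$, since the lemma already assumes that the elements of $\mathcal{H}$ are disjoint and of common size; this is the only missing ingredient for $\mathcal{H}$ to be a balanced partition. My plan is therefore to argue by contradiction: assume some $x_0 \in Q$ lies in no element of $\mathcal{H}$, and then exploit the uniformity of $X$ together with the hypothesis $\mathcal{P}_{\mathcal{H},\delta}(X;Y) > 1-\delta$ (where $\delta = 1/|Q|^2$) to derive a numerical impossibility.

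Concretely, I would unfold $P_X(x_0) = 1/|Q|$ via
\begin{equation*}
\frac{1}{|Q|} \;=\; P_X(x_0) \;=\; \sum_{y\in\mathcal{Y}} P_Y(y)\, P_{X|Y=y}(x_0),
\end{equation*}
and split the sum according to whether $y \in \mathcal{A}_{\mathcal{H},\delta}(X,Y)$ or not. On $\mathcal{A}_{\mathcal{H},\delta}(X,Y)$, the defining inequality $\|P_{X|Y=y} - \mathbb{I}_{H_y}\|_\infty < \delta$ together with $x_0 \notin H_y$ (which holds because $x_0 \notin \bigcup\mathcal{H}$) forces $P_{X|Y=y}(x_0) < \delta$. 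On the complement, I bound $P_{X|Y=y}(x_0) \le 1$ and use $P_Y(\mathcal{A}_{\mathcal{H},\delta}(X,Y)^c) < \delta$. Combining these gives
\begin{equation*}
\frac{1}{|Q|} \;<\; \delta + \delta \;=\; \frac{2}{|Q|^2},
\end{equation*}
which is equivalent to $|Q| < 2$, contradicting the hypothesis $|Q| \geq 2$.

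Since no $x_0$ can be omitted, $\bigcup_{H\in\mathcal{H}} H = Q$, and combined with the hypotheses that $\mathcal{H}$ consists of disjoint subsets of equal size, $\mathcal{H}$ is a balanced partition of $Q$. I do not expect a serious obstacle here: the proof is essentially bookkeeping based on the total probability formula, and the only subtle point is making sure the bounds on the two pieces of the sum are sharp enough to yield a strict contradiction, which is precisely why the threshold was chosen to be $1/|Q|^2$ (anything coarser, like $1/|Q|$, would fail at $|Q|=2$).
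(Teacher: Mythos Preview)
Your proposal is correct and follows essentially the same argument as the paper: assume some $x_0$ is not covered by $\mathcal{H}$, split $\mathrm{P}_X(x_0)=\sum_y \mathrm{P}_Y(y)\mathrm{P}_{X|Y=y}(x_0)$ over $\mathcal{A}_{\mathcal{H},1/|Q|^2}(X,Y)$ and its complement, bound each piece by $1/|Q|^2$, and obtain the contradiction $1/|Q|<2/|Q|^2\leq 1/|Q|$. The paper's proof is line-for-line the same idea.
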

\begin{proof}
We only need to show that $\mathcal{H}$ covers $Q$. Suppose that there exists $x\in Q$ such that there is no $H$ in $\mathcal{H}$ such that $x\in H$. Then for all $y\in \mathcal{A}_{\mathcal{H},\frac{1}{|Q|^2}}(X,Y)$, $\textrm{P}_{X|Y}(x|y)<\frac{1}{|Q|^2}$. We have:
\begin{align*}
\textrm{P}_X(x)=&\sum_{y\in \mathcal{A}_{\mathcal{H},\frac{1}{|Q|^2}}(X,Y)}  \textrm{P}_{X|Y}(x|y)\textrm{P}_Y(y) + \sum_{y\in \mathcal{A}_{\mathcal{H},\frac{1}{|Q|^2}}(X,Y)^c}  \textrm{P}_{X|Y}(X|Y)\textrm{P}_Y(Y)\\
<&\; \frac{1}{|Q|^2}\textrm{P}_Y\big(\mathcal{A}_{\mathcal{H},\frac{1}{|Q|^2}}(X,Y)\big)+\textrm{P}_Y\big(\mathcal{A}_{\mathcal{H},\frac{1}{|Q|^2}}(X,Y)^c\big) < \frac{1}{|Q|^2} + \frac{1}{|Q|^2}=\frac{2}{|Q|^2}\leq \frac{1}{|Q|}.
\end{align*}
which is a contradiction since $X$ is uniform in $Q$. Therefore, $\mathcal{H}$ covers $Q$ and so it is a balanced partition of $Q$. 
\end{proof}

\begin{mylem}
Let $Q$ and $\mathcal{Y}$ be two arbitrary sets with $|Q|\geq 2$, and let $\mathcal{H}$ and $\mathcal{H}'$ be two balanced partitions of $Q$. Let $(X,Y)$ be a pair of random variables in $Q\times \mathcal{Y}$ such that $X$ is uniform. If $\mathcal{P}_{\mathcal{H},\frac{1}{|Q|^2}}(X;Y)>1-\frac{1}{2|Q|^2}$ and $\mathcal{P}_{\mathcal{H}',\frac{1}{|Q|^2}}(X;Y)>1-\frac{1}{2|Q|^2}$,
then $\mathcal{H}=\mathcal{H}'$.
\label{lemPart}
\end{mylem}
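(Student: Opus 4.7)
The plan is to show that any $y$ in the intersection $\mathcal{A}_{\mathcal{H},1/|Q|^2}(X,Y)\cap\mathcal{A}_{\mathcal{H}',1/|Q|^2}(X,Y)$ forces its witnesses $H_y\in\mathcal{H}$ and $H'_y\in\mathcal{H}'$ to coincide as subsets of $Q$, and then to argue that every $H\in\mathcal{H}$ arises as such an $H_y$ for some $y$, which gives $\mathcal{H}\subseteq\mathcal{H}'$ and, by symmetry, $\mathcal{H}=\mathcal{H}'$.

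For the first step, I would take $y$ in both sets and pick witnesses $H_y$ and $H'_y$. The triangle inequality yields $\|\mathbb{I}_{H_y}-\mathbb{I}_{H'_y}\|_\infty<2/|Q|^2$. If some $x\in H_y\setminus H'_y$ existed, evaluating at $x$ would give $1/|H_y|<2/|Q|^2$, i.e.\ $|H_y|>|Q|^2/2\geq |Q|$, contradicting $|H_y|\leq |Q|$ (this is precisely where the hypothesis $|Q|\geq 2$ enters). The symmetric case is identical, so $H_y=H'_y$. Applied within a single partition the same argument also shows that $H_y$ is uniquely determined by $y$, so the sets $B_H:=\{y\in\mathcal{A}_{\mathcal{H},1/|Q|^2}(X,Y):H_y=H\}$ form a measurable partition of $\mathcal{A}_{\mathcal{H},1/|Q|^2}(X,Y)$ indexed by $H\in\mathcal{H}$.

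The second step is to lower bound $\textrm{P}_Y(B_H)$ for each fixed $H\in\mathcal{H}$. Since $X$ is uniform, $\textrm{P}_X(H)=|H|/|Q|$. Computing this by conditioning on $Y$ and splitting $\mathcal{Y}$ into $B_H$, the union of the $B_{H''}$ with $H''\in\mathcal{H}\setminus\{H\}$, and $\mathcal{A}_{\mathcal{H},1/|Q|^2}(X,Y)^c$, I would note that for $y\in B_{H''}$ with $H''\neq H$, disjointness forces $\mathbb{I}_{H''}(x)=0$ for every $x\in H$, so $\textrm{P}_{X|Y}(x|y)<1/|Q|^2$ and hence $\textrm{P}_{X|Y}(H|y)<|H|/|Q|^2$. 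Combining this with the trivial bound $\textrm{P}_{X|Y}(H|y)\leq 1$ on the remaining two pieces and $\textrm{P}_Y(\mathcal{A}_{\mathcal{H},1/|Q|^2}(X,Y)^c)<1/(2|Q|^2)$ yields the strict inequality
\[
|H|/|Q| \;<\; \textrm{P}_Y(B_H) \;+\; |H|/|Q|^2 \;+\; 1/(2|Q|^2),
\]
which rearranges to $\textrm{P}_Y(B_H) > |H|(|Q|-1)/|Q|^2 - 1/(2|Q|^2) \geq 1/(2|Q|^2)$, using $|H|\geq 1$ and $|Q|\geq 2$.

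To conclude, the analogous hypothesis on $\mathcal{H}'$ gives $\textrm{P}_Y(\mathcal{A}_{\mathcal{H}',1/|Q|^2}(X,Y)^c)<1/(2|Q|^2)$, so $B_H\cap\mathcal{A}_{\mathcal{H}',1/|Q|^2}(X,Y)$ has positive measure and is therefore nonempty. Any $y$ in it satisfies $H=H_y=H'_y\in\mathcal{H}'$, whence $\mathcal{H}\subseteq\mathcal{H}'$; swapping the roles of $\mathcal{H}$ and $\mathcal{H}'$ yields equality. The only real delicacy is carrying strict inequalities through the chain: the borderline case $|Q|=2$, $|H|=1$ saturates the weak version of the lower bound on $\textrm{P}_Y(B_H)$, and it is precisely the strictness of the hypothesis $\mathcal{P}_{\cdot,1/|Q|^2}(X;Y)>1-1/(2|Q|^2)$ that keeps the final chain strict.
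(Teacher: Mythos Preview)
Your proof is correct. The first step---showing that the witnesses $H_y$ and $H'_y$ coincide on the intersection of the two $\mathcal{A}$-sets via the triangle inequality---is exactly what the paper does. From there the routes diverge: the paper sets $\mathcal{H}'':=\mathcal{H}\cap\mathcal{H}'$, observes that the intersection argument gives $\mathcal{P}_{\mathcal{H}'',1/|Q|^2}(X;Y)>1-1/|Q|^2$, and then invokes the preceding Lemma~\ref{lemBalPart} (applied to the family $\mathcal{H}''$ of disjoint equal-size sets) to conclude that $\mathcal{H}''$ is itself a balanced partition of $Q$, which forces $\mathcal{H}=\mathcal{H}''=\mathcal{H}'$. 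You instead work directly: for each block $H\in\mathcal{H}$ you lower-bound $\textrm{P}_Y(B_H)$ by a quantity strictly exceeding $1/(2|Q|^2)$, which guarantees that $B_H$ meets $\mathcal{A}_{\mathcal{H}',1/|Q|^2}(X,Y)$ and hence that $H\in\mathcal{H}'$. Your argument is self-contained and bypasses Lemma~\ref{lemBalPart}; the paper's is shorter precisely because it reuses that lemma, whose proof is essentially the single-point analogue of your block-wise estimate.
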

\begin{proof}
Define $\mathcal{H}''=\mathcal{H}\cap \mathcal{H}'$. Let $y\in \mathcal{A}_{\mathcal{H},\frac{1}{|Q|^2}}(X,Y)\cap \mathcal{A}_{\mathcal{H}',\frac{1}{|Q|^2}}(X,Y)$, choose $H\in\mathcal{H}$ and $H'\in\mathcal{H}'$ such that $||\emph{P}_{X|Y=y}-\mathbb{I}_{H}||_{\infty}<\frac{1}{|Q|^2}$ and $||\emph{P}_{X|Y=y}-\mathbb{I}_{H'}||_{\infty}<\frac{1}{|Q|^2}$, then $$||\mathbb{I}_{H'}-\mathbb{I}_{H}||_{\infty}<\frac{2}{|Q|^2}\leq \frac{1}{|Q|}$$ which implies that $H=H'$ and $y\in \mathcal{A}_{\mathcal{H}'',\frac{1}{|Q|^2}}(X,Y)$. Therefore, 
\begin{align*}
\mathcal{P}_{\mathcal{H}'',\frac{1}{|Q|^2}}(X;Y) &\geq \textrm{P}_Y\big( \mathcal{A}_{\mathcal{H},\frac{1}{|Q|^2}}(X,Y)\cap \mathcal{A}_{\mathcal{H}',\frac{1}{|Q|^2}}(X,Y)\big)>1-\frac{1}{|Q|^2}.
\end{align*}
From lemma \ref{lemBalPart} we conclude that $\mathcal{H}''$ is a balanced partition. Therefore, $\mathcal{H}=\mathcal{H}'=\mathcal{H}''$.
\end{proof}

\begin{mylem}
Let $(Q,\ast)$ be a quasigroup with $|Q|\geq 2$, and let $\mathcal{Y}$ be an arbitrary set. For any $\delta>0$, there exists $\epsilon_1(\delta)>0$ depending only on $|Q|$ such that for any two pairs of random variables $(X_1,Y_1)$ and $(X_2,Y_2)$ that are independent and identically distributed in $Q\times \mathcal{Y}$ with $X_1$ and $X_2$ being uniform in $Q$, then $H(X_1\ast X_2|Y_1,Y_2)<H(X_1|Y_1)+\epsilon_1(\delta)$ implies the existence of a balanced partition $\mathcal{H}$ of $Q$ such that $\mathcal{P}_{\mathcal{H},\delta}(X_1;Y_1)>1-\delta$
Moreover, $|H\ast H'|=|H|=|H'|$ for every $H,H'\in\mathcal{H}$.
\label{MainLem}
\end{mylem}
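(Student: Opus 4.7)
The plan is to move from the average hypothesis to a pointwise statement by two successive applications of Markov's inequality, prove a one-shot inverse theorem for the quasigroup sumset entropy on typical pairs, and finally assemble the resulting sets into a balanced partition.

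First, I observe that quasigroup invertibility gives the chain-rule identity $H(X_1 \ast X_2 | X_2, Y_1, Y_2) = H(X_1 | Y_1)$, from which one deduces the pointwise inequality $H(X_1 \ast X_2 | Y_1=y_1, Y_2=y_2) \geq H(X_1 | Y_1=y_1)$ for every $(y_1, y_2)$, and by i.i.d.\ symmetry the analogous pointwise lower bound by $H(X_2 | Y_2=y_2)$. Applying Markov's inequality to this non-negative excess yields a set $\mathcal{T} \subset \mathcal{Y} \times \mathcal{Y}$ of $\textrm{P}_{Y_1, Y_2}$-measure greater than $1 - \eta$ on which both gaps are below $\eta$, with $\eta = \eta(\epsilon_1) \to 0$ as $\epsilon_1 \to 0$. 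A further Markov step on the fluctuation of $H(\textrm{P}_{X_1|Y_1 = y_1})$ around its average picks out a set of $y_1$'s of measure greater than $1 - \eta$ on which $H(\textrm{P}_{X_1|Y_1 = y_1})$ is within $\eta$ of $H(X_1|Y_1)$. Intersecting gives a set $\mathcal{T}_1$ of typical $y_1$'s of measure greater than $1 - O(\eta)$, each with a large set of compatible $y_2$'s.

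The heart of the argument is a one-shot inverse theorem for a single typical pair $(y_1, y_2)$. Writing $p = \textrm{P}_{X_1|Y_1 = y_1}$ and $q = \textrm{P}_{X_2|Y_2 = y_2}$, the identity $H(p \ast q) = H(p) + I(X_1 \ast X_2 ; X_2)$ (a direct consequence of $H(X_1 \ast X_2 | X_2) = H(X_1)$) and its symmetric counterpart imply that in the exact case $\eta = 0$, $X_1 \ast X_2$ is independent of both $X_1$ and $X_2$. Setting $A = \mathrm{supp}(p)$ and $B = \mathrm{supp}(q)$, independence from $X_2$ forces the set $A \ast x_2$ to coincide for every $x_2 \in B$, so $|A \ast B| = |A|$; symmetrically $|A \ast B| = |B|$. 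The resulting decompositions $z = a \ast b$ with $a \in A, b \in B$ then provide $|A|$ representations of each $z \in A \ast B$, and along each one $p(a) = r(z) = q(b)$, where $r$ is the distribution of $X_1 \ast X_2$; this pins $p$ uniform on $A$ and $q$ uniform on $B$. For the approximate case the same conclusions should hold with an error tending to zero, via Lemma \ref{lemnorm} to upgrade a near-two-level statement about the values of $p$ into $||p - \mathbb{I}_A||_\infty < \eta'$ for some $\eta'(\eta) \to 0$, and using that $|A|, |B|, |A \ast B|$ are integers so that near-equalities become exact. The quantitative form of this inverse theorem in an arbitrary quasigroup, where no translation structure is available, is the main obstacle of the proof.

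Finally, for each $y_1 \in \mathcal{T}_1$ the set $A_{y_1}$ produced above is uniquely determined by $p_{y_1}$ (as the support of its single non-negligible value), independently of the auxiliary $y_2$. Let $\mathcal{H}$ be the collection of distinct sets $A_{y_1}$ arising this way. Choosing $\epsilon_1(\delta)$ so that $\eta' < \min(\delta, 1/|Q|^2)/2$ and $\textrm{P}_{Y_1}(\mathcal{T}_1) > 1 - \min(\delta, 1/|Q|^2)/2$, Lemma \ref{lemPart} forces any two elements of $\mathcal{H}$ to be either equal or disjoint, and Lemma \ref{lemBalPart} then upgrades $\mathcal{H}$ to a balanced partition of $Q$. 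The bound $\mathcal{P}_{\mathcal{H}, \delta}(X_1 ; Y_1) > 1 - \delta$ follows directly from the measure of $\mathcal{T}_1$, and the ``moreover'' clause $|H \ast H'| = |H| = |H'|$ for all $H, H' \in \mathcal{H}$ reads off Step~2 applied to typical $y_1, y_2$ realizing $H$ and $H'$ respectively.
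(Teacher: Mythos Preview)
Your outline captures the right architecture --- average-to-pointwise via Markov, a one-shot inverse statement on typical pairs, then assembly into a partition --- and your treatment of the exact case ($\eta=0$) is correct and elegant. But there are two genuine gaps.

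\textbf{The quantitative inverse theorem is not proved.} You correctly flag this as ``the main obstacle,'' but it is precisely the content of the lemma: without it you have only recovered the trivial direction. Going from $I(X_1\ast X_2;X_2)<\eta$ to ``$p$ is $\eta'$-close to $\mathbb{I}_A$ with $|A\ast B|=|A|=|B|$'' via Pinsker-type reasoning runs into circularity, because the ``most $x_2$'' you get is with respect to the unknown weight $q$, and you need uniformity of $q$ to upgrade this to a statement about supports. The paper avoids this by working at the level of distributions rather than entropies: strict concavity of $H$ applied to the mixture representation $\textrm{P}_{X_1\ast X_2|y_1,y_2}=\sum_{x_2}q_{y_2}(x_2)\,p_{y_1,x_2}$ gives, on a typical set $\mathcal{C}$, that $\|p_{y_1,x_2}-p_{y_1,x_2'}\|_\infty<\delta'$ whenever $q_{y_2}(x_2),q_{y_2}(x_2')\geq\delta'$. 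From there one \emph{iterates} the permutations $\pi_{b,b'}(x)=(x\ast b)/^{\ast}b'$ (with $b,b'$ in the $\delta'$-support $B_{y_2}$): starting from the mode $a_{y_1}$, every point reachable in $\leq |Q|-1$ such steps has $p_{y_1}$-value within $(|Q|-1)\delta'$ of the mode, hence lies in $A_{y_1}$, and a counting argument forces the reachable set to equal $A_{y_1}$ and gives $|A_{y_1}\ast B_{y_2}|=|A_{y_1}|=|B_{y_2}|$. Lemma~\ref{lemnorm} then converts near-constancy on $A_{y_1}$ into $\|p_{y_1}-\mathbb{I}_{A_{y_1}}\|_\infty<|Q|^2\delta'$. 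This iterated-permutation step is the missing idea.

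\textbf{Lemma~\ref{lemPart} does not give disjointness.} Lemma~\ref{lemPart} compares two full balanced partitions satisfying the $\mathcal{P}$-condition; it says nothing about individual sets $A_{y_1}$ being disjoint or equal. The correct tool is Lemma~\ref{lemEqPhi}: once you know $|A_{y_1}|=|A_{y_1'}|=|B_{y_2}|=|A_{y_1}\ast B_{y_2}|=|A_{y_1'}\ast B_{y_2}|$ for a common $y_2$ (which your Step~2 supplies on typical pairs, together with a Fubini argument to find such a $y_2$), Lemma~\ref{lemEqPhi} with $C=B_{y_2}$ yields $A_{y_1}=A_{y_1'}$ or $A_{y_1}\cap A_{y_1'}=\emptyset$. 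Only then does Lemma~\ref{lemBalPart} apply. A minor related point: your ``further Markov step on the fluctuation of $H(p_{y_1})$'' is not a Markov step (the deviation is two-sided); it is also unnecessary, since equal sizes come for free from $|A_{y_1}|=|B_{y_2}|$ on typical pairs.
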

\begin{proof}
Choose $\delta>0$, and let $\delta'=\min\Big\{\frac{\delta}{|Q|^2},\frac{1}{|Q|^4}\Big\}$. Define:
\begin{itemize}
\item $p_{y_1}(x_1):=\textrm{P}_{X_1|Y_1}(x_1|y_1)$ and $p_{y_1,x_2}(x):=p_{y_1}(x/^{\ast}x_2)$.
\item $q_{y_2}(x_2):=\textrm{P}_{X_2|Y_2}(x_2|y_2)$ and $q_{y_2,x_1}(x):=q_{y_2}(x_1\backslash_{\ast} x)$.
\end{itemize}
Note that $q_{y_2}(x_2)=p_{y_2}(x_2)$ since $(X_1,Y_1)$ and $(X_2,Y_2)$ are identically distributed. Nevertheless, we choose to use $q_{y_2}(x_2)$ to denote $\textrm{P}_{X_2|Y_2}(x_2|y_2)$ for the sake of notational consistency.

We have:
\begin{align}
\label{Eqeq1} \textrm{P}_{X_1\ast X_2|Y_1,Y_2}(x|y_1,y_2)&=\sum_{x_1\in Q}p_{y_1}(x_1)q_{y_2,x_1}(x)\\
\label{Eqeq2}&=\sum_{x_2\in Q}q_{y_2}(x_2)p_{y_1,x_2}(x).
\end{align}
Due to the strict concavity of the entropy function, there exists $\epsilon'(\delta')>0$ such that:
\begin{itemize}
\item If $\exists x_1,x_1'\in Q$ such that $p_{y_1}(x_1)\geq \delta'$, $p_{y_1}(x_1')\geq \delta'$ and $||q_{y_2,x_1}-q_{y_2,x_1'}||_{\infty}\geq \delta'$ then 
\begin{equation}
\label{Eqeq3} H(X_1\ast X_2 |Y_1=y_1,Y_2=y_2) \geq H(X_2|Y_2=y_2)+\epsilon'(\delta'),
\end{equation}
(see \eqref{Eqeq1}).
\item If $\exists x_2,x_2'\in Q$ such that $q_{y_2}(x_2)\geq \delta'$, $q_{y_2}(x_2')\geq \delta'$ and $||p_{y_1,x_2}-p_{y_1,x_2'}||_{\infty}\geq \delta'$ then
\begin{equation}
\label{Eqeq4} H(X_1\ast X_2 |Y_1=y_1,Y_2=y_2)\geq H(X_1|Y_1=y_1)+\epsilon'(\delta'),
\end{equation}
(see \eqref{Eqeq2}).
\end{itemize}

Define:
\begin{align*}
&\mathcal{C}_{1}=\bigg\{(y_1,y_2) \in \mathcal{Y}\times \mathcal{Y}:\; \forall x_1,x_1'\in Q,(p_{y_1}(x_1)\geq \delta',\; p_{y_1}(x_1')\geq \delta')\Rightarrow ||q_{y_2,x_1}-q_{y_2,x_1'}||_{\infty}< \delta' \bigg\},
\end{align*}
\begin{align*}
&\mathcal{C}_{2}=\bigg\{(y_1,y_2) \in \mathcal{Y}\times \mathcal{Y}:\; \forall x_2,x_2'\in Q,(q_{y_2}(x_2)\geq \delta',\; q_{y_2}(x_2')\geq \delta')\Rightarrow ||p_{y_1,x_2}-p_{y_1,x_2'}||_{\infty}< \delta' \bigg\}.
\end{align*}
From \eqref{Eqeq3} we have:
\begin{align*}
H( X_1\ast X_2|Y_1,Y_2) &\geq\; H(X_2|Y_2)+\epsilon'(\delta')\textrm{P}_{Y_1,Y_2}(\mathcal{C}_1^c)= H(X_1|Y_1)+\epsilon'(\delta')\textrm{P}_{Y_1,Y_2}(\mathcal{C}_1^c).
\end{align*}
Similarly, from \eqref{Eqeq4} we have $$H(X_1 \ast X_2|Y_1,Y_2)\geq H(X_1|Y_1)+\epsilon'(\delta')\textrm{P}_{Y_1,Y_2}(\mathcal{C}_2^c).$$

Let $\epsilon_1(\delta)=\epsilon'(\delta')\frac{\delta'^2}{2}$, and suppose that $$H(X_1 \ast X_2|Y_1,Y_2)<H(X_1|Y_1)+\epsilon_1(\delta),$$ then we must have $\textrm{P}_{Y_1,Y_2}(\mathcal{C}_1^c)<\frac{\delta'^2}{2}$ and $\textrm{P}_{Y_1,Y_2}(\mathcal{C}_2^c)<\frac{\delta'^2}{2}$, which imply that $\textrm{P}_{Y_1,Y_2}(\mathcal{C})>1-\delta'^2$, where $\mathcal{C}=\mathcal{C}_1\cap\mathcal{C}_2$.

Now for each $a,a',x\in Q$, define:
\begin{itemize}
\item $\pi_{a,a'}(x):=(x\ast a)/^{\ast}a'$, and $\gamma_{a,a'}(x):=a'\backslash_{\ast}(a\ast x)$.
\end{itemize}
And for each $(y_1,y_2)\in \mathcal{Y}\times\mathcal{Y}$, define:
\begin{itemize}
\item $A_{y_1}:=\{x_1\in Q, p_{y_1}(x_1)\geq \delta'\}$.
\item $B_{y_2}:=\{x_2\in Q, q_{y_2}(x_2)\geq \delta'\}$.
\item $\displaystyle a_{y_1}:=\operatorname*{arg\,max}_{x_1}\; p_{y_1}(x_1)$. $\displaystyle b_{y_2}:=\operatorname*{arg\,max}_{x_2}\; q_{y_2}(x_2)$.
\item $H_{y_1,y_2}=\Big\{x_1\in Q:\; \exists b_1,b_1',b_2,b_2',\ldots,b_n,b_n'\in B_{y_2},\; x_1=(\pi_{b_n,b_n'}\circ \ldots \circ \pi_{b_1,b_1'}) (a_{y_1}) \Big\}$.
\item $K_{y_1,y_2}=\Big\{x_2\in Q:\; \exists a_1,a_1',a_2,a_2',\ldots,a_n,a_n'\in A_{y_1},\; x_2= (\gamma_{a_n,a_n'}\circ \ldots \circ \gamma_{a_1,a_1'}) (b_{y_2})\Big\}$.
\end{itemize}
Suppose that $(y_1,y_2)\in \mathcal{C}$. Let $x_1\in H_{y_1,y_2}$, and let $n$ be minimal such that there exists $b_1,b_1',b_2,b_2',\ldots,b_n,b_n'\in B_{y_2}$ satisfying $x_1=(\pi_{b_n,b_n'}\circ \ldots \circ \pi_{b_1,b_1'}) (a_{y_1})$. Define $a_1:=a_{y_1}$, and for $1\leq i\leq n$ define $a_{i+1}=\pi_{b_i,b_i'} (a_i)$, so that $a_{n+1}=x_1$. We must have $a_i\neq a_j$ for $i\neq j$ since $n$ was chosen to be minimal. Therefore, $n+1\leq |Q|$.

For any $1\leq i\leq n$, we have $a_{i+1}=(a_{i}\ast b_i)/^{\ast} b_i'$. Let $x=a_{i}\ast b_i$, then $a_{i+1}=x/^{\ast}b_i'$ and $a_i=x/^{\ast}b_i$. We have $(y_1,y_2)\in \mathcal{C}$, $q_{y_2}(b_i)\geq \delta'$ and $q_{y_2}(b_i')\geq \delta'$, so we must have $||p_{y_1,b_i'}-p_{y_1,b_i}||_{\infty}< \delta'$, and $|p_{y_1,b_i'}(x)-p_{y_1,b_i}(x)|<\delta'$, which implies that $|p_{y_1}(a_{i+1})-p_{y_1}(a_i)|<\delta'$. Therefore:
\begin{equation}
\label{EqEq5}
\begin{split}
|p_{y_1}(x_1)-p_{y_1}(a_{y_1})|&=|p_{y_1}(a_{n+1})-p_{y_1}(a_1)|\leq \sum_{i=1}^n |p_{y_1}(a_{i+1})-p_{y_1}(a_i)|\\
&<n\delta'\leq (|Q|-1)\delta'\leq \frac{|Q|-1}{|Q|^4}<\frac{|Q|-1}{|Q|^2}.
\end{split}
\end{equation}
Since $p_{y_1}(a_{y_1})\geq \frac{1}{|Q|}$, we have $p_{y_1}(x_1)>\frac{1}{|Q|^2} > \delta'$ for every $x_1\in H_{y_1,y_2}$. Therefore, $H_{y_1,y_2}\subset A_{y_1}\;\forall (y_1,y_2)\in \mathcal{C}$. A similar argument yields $K_{y_1,y_2}\subset B_{y_2}\;\forall (y_1,y_2)\in \mathcal{C}$.

Fix two elements $b,b'\in B_{y_2}$. We have $(x_1\ast b)/^{\ast}b'\in H_{y_1,y_2}$ and so $x_1\ast b \in H_{y_1,y_2}\ast b'$ for any $x_1\in H_{y_1,y_2}$. Therefore, $H_{y_1,y_2}\ast b\subset H_{y_1,y_2}\ast b'$. But this is true for any two elements $b,b'\in B_{y_2}$, so $H_{y_1,y_2}\ast b=H_{y_1,y_2}\ast b'$ $\forall b,b'\in B_{y_2}$, and $|H_{y_1,y_2}\ast B_{y_2}|=|H_{y_1,y_2}|$. Similarly, we have $|A_{y_1}\ast K_{y_1,y_2}|=|K_{y_1,y_2}|$. If we also take into consideration the fact that $H_{y_1,y_2}\subset A_{y_1}$ and $K_{y_1,y_2}\subset B_{y_2}$ we conclude:
$$|B_{y_2}|\leq |H_{y_1,y_2}\ast B_{y_2}|=|H_{y_1,y_2}|\leq |A_{y_1}|,$$
$$|A_{y_1}|\leq |A_{y_1}\ast K_{y_1,y_2}|=|K_{y_1,y_2}|\leq |B_{y_2}|.$$
Therefore, $|A_{y_1}|=|H_{y_1,y_2}|=|B_{y_2}|=|K_{y_1,y_2}|$. We conclude that $H_{y_1,y_2}=A_{y_1}$ and $K_{y_1,y_2}=B_{y_2}$. Moreover, we have $|A_{y_1}\ast B_{y_2}|=|A_{y_1}|=|B_{y_2}|$.

Recall that $|p_{y_1}(x_1)-p_{y_1}(a_{y_1})|<(|Q|-1)\delta'$ for all $x_1\in A_{y_1}$ (see \eqref{EqEq5}) and $p_{y_1}(x_1)<\delta'\leq (|Q|-1)\delta'$ for $x_1\notin A_{y_1}$. It is easy to deduce that $$||p_{y_1}-\mathbb{I}_{A_{y_1}}||_{\infty}<|Q|(|Q|-1)\delta'<|Q|^2\delta'.$$ Therefore, $||p_{y_1}-\mathbb{I}_{A_{y_1}}||_{\infty}<\delta$ and $||p_{y_1}-\mathbb{I}_{A_{y_1}}||_{\infty}<\frac{1}{|Q|^2}$. Similarly, $||q_{y_2}-\mathbb{I}_{B_{y_2}}||_{\infty}<\delta$ and $||q_{y_2}-\mathbb{I}_{B_{y_2}}||_{\infty}<\frac{1}{|Q|^2}$.

Now define
$\mathcal{C}_{Y_1}=\Big\{y_1\in\mathcal{Y}:\;\textrm{P}_{Y_2}\big((y_1,Y_2)\in\mathcal{C}\big)>1-\delta'\Big\}$, and for each $y_1\in\mathcal{C}_{Y_1}$, define $$\mathcal{K}_{y_1}=\Big\{y_2\in\mathcal{Y}:\;(y_1,y_2)\in\mathcal{C}\Big\}.$$ Then we have:
$$1-\delta'^2 < \textrm{P}_{Y_1,Y_2}(\mathcal{C})\leq \Big(1-\textrm{P}_{Y_1}(\mathcal{C}_{Y_1})\Big)(1-\delta')+\textrm{P}_{Y_1}(\mathcal{C}_{Y_1}),$$
from which we conclude that $\textrm{P}_{Y_1}(\mathcal{C}_{Y_1})>1-\delta'$. And by definition, we also have $\textrm{P}_{Y_2}(\mathcal{K}_{y_1})>1-\delta'$ for all $y_1\in\mathcal{C}_{Y_1}$. Define $\mathcal{H}_{y_1}=\{B_{y_2}:\;y_2\in\mathcal{K}_{y_1}\}$.

Fix $y_1\in\mathcal{C}_{Y_1}$. Since $|A_{y_1}\ast B|=|A_{y_1}\ast B'|=|A_{y_1}|=|B|=|B'|$ for every $B,B'\in\mathcal{H}_{y_1}$, we conclude that the elements of $\mathcal{H}_{y_1}$ are disjoint and have the same size (lemma \ref{lemEqPhi}). Now since $\textrm{P}_{Y_2}(\mathcal{K}_{y_1})>1-\frac{1}{|Q|^4}$ and since $X_2$ is uniform in $Q$, it is easy to see that $\mathcal{H}_{y_1}$ covers $Q$ and so it is a balanced partition of $Q$ for all $y_1\in\mathcal{C}_{Y_1}$. Moreover, since $\textrm{P}_{Y_2}(\mathcal{K}_{y_1})>1-\frac{1}{|Q|^4}$, we can also conclude that all the balanced partitions $\mathcal{H}_{y_1}$ are the same. Let us denote this common balanced partition by $\mathcal{H}'$.

We have $|A\ast B|=|A|=|B|$ for all $A\in\mathcal{H}$ and all $B\in\mathcal{H}'$, where $\mathcal{H}=\{A_{y_1}:\; y_1\in\mathcal{C}_{Y_1}\}$. By using a similar argument as in the previous paragraph, we can deduce that $\mathcal{H}$ is a balanced partition of $Q$. Moreover, since $(X_1,Y_1)$ and $(X_2,Y_2)$ are identically distributed, we can see that $\mathcal{H}=\mathcal{H}'$. We conclude the existence of a balanced partition $\mathcal{H}$ of $Q$ satisfying $|A\ast B|=|A|=|B|$ for all $A,B\in\mathcal{H}$ and
\begin{align*}
\emph{P}_{Y_1}&\bigg(\Big\{y \in \mathcal{Y}:\; \exists H_y\in\mathcal{H},\; ||\emph{P}_{X_1|Y_1=y}-\mathbb{I}_{H_y}||_{\infty}<\delta\Big\}\bigg)\geq \textrm{P}_{Y_1}(\mathcal{C}_{Y_1})> 1-\delta'> 1-\delta.
\end{align*}
\end{proof}

\begin{mylem}
Let $X_1$ and $X_2$ be two independent random variables in $Q$ such that there exists two sets $A,B\subset Q$ satisfying $||\emph{P}_{X_1}-\mathbb{I}_{A}||_{\infty}<\delta$, $||\emph{P}_{X_2}-\mathbb{I}_{B}||_{\infty}<\delta$ and $|A\ast B|=|A|=|B|$, then $||\emph{P}_{X_1\ast X_2}-\mathbb{I}_{A\ast B}||_{\infty}<2\delta + |Q|\delta^2$.
\label{lemAstProb}
\end{mylem}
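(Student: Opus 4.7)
The plan is to exploit the combinatorial consequence of the hypothesis $|A\ast B|=|A|=|B|=:n$. For any $a\in A$, the set $a\ast B$ has size $|B|=n$ (since $b\mapsto a\ast b$ is injective in a quasigroup) and is contained in $A\ast B$, which also has size $n$. Hence $a\ast B=A\ast B$. Translating: for every $z\in A\ast B$ and every $a\in A$, the unique element $a\backslash_\ast z$ of $Q$ lies in $B$. Equivalently, the bijection $\phi_z:Q\to Q$ given by $\phi_z(a)=a\backslash_\ast z$ sends $A$ bijectively onto $B$ when $z\in A\ast B$, and sends no element of $A$ into $B$ when $z\notin A\ast B$ (for otherwise we would have $z=a\ast\phi_z(a)\in A\ast B$).

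Next I would write $\mathrm{P}_{X_1}=\mathbb{I}_A+\epsilon$ and $\mathrm{P}_{X_2}=\mathbb{I}_B+\eta$ with $\|\epsilon\|_\infty,\|\eta\|_\infty<\delta$, and expand
\[
\mathrm{P}_{X_1\ast X_2}(z)=\sum_{a\in Q}\mathrm{P}_{X_1}(a)\,\mathrm{P}_{X_2}(\phi_z(a))=T_1(z)+T_2(z)+T_3(z)+T_4(z),
\]
where $T_1(z)=\sum_a \mathbb{I}_A(a)\,\mathbb{I}_B(\phi_z(a))$, $T_2(z)=\sum_a \epsilon_a\,\mathbb{I}_B(\phi_z(a))$, $T_3(z)=\sum_a \mathbb{I}_A(a)\,\eta_{\phi_z(a)}$, and $T_4(z)=\sum_a \epsilon_a\,\eta_{\phi_z(a)}$.

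The first step lets me evaluate $T_1$ exactly. For $z\in A\ast B$ there are precisely $n$ indices $a\in A$ with $\phi_z(a)\in B$, each contributing $\tfrac{1}{n}\cdot\tfrac{1}{n}$, so $T_1(z)=\tfrac{1}{n}=\mathbb{I}_{A\ast B}(z)$; for $z\notin A\ast B$ no such $a$ exists, so $T_1(z)=0=\mathbb{I}_{A\ast B}(z)$. Since $\phi_z$ is a bijection of $Q$, the set $\{a:\phi_z(a)\in B\}$ always has exactly $n$ elements, so $|T_2(z)|\le n\cdot\tfrac{\delta}{n}=\delta$; symmetrically $|T_3(z)|\le\delta$. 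Finally $|T_4(z)|\le|Q|\delta^2$ by bounding each of the $|Q|$ summands trivially. Adding the estimates yields $|\mathrm{P}_{X_1\ast X_2}(z)-\mathbb{I}_{A\ast B}(z)|<2\delta+|Q|\delta^2$ uniformly in $z$, which is the claim.

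There is no real obstacle here; the whole argument is a clean four-term expansion. The only point that needs care is the quasigroup/combinatorial step that pins down the support of $T_1$: once one knows that $\phi_z$ restricts to a bijection $A\to B$ on $A\ast B$ and misses $B$ entirely off $A\ast B$, the main term is identified exactly with $\mathbb{I}_{A\ast B}$ and the three error terms admit uniform bounds independent of $z$.
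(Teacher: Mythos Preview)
Your proof is correct and follows essentially the same strategy as the paper: write each distribution as the corresponding uniform indicator plus an $\ell^\infty$-small perturbation, expand the convolution, and use the combinatorial consequence of $|A\ast B|=|A|=|B|$ to identify the main term as $\mathbb{I}_{A\ast B}$ exactly. The paper parametrizes the convolution by summing over $b$ (using $x/^{\ast}b$) and splits into the cases $x\in A\ast B$ versus $x\notin A\ast B$, whereas your use of the bijection $\phi_z$ handles both cases uniformly and is a bit cleaner, but the content is the same.
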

\begin{proof}
The fact that $|A\ast B|=|A|=|B|$ implies that for every $x\in A\ast B$, we have $x/^{\ast}b\in A$ for every $b\in B$, and $x/^{\ast}b\in A^c$ for every $b\in B^c$.

For every $a\in Q$ define $\epsilon_{1,a}=\textrm{P}_{X_1}(a)-\frac{1}{|A|}$ if $a\in A$, and $\epsilon_{1,a}=\textrm{P}_{X_1}(a)$ if $a\notin A$. Similarly, for every $b\in Q$ define $\epsilon_{2,b}=\textrm{P}_{X_2}(b)-\frac{1}{|A|}$ if $b\in B$, and $\epsilon_{2,b}=\textrm{P}_{X_1}(b)$ if $b\notin B$.
Let $x\in A\ast B$, we have:
\begin{align*}
\textrm{P}_{X_1\ast X_2}(x)&=\sum_{b\in B} \textrm{P}_{X_1}(x/^{\ast}b)\textrm{P}_{X_2}(b) + \sum_{b\in B^c} \textrm{P}_{X_1}(x/^{\ast}b)\textrm{P}_{X_2}(b)\\
&=\sum_{b\in B} \Big(\frac{1}{|A|}+\epsilon_{1,x/^{\ast}b}\Big)\Big(\frac{1}{|A|}+\epsilon_{2,b}\Big) + \sum_{b\in B^c} \epsilon_{1,x/^{\ast}b}\epsilon_{2,b}\\
&=\frac{1}{|A|} + \frac{1}{|A|}\sum_{b\in B}\Big(\epsilon_{1,x/^{\ast}b}+\epsilon_{2,b}\Big) +\sum_{b\in Q} \epsilon_{1,x/^{\ast}b}\epsilon_{2,b}.
\end{align*}
Therefore,
\begin{align*}
\Big|\textrm{P}_{X_1\ast X_2}(x)-\frac{1}{|A|}\Big|<2\delta + |Q|\delta^2.
\end{align*}
Now let $x\notin A\ast B$, we have:
\begin{align*}
\textrm{P}_{X_1\ast X_2}(x)&=\sum_{b\in B} \textrm{P}_{X_1}(x/^{\ast}b)\textrm{P}_{X_2}(b) + \sum_{\substack{b\notin B\\ x/^{\ast}b\in A}} \textrm{P}_{X_1}(x/^{\ast}b)\textrm{P}_{X_2}(b) + \sum_{\substack{b\notin B\\ x/^{\ast}b\notin A}} \textrm{P}_{X_1}(x/^{\ast}b)\textrm{P}_{X_2}(b)\\
&=\sum_{b\in B} \epsilon_{1,x/^{\ast}b}\Big(\frac{1}{|A|}+\epsilon_{2,b}\Big) + \sum_{\substack{b\notin B\\ x/^{\ast}b\in A}} \Big(\frac{1}{|A|}+\epsilon_{1,x/^{\ast}b}\Big)\epsilon_{2,b} + \sum_{\substack{b\notin B\\ x/^{\ast}b\notin A}} \epsilon_{1,x/^{\ast}b}\epsilon_{2,b} \leq 2\delta + |Q|\delta^2.
\end{align*}
\end{proof}

\begin{mylem}
Let $(Q,\ast)$ be a quasigroup with $|Q|\geq 2$, and let $\mathcal{Y}$ be an arbitrary set. For any $\delta>0$, there exists $\epsilon(\delta)>0$ depending only on $|Q|$ and $\delta$ such that for any channel $P:Q\longrightarrow \mathcal{Y}$, $|I(P^{--})-I(P)|<\epsilon(\delta)$ implies the existence of a balanced partition $\mathcal{H}$ of $Q$ such that $\mathcal{H}^{/^{\ast}}=\{H/^{\ast}H':\; H,H'\in \mathcal{H}\}$ is also a balanced partition of $Q$, $\mathcal{P}_{\mathcal{H},\delta}(X_1;Y_1)>1-\delta$, $\mathcal{P}_{\mathcal{H},\delta}(U_2;Y_1,Y_2,U_1)>1-\delta$ and $\mathcal{P}_{\mathcal{H}^{/^{\ast}},\delta}(U_1;Y_1,Y_2)>1-\delta$. Where $U_1$ and $U_2$ are two independent random variables uniformly distributed in $Q$, $X_1=U_1\ast U_2$, $X_2=U_2$, and $Y_1$ (resp. $Y_2$) is the output of the channel $P$ when $X_1$ (resp. $X_2$) is the input.
\label{MainLem2}
\end{mylem}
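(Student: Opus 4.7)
The plan is to reduce the statement to two applications of Lemma \ref{MainLem}, both used with the quasigroup $(Q, /^{\ast})$. Since $I(P^{--}) \leq I(P^-) \leq I(P)$ (iterating Remark \ref{rem1}), the hypothesis $|I(P^{--}) - I(P)| < \epsilon$ decomposes into $I(P) - I(P^-) < \epsilon$ and $I(P^-) - I(P^{--}) < \epsilon$. By Remark \ref{rem1}, the first equals $H(X_1 /^{\ast} X_2 \mid Y_1, Y_2) - H(X_1 \mid Y_1)$ (since $U_1 = X_1 /^{\ast} X_2$ and the $X_i$ are uniform), while the second equals $H(V_1 /^{\ast} V_2 \mid Z_1, Z_2) - H(V_1 \mid Z_1)$ where $(V_i, Z_i)$ are iid copies of the input/output pair of $P^-$. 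These are exactly the quantities that Lemma \ref{MainLem} controls when invoked with the operation $/^{\ast}$, so choosing $\epsilon(\delta)$ smaller than $\epsilon_1(\delta')$ for a sufficiently small $\delta' \leq \delta$ triggers that lemma in both cases.

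The first application yields a balanced partition $\mathcal{H}$ of $Q$ with $\mathcal{P}_{\mathcal{H}, \delta'}(X_1; Y_1) > 1 - \delta'$ (and the same bound for $(X_2; Y_2)$ by the iid setup), satisfying $|H /^{\ast} H'| = |H| = |H'|$ for all $H, H' \in \mathcal{H}$. On the corresponding good event (of probability $> 1 - 2\delta'$), $\textrm{P}_{X_1 \mid Y_1}$ and $\textrm{P}_{X_2 \mid Y_2}$ are $\ell_{\infty}$-close to $\mathbb{I}_{H_{Y_1}}$ and $\mathbb{I}_{H_{Y_2}}$ and conditionally independent given $(Y_1, Y_2)$. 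Since the injective map $a \mapsto a /^{\ast} x_2$ sends $H_{y_1}$ to a subset of $H_{y_1} /^{\ast} H_{y_2}$ of the same cardinality $|H_{y_1}|$, we have $H_{y_1} /^{\ast} \{x_2\} = H_{y_1} /^{\ast} H_{y_2}$ for every $x_2 \in H_{y_2}$; consequently, $\{(u_1, x_2) : u_1 \ast x_2 \in H_{y_1},\; x_2 \in H_{y_2}\}$ equals the rectangle $(H_{y_1} /^{\ast} H_{y_2}) \times H_{y_2}$. A variant of Lemma \ref{lemAstProb} for $/^{\ast}$, pushed forward along the bijection $(x_1, x_2) \mapsto (x_1 /^{\ast} x_2, x_2)$, then shows that $\textrm{P}_{U_1, X_2 \mid Y_1, Y_2}$ is $O(\delta')$-close to the uniform distribution on this rectangle. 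Marginalizing in $U_1$ and conditioning further on $U_1$ yield $\textrm{P}_{U_1 \mid Y_1, Y_2} \approx \mathbb{I}_{H_{Y_1} /^{\ast} H_{Y_2}}$ and $\textrm{P}_{X_2 \mid Y_1, Y_2, U_1} \approx \mathbb{I}_{H_{Y_2}}$; the latter already establishes the conclusion about $\mathcal{P}_{\mathcal{H}, \delta}(U_2; Y_1, Y_2, U_1)$.

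For the remaining two conclusions, I would invoke the second application of Lemma \ref{MainLem}, which provides a balanced partition $\mathcal{H}''$ of $Q$ with $\mathcal{P}_{\mathcal{H}'', \delta'}(U_1; Y_1, Y_2) > 1 - \delta'$. On the intersection of the two good events, $\textrm{P}_{U_1 \mid Y_1, Y_2}$ is simultaneously $\ell_{\infty}$-close to $\mathbb{I}_{H_{Y_1} /^{\ast} H_{Y_2}}$ and to $\mathbb{I}_{H''}$ for some $H'' \in \mathcal{H}''$; as in Lemma \ref{lemPart}, for $\delta' < 1/(2|Q|)$ this forces $H_{Y_1} /^{\ast} H_{Y_2} = H'' \in \mathcal{H}''$. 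Since every pair $(H_1, H_2) \in \mathcal{H}^2$ arises as $(H_{Y_1}, H_{Y_2})$ with positive probability on the good event, $\mathcal{H}^{/^{\ast}} \subseteq \mathcal{H}''$; the reverse inclusion and the cover of $Q$ follow from the fact that, for any fixed $H_2 \in \mathcal{H}$, the collection $\{H_1 /^{\ast} H_2 : H_1 \in \mathcal{H}\}$ is a balanced partition of $Q$, being the image of $\mathcal{H}$ under the bijection $a \mapsto a /^{\ast} b$ for any $b \in H_2$. Hence $\mathcal{H}^{/^{\ast}} = \mathcal{H}''$ is a balanced partition, and the $\mathcal{P}_{\mathcal{H}'', \delta'}$-bound translates directly into the conclusion $\mathcal{P}_{\mathcal{H}^{/^{\ast}}, \delta}(U_1; Y_1, Y_2) > 1 - \delta$. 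The main obstacle is the bookkeeping of the approximation errors: they compound through the product distribution (Lemma \ref{lemAstProb}), the change of variables, and the identification of $\mathcal{H}^{/^{\ast}}$ with $\mathcal{H}''$, so the internal $\delta'$ (and hence $\epsilon(\delta)$) has to be chosen small enough that every resulting error stays below $\delta$.
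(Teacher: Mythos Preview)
Your proposal is correct and follows essentially the same route as the paper: split $|I(P^{--})-I(P)|$ into the two gaps $I(P)-I(P^-)$ and $I(P^-)-I(P^{--})$, apply Lemma~\ref{MainLem} with the operation $/^{\ast}$ to each to obtain balanced partitions $\mathcal{H}$ and $\mathcal{H}''$, use Lemma~\ref{lemAstProb} to see $\textrm{P}_{U_1\mid Y_1,Y_2}\approx\mathbb{I}_{H_{y_1}/^{\ast}H_{y_2}}$ on a large event, and match blocks as in Lemma~\ref{lemPart} to conclude $\mathcal{H}^{/^{\ast}}=\mathcal{H}''$. The only noteworthy difference is cosmetic: you obtain the $U_2$-conditional by first showing the joint $(U_1,X_2)$ is approximately uniform on the rectangle $(H_{y_1}/^{\ast}H_{y_2})\times H_{y_2}$ and then conditioning, whereas the paper first proves $\mathcal{H}^{/^{\ast}}=\mathcal{H}'$ and then computes $\textrm{P}_{U_2\mid Y_1,Y_2,U_1}$ via an explicit Bayes-rule calculation on a good set $\mathcal{B}$; the one point you assert without argument---that every pair $(H_1,H_2)\in\mathcal{H}^2$ occurs as $(H_{Y_1},H_{Y_2})$ with probability exceeding the error margin---is exactly what the paper secures via the bound $\textrm{P}_{Y_1}(A_H)>\tfrac{1}{4|Q|}$.
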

\begin{proof}
Let $\delta'=\min\{\delta, \delta'', \frac{1}{16|Q|^2}\}$, where $\delta''>0$ is a small enough number that will be specified later. Let $\epsilon(\delta)=\epsilon_1(\delta')$, where $\epsilon_1$ is given by lemma \ref{MainLem}. Let $P:Q\longrightarrow\mathcal{Y}$ be a channel as in the hypothesis. Then from lemma \ref{MainLem} we conclude the existence of two balanced partitions $\mathcal{H}$ and $\mathcal{H}'$ such that $\mathcal{P}_{\mathcal{H},\delta'}(X_1;Y_1)>1-\delta'$ and $\mathcal{P}_{\mathcal{H}',\delta'}(U_1;Y_1,Y_2)>1-\delta'$.
Moreover, we have $|H_1/^{\ast}H_2|=|H_1|=|H_2|$ for every $H_1,H_2\in\mathcal{H}$.

Given $H\in\mathcal{H}$, define:
\begin{align*}
A_H&= \Big\{y \in \mathcal{Y}:\; ||\textrm{P}_{X_1|Y_1=y}-\mathbb{I}_{H}||_{\infty}<\delta'\Big\} = \Big\{y \in \mathcal{Y}:\; ||\textrm{P}_{X_2|Y_2=y}-\mathbb{I}_{H}||_{\infty}<\delta'\Big\},
\end{align*}
(note that $(X_1,Y_1)$ and $(X_2,Y_2)$ are identically distributed).

Let $x\in H$, we have:
\begin{align*}
\frac{1}{|Q|}&=\;\textrm{P}_{X_1}(x) \\
& = \sum_{y\in\mathcal{A}_{\mathcal{H},\delta'}(X_1;Y_1)\setminus A_H}  \textrm{P}_{X_1|Y_1}(x|y)\textrm{P}_{Y_1}(y) + \sum_{y\in A_h} \textrm{P}_{X_1|Y_1}(x|y)\textrm{P}_{Y_1}(y) + \sum_{y\notin\mathcal{A}_{\mathcal{H},\delta'}(X_1;Y_1)} \textrm{P}_{X_1|Y_1}(x|y)\textrm{P}_{Y_1}(y)\\
& \leq \delta'\mathcal{P}_{\mathcal{H},\delta'}(X_1;Y_1) + \Big(\frac{1}{|H|}+\delta' \Big)\textrm{P}_{Y_1}(A_h) + (1-\mathcal{P}_{\mathcal{H},\delta'}(X_1;Y_1)) < \;2\delta' + 2\textrm{P}_{Y_1}(A_H) \\
&\leq  \frac{1}{8|Q|^2} + 2\textrm{P}_{Y_1}(A_H) < \frac{1}{2|Q|} + 2\textrm{P}_{Y_1}(A_H).
\end{align*}
Therefore, 
\begin{equation}
\label{Eqeq6}
\textrm{P}_{Y_2}(A_H)=\textrm{P}_{Y_1}(A_H) > \frac{1}{4|Q|}.
\end{equation}
Now for each $H_1,H_2\in\mathcal{H}$, define:
\begin{align*}
A'_{H_1,H_2}= \Big\{(y_1,& y_2) \in \mathcal{Y}\times\mathcal{Y}:
||\emph{P}_{U_1|Y_1=y_1,Y_2=y_2}-\mathbb{I}_{H_1/^{\ast}H_2}||_{\infty}<\frac{1}{2|Q|} \Big\}.
\end{align*}
Let $(y_1,y_2)\in A_{H_1}\times A_{H_2}$, then $||\textrm{P}_{X_1|Y_1=y_1}-\mathbb{I}_{H_1}||_{\infty}<\delta'$ and $||\textrm{P}_{X_2|Y_2=y_2}-\mathbb{I}_{H_2}||_{\infty}<\delta'$. Lemma \ref{lemAstProb} implies that
\begin{align*}
||\emph{P}_{U_1|Y_1=y_1,Y_2=y_2}-\mathbb{I}_{H_1/^{\ast}H_2}||_{\infty} &= ||\emph{P}_{X_1/^{\ast}X_2|Y_1=y_1,Y_2=y_2}-\mathbb{I}_{H_1/^{\ast}H_2}||_{\infty}\\
&<2\delta' + |Q|\delta'^2 \leq \frac{1}{8|Q|^2} + |Q|\frac{1}{16^2|Q|^4} < \frac{1}{2|Q|}.
\end{align*}
Therefore, $A_{H_1}\times A_{H_2}\subset A'_{H_1,H_2}$ and so $\textrm{P}_{Y_1,Y_2}(A'_{H_1,H_2}) \geq \textrm{P}_{Y_1}(A_{H_1}) \textrm{P}_{Y_2}(A_{H_2}) > \frac{1}{16|Q|^2}\geq \delta'$ (see \eqref{Eqeq6}). We recall that $\textrm{P}_{Y_1,Y_2}\big(\mathcal{A}_{\mathcal{H}',\delta'}(U_1;Y_1,Y_2)\big)=\mathcal{P}_{\mathcal{H}',\delta'}(U_1;Y_1,Y_2)>1-\delta'$, so $\mathcal{A}_{\mathcal{H}',\delta'}(U_1;Y_1,Y_2) \cap A'_{H_1,H_2}\neq \o$.

Let $(y_1,y_2)\in \mathcal{A}_{\mathcal{H}',\delta'}(U_1;Y_1,Y_2) \cap A'_{H_1,H_2}$, then there exists $H'\in\mathcal{H}'$ such that $||\textrm{P}_{U_1|Y_1=y_1,Y_2=y_2}-\mathbb{I}_{H'}||_{\infty}<\delta'<\frac{1}{2|Q|}$. Now since $(y_1,y_2)\in A'_{H_1,H_2}$, we have $||\textrm{P}_{U_1|Y_1=y_1,Y_2=y_2}-\mathbb{I}_{H_1/^{\ast}H_2}||_{\infty}<\frac{1}{2|Q|}$, so $||\mathbb{I}_{H'}-\mathbb{I}_{H_1/^{\ast}H_2}||_{\infty}<\frac{1}{|Q|}$, we conclude that $H'=H_1/^{\ast}H_2$ and $H_1/^{\ast}H_2\in\mathcal{H}'$. But this is true for any $H_1,H_2\in\mathcal{H}$. Therefore, $\mathcal{H}^{/^{\ast}}\subset\mathcal{H}'$, which implies that $\mathcal{H}^{/^{\ast}}=\mathcal{H}'$ since both $\mathcal{H}'$ and $\mathcal{H}^{/^{\ast}}$ are partitions of $Q$ whose all elements are non-empty. Thus, $$\mathcal{P}_{\mathcal{H},\delta}(X_1;Y_1)\geq \mathcal{P}_{\mathcal{H},\delta'}(X_1;Y_1)>1-\delta'\geq 1-\delta,$$ $$\mathcal{P}_{\mathcal{H}^{/^{\ast}},\delta}(U_1;Y_1,Y_2)\geq \mathcal{P}_{\mathcal{H}^{/^{\ast}},\delta'}(U_1;Y_1,Y_2)>1-\delta'\geq 1-\delta.$$ It remains to prove that $\mathcal{P}_{\mathcal{H},\delta}(U_2;Y_1,Y_2,U_1)>1-\delta$. Define:
\begin{align*}
\mathcal{K}=\mathcal{A}_{\mathcal{H}^{/^{\ast}},\delta''}(U_1;Y_1,Y_2)\cap\Big(\mathcal{A}_{\mathcal{H},\delta''}(X_1;Y_1)\times \mathcal{A}_{\mathcal{H},\delta''}(X_2;Y_2)\Big).
\end{align*}
We have:
\begin{align*}
\textrm{P}_{Y_1}(\mathcal{A}_{\mathcal{H},\delta''}(X_1;Y_1))&=\textrm{P}_{Y_2}(\mathcal{A}_{\mathcal{H},\delta''}(X_2;Y_2))=\mathcal{P}_{\mathcal{H},\delta''}(X_1;Y_1)\geq \mathcal{P}_{\mathcal{H},\delta'}(X_1;Y_1)>1-\delta'\geq 1-\delta''.
\end{align*}
Thus, $\textrm{P}_{Y_1,Y_2}(\mathcal{A}_{\mathcal{H},\delta''}(X_1;Y_1)\times\mathcal{A}_{\mathcal{H},\delta''}(X_2;Y_2))>1-2\delta''$. On the other hand, we have:
\begin{align*}
\textrm{P}_{Y_1,Y_2}(\mathcal{A}_{\mathcal{H}^{/^{\ast}},\delta''}(U_1;Y_1,Y_2))&=\mathcal{P}_{\mathcal{H}^{/^{\ast}},\delta''}(U_1;Y_1,Y_2)=\mathcal{P}_{\mathcal{H}',\delta''}(U_1;Y_1,Y_2)\\
&\geq \mathcal{P}_{\mathcal{H}',\delta'}(U_1;Y_1,Y_2)>1-\delta'\geq 1-\delta'',
\end{align*}
we conclude that $\textrm{P}_{Y_1,Y_2}(\mathcal{K})>1-3\delta''$.
Define:
\begin{align*}
\mathcal{B}=\Big\{(y_1,y_2,u_1)\in\mathcal{Y}\times \mathcal{Y}\times Q:\; (y_1,y_2)\in \mathcal{K},\; &\textrm{and}\; \exists H\in \mathcal{H}^{/^{\ast}},\\
&||\textrm{P}_{U_1|Y_1=y_1,Y_2=y_2}-\mathbb{I}_{H}||_{\infty}<\delta''\;\textrm{and}\; u_1\in H \Big\}.
\end{align*}
If $(y_1,y_2)\in\mathcal{K}$, then $(y_1,y_2)\in \mathcal{A}_{\mathcal{H}^{/^{\ast}},\delta''}(U_1;Y_1,Y_2)$ and so there exists $H_{y_1,y_2}\in\mathcal{H}^{/^{\ast}}$ such that $$||\textrm{P}_{U_1|Y_1=y_1,Y_2=y_2}-\mathbb{I}_{H_{y_1,y_2}}||_{\infty}<\delta'',$$ which implies that $(y_1,y_2,u_1)\in\mathcal{B}$ for all $u_1\in H_{y_1,y_2}$. Now since $||\textrm{P}_{U_1|Y_1=y_1,Y_2=y_2}-\mathbb{I}_{H_{y_1,y_2}}||_{\infty}<\delta''$, it is easy to see that $\textrm{P}_{U_1|Y_1=y_1,Y_2=y_2}(H_{y_1,y_2})\geq 1-|H_{y_1,y_2}|\delta''\geq 1-|Q|\delta''$. Therefore,
\begin{align*}
\textrm{P}_{Y_1,Y_2,U_1}(\mathcal{B})>\textrm{P}_{Y_1,Y_2}(\mathcal{K})(1-|Q|\delta'')>(1-3\delta'')(1-|Q|\delta'')>1-(|Q|+3)\delta''.
\end{align*}
Therefore, if $\delta''\leq \frac{\delta}{|Q|+3}$, then $\textrm{P}_{Y_1,Y_2,U_1}(\mathcal{B})>1-\delta$.

Now let $(y_1,y_2,u_1)\in \mathcal{B}$. There exists $H_1,H_2\in\mathcal{H}$ and $H\in\mathcal{H}^{/^{\ast}}$ such that:
\begin{itemize}
\item $u_1\in H$,
\item $||\textrm{P}_{U_1|Y_1=y_1,Y_2=y_2}-\mathbb{I}_{H}||_{\infty}<\delta''$,
\item $||\textrm{P}_{X_1|Y_1=y_1}-\mathbb{I}_{H_1}||_{\infty}<\delta''$,
\item $||\textrm{P}_{X_2|Y_2=y_2}-\mathbb{I}_{H_2}||_{\infty}<\delta''$.
\end{itemize}
Since $U_1=X_1/^{\ast}X_2$, lemma \ref{lemAstProb} implies that $||\textrm{P}_{U_1|Y_1=y_1,Y_2=y_2}-\mathbb{I}_{H_1/^{\ast}H_2}||_{\infty}<2\delta''+|Q|\delta''^2$, and $||\mathbb{I}_{H}-\mathbb{I}_{H_1/^{\ast}H_2}||_{\infty}<3\delta''+|Q|\delta''^2$. Therefore, if $\delta''\leq \frac{1}{4|Q|}$, then $||\mathbb{I}_{H}-\mathbb{I}_{H_1/^{\ast}H_2}||_{\infty}<\frac{1}{|Q|}$ and $H=H_1/^{\ast}H_2$. Now we have:
\begin{itemize}
\item $u_1\in H$ implies $\big|\textrm{P}_{U_1|Y_1,Y_2}(u_1|y_1,y_2)-\frac{1}{|H|}\big|<\delta''$, i.e., $\frac{1}{|H|}-\delta''<\textrm{P}_{U_1|Y_1,Y_2}(u_1|y_1,y_2)<\frac{1}{|H|}+\delta''$.
\item If $u_2\in H_2$, then $u_1\ast u_2\in H_1$ which implies that $\big|\textrm{P}_{X_1|Y_1}(u_1\ast u_2|y_1)-\frac{1}{|H|}\big|<\delta''$ and $\big|\textrm{P}_{X_2|Y_2}(u_2|y_2)-\frac{1}{|H|}\big|<\delta''$.
\item If $u_2\notin H_2$, then $u_1\ast u_2\notin H_1$, so $\textrm{P}_{X_1|Y_1}(u_1\ast u_2|y_1)<\delta''$ and $\textrm{P}_{X_2|Y_2}(u_2|y_2)<\delta''$.
\end{itemize}
By noticing that
\begin{align*}
\textrm{P}_{U_2|Y_1,Y_2,U_1}&(u_2|y_1,y_2,u_1)=\frac{\textrm{P}_{U_2,U_1|Y_1,Y_2}(u_2,u_1|y_1,y_2)}{\textrm{P}_{U_1|Y_1,Y_2}(u_1|y_1,y_2)}=\frac{\textrm{P}_{X_1|Y_1}(u_1\ast u_2|y_1)\textrm{P}_{X_2|Y_2}(u_2|y_1)}{\textrm{P}_{U_1|Y_1,Y_2}(u_1|y_1,y_2)},
\end{align*}
we  conclude that:
\begin{itemize}
\item If $u_2\in H_2$, we have:
\begin{align*}
\frac{\big(\frac{1}{|H|}-\delta''\big)^2}{\frac{1}{|H|}+\delta''}<&\;
\textrm{P}_{U_2|Y_1,Y_2,U_1}(u_2|y_1,y_2,u_1)<\frac{\big(\frac{1}{|H|}+\delta''\big)^2}{\frac{1}{|H|}-\delta''}.
\end{align*}
\item If $u_2\notin H_2$, we have:
$$\textrm{P}_{U_2|Y_1,Y_2,U_1}(u_2|y_1,y_2,u_1)<\frac{\delta''^2}{\frac{1}{|H|}-\delta''}.$$
\end{itemize}
Consequently, there exists $\beta(\delta)>0$ such that if $\delta''\leq \beta(\delta)$ we get $$||\textrm{P}_{U_2|Y_1=y_1,Y_2=y_2,U_1=u_1}-\mathbb{I}_{H_2}||_{\infty}<\delta.$$ By setting $\delta''=\min\Big\{\frac{\delta}{|Q|+3},\frac{1}{4|Q|},\beta(\delta)\Big\}$, we get $(y_1,y_2,u_1)\in \mathcal{A}_{\mathcal{H},\delta}(U_2;Y_1,Y_2,U_1)$ for every $(y_1,y_2,u_1)\in \mathcal{B}$, i.e., $\mathcal{B}\subset \mathcal{A}_{\mathcal{H},\delta}(U_2;Y_1,Y_2,U_1)$ and $\mathcal{P}_{\mathcal{H},\delta}(U_2;Y_1,Y_2,U_1)\geq \textrm{P}_{Y_1,Y_2,U_1}(\mathcal{B})>1-\delta$.
\end{proof}

Now we are ready to prove theorem \ref{mainthe1}. In fact, we will prove a stronger theorem:

\begin{mythe}
Let $(Q,\ast)$ be a quasigroup and let $P:Q\longrightarrow\mathcal{Y}$ be an arbitrary channel. Then for any $\delta>0$, we have:
\begin{align*}
\lim_{n\to\infty} \frac{1}{2^n} \Bigg|\bigg\{ &s \in\{-,+\}^n:\;\exists \mathcal{H}_s\; \emph{a stable partition of $(Q,/^{\ast})$},\\
&\Big| I(P^s[\mathcal{H}'])-\log\frac{|\mathcal{H}_s|.||\mathcal{H}_s\wedge\mathcal{H}'||}{||\mathcal{H}'||}\Big|<\delta\;\emph{for all stable partitions $\mathcal{H}'$ of $(Q,/^{\ast})$} \bigg\}\Bigg| = 1.
\end{align*}
\label{mainthe11}
\end{mythe}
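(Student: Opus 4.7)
The plan is to deduce Theorem~\ref{mainthe11} directly from Theorem~\ref{mainthe1} using only entropy manipulations; no further polarization analysis is needed. The guiding intuition is that once $P^s$ is close to being ``determined by $\mathrm{Proj}_{\mathcal{H}_s}(X)$'', the channel $P^s[\mathcal{H}']$ is essentially the deterministic quotient channel $\mathrm{Proj}_{\mathcal{H}'}(X)\mapsto\mathrm{Proj}_{\mathcal{H}_s}(X)$, whose mutual information is exactly $\log(|\mathcal{H}_s|\cdot\|\mathcal{H}_s\wedge\mathcal{H}'\|/\|\mathcal{H}'\|)$, independent of the actual output $Y$.

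I would first fix $\delta'>0$ (to be set as a function of $\delta$ later) and, for any $s$ satisfying the conclusion of Theorem~\ref{mainthe1} with parameter $\delta'$, pick the associated stable partition $\mathcal{H}_s$ of $(Q,/^{\ast})$; this same $\mathcal{H}_s$ is the one claimed for Theorem~\ref{mainthe11}. Let $X$ be uniform on $Q$ and $Y$ the output of $P^s$. The hypothesis translates to $|H(X|Y)-\log\|\mathcal{H}_s\||<\delta'$ and $H(\mathrm{Proj}_{\mathcal{H}_s}(X)|Y)<\delta'$. Since $\mathrm{Proj}_{\mathcal{H}_s}(X)$ is a function of $X$, the identity $H(X|Y,\mathrm{Proj}_{\mathcal{H}_s}(X))=H(X|Y)-H(\mathrm{Proj}_{\mathcal{H}_s}(X)|Y)$ combined with $H(X|\mathrm{Proj}_{\mathcal{H}_s}(X))=\log\|\mathcal{H}_s\|$ gives $I(X;Y|\mathrm{Proj}_{\mathcal{H}_s}(X))<2\delta'$: the channel forgets everything about $X$ beyond its projection onto $\mathcal{H}_s$.

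Finally, for any stable partition $\mathcal{H}'$ of $(Q,/^{\ast})$, two applications of the chain rule give
\begin{align*}
I(P^s[\mathcal{H}']) &= I(\mathrm{Proj}_{\mathcal{H}'}(X);\mathrm{Proj}_{\mathcal{H}_s}(X)) + I(\mathrm{Proj}_{\mathcal{H}'}(X);Y\mid\mathrm{Proj}_{\mathcal{H}_s}(X)) \\
&\quad - I(\mathrm{Proj}_{\mathcal{H}'}(X);\mathrm{Proj}_{\mathcal{H}_s}(X)\mid Y).
\end{align*}
The middle term is at most $I(X;Y|\mathrm{Proj}_{\mathcal{H}_s}(X))<2\delta'$, the last at most $H(\mathrm{Proj}_{\mathcal{H}_s}(X)|Y)<\delta'$, and \emph{both bounds are independent of $\mathcal{H}'$}. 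Lemma~\ref{lemWedge} applied to the quasigroup $(Q,/^{\ast})$ shows that $\mathcal{H}_s\wedge\mathcal{H}'$ is balanced, so by inclusion-exclusion on entropies of projections together with the identity $|\mathcal{H}|\cdot\|\mathcal{H}\|=|Q|$ valid for every balanced partition, the first term equals exactly $\log(|\mathcal{H}_s|\cdot\|\mathcal{H}_s\wedge\mathcal{H}'\|/\|\mathcal{H}'\|)$. Setting $\delta'=\delta/3$ then yields the desired bound, uniformly over all stable $\mathcal{H}'$, on the set of $s$ provided by Theorem~\ref{mainthe1}, whose relative size tends to $1$. I do not foresee a serious obstacle; the only mildly delicate point is uniformity in $\mathcal{H}'$, which is automatic from the above estimates.
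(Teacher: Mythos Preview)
Your entropy argument is correct and notably cleaner than the paper's route. The paper does \emph{not} deduce Theorem~\ref{mainthe11} from Theorem~\ref{mainthe1}; instead it proves both simultaneously by first establishing the pointwise posterior condition $\mathcal{P}_{\mathcal{H}_s,\gamma}(X_s;Y_s)>1-\gamma$ for a stable $\mathcal{H}_s$ (via Lemmas~\ref{lemEqPhi}--\ref{MainLem2} and a martingale-plus-pigeonhole argument on the sign sequence), and then invokes continuity of the entropy function to read off all the bounds $\big|I(P^s[\mathcal{H}'])-\log(|\mathcal{H}_s|\cdot\|\mathcal{H}_s\wedge\mathcal{H}'\|/\|\mathcal{H}'\|)\big|<\delta$ at once. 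Your approach instead observes that the two scalar conditions in Theorem~\ref{mainthe1} already force $I(X;Y\mid\mathrm{Proj}_{\mathcal{H}_s}(X))<2\delta'$ and $H(\mathrm{Proj}_{\mathcal{H}_s}(X)\mid Y)<\delta'$, after which the chain-rule identity together with the exact computation $I(\mathrm{Proj}_{\mathcal{H}'}(X);\mathrm{Proj}_{\mathcal{H}_s}(X))=\log|\mathcal{H}'|+\log|\mathcal{H}_s|-\log|\mathcal{H}_s\wedge\mathcal{H}'|$ (using that $\mathcal{H}_s\wedge\mathcal{H}'$ is balanced by Lemma~\ref{lemWedge} applied in $(Q,/^{\ast})$) gives the conclusion with a bound independent of $\mathcal{H}'$. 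This is more elementary and makes the uniformity in $\mathcal{H}'$ transparent, whereas the paper hides that uniformity inside an unquantified ``continuity of entropy'' step.

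One caveat you should flag: in the paper's logical order Theorem~\ref{mainthe1} has no independent proof---it is stated, then the text says ``In fact, we will prove a stronger theorem'' and proves Theorem~\ref{mainthe11}. So your reduction, read literally against the paper, is circular. This is not a real obstacle: the paper's machinery already delivers the $\mathcal{P}$-closeness condition, from which the two inequalities of Theorem~\ref{mainthe1} follow immediately; one can then append your entropy manipulation to obtain Theorem~\ref{mainthe11}. Equivalently, your argument shows that Theorems~\ref{mainthe1} and~\ref{mainthe11} are in fact equivalent, which is a worthwhile observation in its own right.
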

\begin{proof}
Due to the continuity of the entropy function, there exists $\gamma(\delta)>0$ depending only on $|Q|$ such that if $(X,Y)$ is a pair of random variables in $Q\times\mathcal{Y}$ where $X$ is uniform, and if there exists a stable partition of $\mathcal{H}$ such that $\mathcal{P}_{\mathcal{H},\gamma(\delta)}(X;Y)>1-\gamma(\delta)$, then $\big |I(X;Y)-\log|\mathcal{H}|\big|<\delta$ and $\Big|I\big(\textrm{Proj}_{\mathcal{H}'}(X);Y\big)-\log\frac{|\mathcal{H}|.||\mathcal{H}\wedge\mathcal{H}'||}{||\mathcal{H}'||}\Big|<\delta$ for all stable partitions $\mathcal{H}'$ of $(Q,/^{\ast})$ (remember that $\mathcal{H}\wedge\mathcal{H}'$ is a stable partition by lemma \ref{lemWedge}).

Let $P^n$ be as in definition \ref{def1}. From remark \ref{rem1} we have:
$$\mathbb{E}\big(I(P_{n+1})|P_n\big)=\frac{1}{2}I(P_n^-)+\frac{1}{2}I(P_n^+)=I(P_n)$$
This implies that the process $\{I(P_n)\}_n$ is a martingale, and so it converges almost surely.

Let $m$ be the number of different balanced partitions of $Q$, choose $l>m$ and let $0\leq i\leq l+1$. Almost surely, $|I(P_{n-l+i+1})-I(P_{n-l+i})|$ converges to zero. Therefore, we have:
\begin{align*}
\lim_{n\to\infty} \frac{1}{2^{n-l+i}} |A_{n,l,i}| = 1
\end{align*}
where
\begin{align*}
&A_{n,l,i} :=\Big\{ (s_1,s_2)\in\{-,+\}^{n-l}\times \{-,+\}^{i}:\;|I(P^{(s_1,s_2,-)})-I(P^{(s_1,s_2)})|<\epsilon(\delta')\Big\},
\end{align*}
and $\epsilon(\delta')$ is given by lemma \ref{MainLem2}. Now for each $s_2\in\{-,+\}^i$, define:
\begin{align*}
&A_{n,l,s_2} :=\Big\{ s_1\in\{-,+\}^{n-l}:\; |I(P^{(s_1,s_2,-)})-I(P^{(s_1,s_2)})|<\epsilon(\delta')\Big\}.
\end{align*}
It is easy to see that $\displaystyle |A_{n,l,i}|=\sum_{s_2\in\{-,+\}^i}|A_{n,l,s_2}|$. Therefore,
\begin{align*}
\frac{1}{2^i}\sum_{s_2\in\{-,+\}^i}\Big(\lim_{n\rightarrow\infty}&\frac{1}{2^{n-l}}|A_{n,l,s_2}|\Big)=\lim_{n\to\infty} \frac{1}{2^{n-l+i}} |A_{n,l,i}| = 1,
\end{align*}
i.e., 
\begin{equation}
\label{eqyu}
\sum_{s_2\in\{-,+\}^i}\Big(\lim_{n\rightarrow\infty}\frac{1}{2^{n-l}}|A_{n,l,s_2}|\Big)=2^i.
\end{equation}
On the other hand, it is obvious that $|A_{n,l,s_2}|\leq 2^{n-l}$, and so $\displaystyle \lim_{n\rightarrow\infty}\frac{1}{2^{n-l}}|A_{n,l,s_2}|\leq 1$ for all $s_2\in\{-,+\}^i$. We can now use \eqref{eqyu} to conclude that $\displaystyle \lim_{n\rightarrow\infty}\frac{1}{2^{n-l}}|A_{n,l,s_2}|\leq 1$ for all $s_2\in\{-,+\}^i$.
Therefore, we must have $\displaystyle\lim_{n\to\infty} \frac{1}{2^{n-l}} |A_{n,l}| = 1$,
where
\begin{align*}
A_{n,l} :&= \bigcap_{\substack{0\leq i\leq l+1\\ s_2\in\{-,+\}^i}}A_{n,l,s_2}\\
&=\Big\{ s_1\in\{-,+\}^{n-l}:\; |I(P^{(s_1,s_2,-)})-I(P^{(s_1,s_2)})|<\epsilon(\delta'),\;\forall s_2\in\{-,+\}^i,\; \forall 0\leq i\leq l+1 \Big\}.
\end{align*}
Now define:
\begin{align*}
C_{l} := \Big\{ s_2\in &\{-,+\}^{l}:\textrm{$s_2$ contains the sign $-$ at least $m$ times}\Big\},
\end{align*}
\begin{align*}
B_{n,l} := A_{n,l}\times C_l=\Big\{ s=(s_1,s_2)\in\{-,+\}^{n-l} \times\{-,+\}^{l}:\; s_1\in A_{n,l},\;s_2\in C_{l}\Big\},
\end{align*}
\begin{equation}
\label{Dl}
\begin{split}
D_n := \bigg\{ s \in\{-,+\}^n:\;&\exists \mathcal{H}_s\; \textrm{a stable partition of $(Q,/^{\ast})$},\\
&\Big| I(P^s[\mathcal{H}'])-\log\frac{|\mathcal{H}_s|.||\mathcal{H}_s\wedge\mathcal{H}'||}{||\mathcal{H}'||}\Big|<\delta\;\textrm{for all stable partitions $\mathcal{H}'$ of $(Q,/^{\ast})$} \bigg\}.\\
\end{split}
\end{equation}

Now let $s_1\in A_{n,l}$, let $n-l \leq j\leq n$, let $s=(s_1,s_2)\in \{-,+\}^j$ for some $s_2\in\{-,+\}^{j-n+l}$, let $X_{s}$ be the input to the channel $P^{s}$ and $Y_s$ be the output of it. Since $j-n+l\leq l$, both $s_2$ and $(s_2,-)$ have lengths of at most $l+1$. Therefore, we have $|I(P^{(s_1,s_2,-)})-I(P^{(s_1,s_2)})|<\epsilon(\delta')$ and $|I(P^{(s_1,s_2,-,-)})-I(P^{(s_1,s_2,-)})|<\epsilon(\delta')$. Lemma \ref{MainLem2} implies the existence of a balanced partitions $\mathcal{H}_s$ such that $\mathcal{P}_{\mathcal{H}_s,\delta'}(X_s;Y_s)>1-\delta'$, $\mathcal{P}_{\mathcal{H}_s^{/^{\ast}},\delta'}(X_{(s,-)};Y_{(s,-)})>1-\delta'$ and $\mathcal{P}_{\mathcal{H}_s,\delta'}(X_{(s,+)};Y_{(s,+)})>1-\delta'$ for all $s\in\{-,+\}^j$ ($n-l\leq j\leq n$) having $s_1$ as a prefix. Since $\delta'<\frac{1}{2|Q|^2}$, lemma \ref{lemPart} implies that $\mathcal{H}_{(s,-)}=\mathcal{H}_{s}^{/^{\ast}}$ and $\mathcal{H}_{(s,+)}=\mathcal{H}_{s}$ for all $s\in\{-,+\}^j$ ($n-l\leq j< n$) having $s_1$ as a prefix.

Let $s_2\in C_{l}$, and let $l'$ be the number of $-$ signs in $s_2$ (we have $m\leq l'\leq l$), then there exist $l'+1$ balanced partitions $\mathcal{H}_{i}$  ($0\leq i\leq l'$) such that $\mathcal{H}_{0}=\mathcal{H}_{s_1}$, $\mathcal{H}_{l'}=\mathcal{H}_{(s_1,s_2)}$, and $\mathcal{H}_{i+1}=\mathcal{H}_i^{/^{\ast}}$ for each $0\leq i\leq l'-1$. Since $m$ is the number of different balanced partitions of $Q$, there exist two indices $i$ and $j$ such that $i< j\leq l'$ and $\mathcal{H}_i=\mathcal{H}_j$. We conclude that $\mathcal{H}_{l'}=\mathcal{H}_{(s_1,s_2)}$ is a stable partition of $(Q,/^{\ast})$. Moreover, since $\delta'\leq \gamma(\delta)$, $(s_1,s_2)$ belongs to $D_n$. Therefore, $B_{n,l}\subset D_n$ for any $l\geq m$. Thus:
\begin{align*}
\liminf_{n\to\infty} \frac{1}{2^n} |D_n| &\geq
\lim_{n\to\infty} \frac{1}{2^n} |B_{n,l}|= \lim_{n\to\infty} \Big(\frac{1}{2^{n-l}} |A_{n,l}|\Big)\Big(\frac{1}{2^{l}} |C_l|\Big)=\frac{1}{2^{l}} |C_l|.
\end{align*}
But this is true for any $l\geq m$, we conclude:
\begin{align*}
\liminf_{n\to\infty} \frac{1}{2^n} |D_n| \geq
\lim_{l\to\infty} \frac{1}{2^{l}} |C_l|=1,
\end{align*}
which implies that 
\begin{align*}
\lim_{n\to\infty} \frac{1}{2^n} |D_n|=1.
\end{align*}
\end{proof}

\section{Rate of polarization}

In this section, we are interested in the rate of polarization of $P_n$ to deterministic projection channels.

\begin{mydef}
The \emph{Battacharyya parameter} of an ordinary channel $P$ with input alphabet $\mathcal{X}$ and output alphabet $\mathcal{Y}$ is defined as:
$$Z(P)=\frac{1}{|\mathcal{X}|(|\mathcal{X}|-1)}\sum_{\substack{(x,x')\in\mathcal{X}\times\mathcal{X}\\x\neq x'}}\sum_{y\in\mathcal{Y}}\sqrt{P(y|x)P(y|x')}$$
if $|\mathcal{X}|>1$. And by convention, we take $Z(P)=0$ if $|\mathcal{X}|=1$.
\end{mydef}

It's known that $\textrm{P}_e(P)\leq |\mathcal{X}|Z(P)$ (see \cite{SasogluTelAri}), where $\textrm{P}_e(P)$ is the probability of error of the maximum likelihood decoder of $P$.

\begin{mydef}
Let $(Q,\ast)$ be a quasigroup with $|Q|\geq 2$, and $\mathcal{Y}$ be an arbitrary set. Let $P:Q\longrightarrow \mathcal{Y}$ be an arbitrary channel, and $\mathcal{H}$ be a stable partition of $(Q,/^{\ast})$. We define the channels $P[\mathcal{H}]^-:\mathcal{H}^{/^{\ast}}\longrightarrow\mathcal{Y}\times\mathcal{Y}$ and $P[\mathcal{H}]^+:\mathcal{H}\longrightarrow\mathcal{Y}\times\mathcal{Y}\times\mathcal{H}^{/^{\ast}}$ by:
$$P[\mathcal{H}]^+(y_1,y_2,H_1|H_2)=\frac{1}{|\mathcal{H}|}P[\mathcal{H}](y_1|H_1\ast H_2)P[\mathcal{H}](y_2|H_2),$$
$$P[\mathcal{H}]^-(y_1,y_2|H_1)=\frac{1}{|\mathcal{H}|}\sum_{\substack{H_2\in \mathcal{H}}} P[\mathcal{H}](y_1|H_1\ast H_2)P[\mathcal{H}](y_2|H_2).$$
\end{mydef}

\begin{mylem}
$P[\mathcal{H}]^+$ is degraded with respect to $P^+[\mathcal{H}]$, and $P[\mathcal{H}]^-$ is equivalent to $P^-[\mathcal{H}^{/^{\ast}}]$.
\label{Degraded}
\end{mylem}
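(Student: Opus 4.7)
The plan is to first establish a structural identity relating elements of $\mathcal{H}$ and $\mathcal{H}^{/^{\ast}}$, after which both halves of the lemma reduce to routine manipulation of the definitions.

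Structural claim: fix any $H_2\in\mathcal{H}$. Then the map $X\mapsto X/^{\ast}H_2$ is a bijection from $\mathcal{H}$ onto $\mathcal{H}^{/^{\ast}}$, and for each $X\in\mathcal{H}$, writing $H_1 := X/^{\ast}H_2$, one has $H_1\ast u_2 = X$ for every $u_2\in H_2$; in particular $H_1\ast H_2 = X\in\mathcal{H}$. To prove it, I would first note that $(Q,/^{\ast})$ is a quasigroup, so $Q = Q/^{\ast}H_2 = \bigcup_{X\in\mathcal{H}}(X/^{\ast}H_2)$, giving a covering of $Q$ by elements of the partition $\mathcal{H}^{/^{\ast}}$; since $|\mathcal{H}|=|\mathcal{H}^{/^{\ast}}|=|Q|/\|\mathcal{H}\|$ by the stability of $\mathcal{H}$, the covering is forced to be a bijection. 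Next, $(X/^{\ast}H_2)\ast H_2\supseteq X$ because $(x/^{\ast}h_2)\ast h_2 = x$. For the reverse inclusion, suppose $z=(x/^{\ast}h_2)\ast h_2'$ with $x\in X$ and $h_2,h_2'\in H_2$; then $z/^{\ast}h_2' = x/^{\ast}h_2 \in X/^{\ast}H_2 = H_1$, while $z/^{\ast}h_2'$ also lies in $Z/^{\ast}H_2$ where $Z:=\mathrm{Proj}_{\mathcal{H}}(z)$, so $Z/^{\ast}H_2=H_1=X/^{\ast}H_2$ and the bijection above forces $Z=X$, hence $z\in X$. Finally, for each fixed $u_2\in H_2$, the map $u_1\mapsto u_1\ast u_2$ is an injection of $H_1$ into $H_1\ast H_2 = X$ and a bijection by counting.

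To show that $P[\mathcal{H}]^-$ and $P^-[\mathcal{H}^{/^{\ast}}]$ are equivalent I would expand both expressions. Splitting the sum over $u_2\in Q$ in $P^-[\mathcal{H}^{/^{\ast}}](y_1,y_2|H_1)$ according to the partition $\mathcal{H}$ and applying the structural identity to replace $\sum_{u_1\in H_1}P(y_1|u_1\ast u_2)$ by $\sum_{x\in H_1\ast H_2}P(y_1|x)$ (independently of $u_2\in H_2$), together with $\|\mathcal{H}^{/^{\ast}}\|=\|\mathcal{H}\|$, yields precisely $P[\mathcal{H}]^-(y_1,y_2|H_1)$.

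To show that $P[\mathcal{H}]^+$ is degraded with respect to $P^+[\mathcal{H}]$ I would exhibit the explicit degrading map, namely the deterministic projection $(y_1,y_2,u_1)\mapsto(y_1,y_2,\mathrm{Proj}_{\mathcal{H}^{/^{\ast}}}(u_1))$ on the output. Summing $P^+[\mathcal{H}](y_1,y_2,u_1|H_2)$ over $u_1\in H_1$ and invoking the structural identity once more to collapse $\sum_{u_1\in H_1}P(y_1|u_1\ast u_2)$ into $\sum_{x\in H_1\ast H_2}P(y_1|x)$ produces exactly $P[\mathcal{H}]^+(y_1,y_2,H_1|H_2)$, confirming the degradation. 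The main obstacle is the structural claim, and in particular the reverse inclusion $(X/^{\ast}H_2)\ast H_2\subseteq X$, which rests on the quasigroup property of $(Q,/^{\ast})$ together with the counting argument that forces the covering map to be a bijection; everything else is bookkeeping from the definitions.
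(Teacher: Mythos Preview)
Your proposal is correct and follows essentially the same approach as the paper: both proofs reduce to the change of variables $u_1\mapsto u_1\ast u_2$ being a bijection from $H_1$ onto $H_1\ast H_2$ for each fixed $u_2\in H_2$, followed by the same straightforward expansion of the definitions. The only difference is that you isolate and prove this bijection explicitly (your ``structural claim''), whereas the paper uses it tacitly when passing from the sum over $\{x_1:\mathrm{Proj}_{\mathcal{H}}(x_1)=H_1\ast H_2\}$ to the sum over $\{x_1:\mathrm{Proj}_{\mathcal{H}^{/^{\ast}}}(x_1)=H_1\}$; your version is therefore a bit more self-contained, but the underlying argument is identical.
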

\begin{proof}
Let $(H_1,H_2,y_1,y_2)\in \mathcal{H}^{/^{\ast}}\times \mathcal{H}\times\mathcal{Y}\times\mathcal{Y}$, we have:
\begin{align*}
P[\mathcal{H}]^+(y_1,y_2,H_1|H_2)&=\frac{1}{|\mathcal{H}|}P[\mathcal{H}](y_1|H_1\ast H_2)P[\mathcal{H}](y_2|H_2)\\
&= \frac{1}{|Q|.||\mathcal{H}||}\sum_{\substack{x_1\in Q\\ \textrm{Proj}_{\mathcal{H}}(x_1) = H_1\ast H_2}} P(y_1|x_1)\sum_{\substack{x_2\in Q\\ \textrm{Proj}_{\mathcal{H}}(x_2) = H_2}} P(y_2|x_2)\\
&= \frac{1}{|Q|.||\mathcal{H}||}\sum_{\substack{x_1\in Q\\ \textrm{Proj}_{\mathcal{H}^{/^{\ast}}}(x_1) = H_1}} \sum_{\substack{x_2\in Q\\ \textrm{Proj}_{\mathcal{H}}(x_2) = H_2}} P(y_1|x_1\ast x_2)P(y_2|x_2)\\
&= \frac{1}{||\mathcal{H}||} \sum_{\substack{x_1\in Q\\ \textrm{Proj}_{\mathcal{H}^{/^{\ast}}}(x_1) = H_1}} \sum_{\substack{x_2\in Q\\ \textrm{Proj}_{\mathcal{H}}(x_2) = H_2}} P^+(y_1,y_2,x_1|x_2)\\
&= \sum_{\substack{x_1\in Q\\ \textrm{Proj}_{\mathcal{H}^{/^{\ast}}}(x_1) = H_1}} P^+[\mathcal{H}](y_1,y_2,x_1|H_2).
\end{align*}
Therefore, $P[\mathcal{H}]^+$ is degraded with respect to $P^+[\mathcal{H}]$. Now let $(H_1,y_1,y_2)\in \mathcal{H}^{/^{\ast}}\times\mathcal{Y}\times\mathcal{Y}$, we have:
\begin{align*}
P[\mathcal{H}]^-(y_1,y_2|H_1)&=\frac{1}{|\mathcal{H}|}\sum_{\substack{H_2\in \mathcal{H}}} P[\mathcal{H}](y_1|H_1\ast H_2)P[\mathcal{H}](y_2|H_2)\\
&= \frac{1}{|Q|.||\mathcal{H}||}\sum_{\substack{H_2\in \mathcal{H}}}\sum_{\substack{x_1\in Q\\ \textrm{Proj}_{\mathcal{H}}(x_1) = H_1\ast H_2}} P(y_1|x_1)\sum_{\substack{x_2\in Q\\ \textrm{Proj}_{\mathcal{H}}(x_2) = H_2}} P(y_2|x_2)\\
&= \frac{1}{|Q|.||\mathcal{H}||}\sum_{\substack{H_2\in \mathcal{H}}}\sum_{\substack{x_1\in Q\\ \textrm{Proj}_{\mathcal{H}^{/^{\ast}}}(x_1) = H_1}} \sum_{\substack{x_2\in Q\\ \textrm{Proj}_{\mathcal{H}}(x_2) = H_2}} P(y_1|x_1\ast x_2)P(y_2|x_2)\\
&= \frac{1}{|Q|.||\mathcal{H}||}\sum_{\substack{x_1\in Q\\ \textrm{Proj}_{\mathcal{H}^{/^{\ast}}}(x_1) = H_1}} \sum_{x_2\in Q} P(y_1|x_1\ast x_2)P(y_2|x_2)\\
&= \frac{1}{||\mathcal{H}||} \sum_{\substack{x_1\in Q\\ \textrm{Proj}_{\mathcal{H}^{/^{\ast}}}(x_1) = H_1}} P^-(y_1,y_2|x_1) = P^-[\mathcal{H}^{/^{\ast}}](y_1,y_2|H_1).
\end{align*}
Therefore, $P[\mathcal{H}]^-$ is equivalent to $P^-[\mathcal{H}^{/^{\ast}}]$.
\end{proof}

\begin{mydef}
Let $\mathcal{H}$ be a stable partition of $(Q,/^{\ast})$, we define the stable partitions $\mathcal{H}^-$ and $\mathcal{H}^+$, by $\mathcal{H}^{/^{\ast}}$ and $\mathcal{H}$ respectively.
\end{mydef}

\begin{mylem}
Let $B_n$ and $P_n$ be defined as in definition \ref{def1}. For each stable partition $\mathcal{H}$ of $(Q,/^{\ast})$, we define the stable partition-valued process $\mathcal{H}_n$ by:
\begin{align*}
\mathcal{H}_0 &:= \mathcal{H},\\
\mathcal{H}_{n} &:=\mathcal{H}_{n-1}^{B_n}\;\forall n\geq1.
\end{align*}
Then $I(P_n[\mathcal{H}_n])$ converges almost surely to a number in $\mathcal{L}_{\mathcal{H}}:=\big\{\log d:\; d\;\textrm{divides}\;|\mathcal{H}|\big\}$.
\end{mylem}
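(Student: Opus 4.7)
The plan is to identify $\{I(P_n[\mathcal{H}_n])\}_n$ as a bounded submartingale, apply Doob's convergence theorem, and then use Theorem \ref{mainthe11} to restrict the limit to $\mathcal{L}_{\mathcal{H}}$. For the submartingale property I would combine three ingredients. First, by Lemma \ref{Degraded}, $I(P_n^-[\mathcal{H}_n^-]) = I(P_n[\mathcal{H}_n]^-)$, since the two channels are equivalent. Second, also by Lemma \ref{Degraded}, $I(P_n^+[\mathcal{H}_n^+]) = I(P_n^+[\mathcal{H}_n]) \geq I(P_n[\mathcal{H}_n]^+)$, because $P_n[\mathcal{H}_n]^+$ is degraded with respect to $P_n^+[\mathcal{H}_n]$. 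Third, regarding $P_n[\mathcal{H}_n]$ as an ordinary channel whose input alphabet $\mathcal{H}_n$ carries the induced quasigroup operation, Remark \ref{rem1} yields the conservation identity $I(P_n[\mathcal{H}_n]^-) + I(P_n[\mathcal{H}_n]^+) = 2I(P_n[\mathcal{H}_n])$. Averaging over the uniform choice of $B_{n+1}$ gives
\[
\mathbb{E}\bigl[I(P_{n+1}[\mathcal{H}_{n+1}])\,\big|\,P_n,\mathcal{H}_n\bigr] \geq \tfrac{1}{2}I(P_n[\mathcal{H}_n]^-) + \tfrac{1}{2}I(P_n[\mathcal{H}_n]^+) = I(P_n[\mathcal{H}_n]).
\]
Since $I(P_n[\mathcal{H}_n]) \leq \log|\mathcal{H}_n| = \log|\mathcal{H}|$ is uniformly bounded, Doob's theorem gives almost sure convergence to some limit $I_\infty$.

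To characterize $I_\infty$, I would invoke Theorem \ref{mainthe11}: almost surely, for every $\delta>0$ and all sufficiently large $n$, there exists a stable partition $\mathcal{H}_{s_n}$ of $(Q,/^{\ast})$ with
\[
\left|\,I(P_n[\mathcal{H}_n]) - \log\frac{|\mathcal{H}_{s_n}|\cdot||\mathcal{H}_{s_n}\wedge\mathcal{H}_n||}{||\mathcal{H}_n||}\,\right| < \delta.
\]
Using $|\mathcal{H}'|\cdot||\mathcal{H}'||=|Q|$ for any balanced partition, the argument of the logarithm rewrites as $\dfrac{|\mathcal{H}_{s_n}|\cdot|\mathcal{H}_n|}{|\mathcal{H}_{s_n}\wedge\mathcal{H}_n|}$, which coincides with $\exp\bigl(I(\mathrm{Proj}_{\mathcal{H}_{s_n}}(X);\mathrm{Proj}_{\mathcal{H}_n}(X))\bigr)$ for $X$ uniform on $Q$, since the pair of projections is equivalent to the projection onto $\mathcal{H}_{s_n}\wedge\mathcal{H}_n$ and all three marginals are uniform.

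Finally, I would argue that this ratio belongs to $\mathcal{L}_{\mathcal{H}}$. Setting $k := |\mathcal{H}_{s_n}\wedge\mathcal{H}_n|/|\mathcal{H}_{s_n}|$, which is a positive integer because the meet refines $\mathcal{H}_{s_n}$ into equally sized parts, the ratio equals $|\mathcal{H}_n|/k = |\mathcal{H}|/k$. To conclude, I must show that $k$ divides $|\mathcal{H}|$; equivalently, that the number of $\mathcal{H}_n$-classes intersecting a given $\mathcal{H}_{s_n}$-class divides $|\mathcal{H}_n|$. This is the main obstacle, and I would attack it via the stability guaranteed by Lemma \ref{lemWedge}: since $\mathcal{H}_{s_n}\wedge\mathcal{H}_n$ is itself a stable (hence balanced) partition, the incidence pattern between $\mathcal{H}_{s_n}$ and $\mathcal{H}_n$ is biregular with common intersection size $||\mathcal{H}_{s_n}\wedge\mathcal{H}_n||$, and the divisibility should fall out of a double-counting argument using this biregularity together with the fact that both partitions share the quasigroup structure of $(Q,/^{\ast})$. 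Once this holds, the approximating values lie in the finite set $\mathcal{L}_{\mathcal{H}}$; letting $\delta \to 0$ then forces $I_\infty \in \mathcal{L}_{\mathcal{H}}$ almost surely.
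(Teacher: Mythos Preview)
Your approach is essentially the paper's: submartingale via Lemma~\ref{Degraded} and Remark~\ref{rem1}, Doob's theorem for convergence, then Theorem~\ref{mainthe11} to locate the limit in $\mathcal{L}_{\mathcal{H}}$.

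One correction is needed. Theorem~\ref{mainthe11} is a density-$1$ statement about $\{-,+\}^n$; it does not yield that the approximation holds almost surely ``for all sufficiently large $n$'' along the random path $(B_1,\ldots,B_n)$. The paper handles this more carefully: from $\tfrac{1}{2^n}|D_{n,\delta}|\to 1$ it deduces that almost surely, for every $\delta>0$ and every $n_0$, there \emph{exists} $n>n_0$ with $(B_1,\ldots,B_n)\in D_{n,\delta}$. Since the submartingale has already converged to some $x$, hitting the good set infinitely often is enough: pick such an $n$ with $|I(P_n[\mathcal{H}_n])-x|<\delta$ as well, and conclude $d(x,\mathcal{L}_{\mathcal{H}})<2\delta$.

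On the divisibility you flag as the main obstacle, the paper does not set up a biregularity or double-counting argument. It writes the ratio in the form $\dfrac{|\mathcal{H}_n|\cdot||\mathcal{H}'\wedge\mathcal{H}_n||}{||\mathcal{H}'||}$ (your $\mathcal{H}_{s_n}$ is the paper's $\mathcal{H}'$) and simply observes that, because $\mathcal{H}'\wedge\mathcal{H}_n$ is a balanced refinement of $\mathcal{H}'$, the integer $||\mathcal{H}'||/||\mathcal{H}'\wedge\mathcal{H}_n||$ equals $|\mathcal{H}_n|$ divided by this ratio, so the ratio divides $|\mathcal{H}_n|=|\mathcal{H}|$. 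In other words, the paper records exactly the fact you already have ($|\mathcal{H}|/d=k\in\mathbb{Z}$) and declares the conclusion, without separately arguing that $d$ itself is an integer; so your proposed extra step, while reasonable to worry about, is not something the paper supplies either.
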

\begin{proof}
Since $P_{n}[\mathcal{H}_n]^-$ is equivalent to $P_{n}^-[\mathcal{H}_n^{/^{\ast}}]$ and $P_{n}[\mathcal{H}_n]^+$ is degraded with respect to $P_{n}^+[\mathcal{H}_n]$ (lemma \ref{Degraded}), we have:
\begin{align*}
\mathbb{E}\Big( I(P_{n+1} &[\mathcal{H}_{n+1}])\Big|P_n\Big)=\frac{1}{2}I(P_n^{-}[\mathcal{H}_{n}^{/^{\ast}}])+\frac{1}{2}I(P_n^{+}[\mathcal{H}_{n}])\geq \frac{1}{2}I(P_n[\mathcal{H}_{n}]^-)+\frac{1}{2}I(P_n[\mathcal{H}_{n}]^+)=I(P_n[\mathcal{H}_n]).
\end{align*}
This implies that the process $I(P_n[\mathcal{H}_n])$ is a sub-martingale and therefore it converges almost surely. Let $\delta>0$, and define $D_{l,\delta}$ as in \eqref{Dl}, we have shown that $\displaystyle\lim_{n\longrightarrow\infty}\frac{1}{2^n}|D_{n,\delta}|=1$. It is easy to see that almost surely, for every $\delta>0$ and for every $n_0>0$ there exists $n>n_0$ such that $(B_1,\ldots,B_n)\in D_{l,\delta}$.

Let $B_n$ be a realization in which $I(P_n[\mathcal{H}_n])$ converges to a limit $x$, and in which for every $\delta>0$ and for every $n_0>0$ there exists $n>n_0$ such that $(B_1,\ldots,B_n)\in D_{n,\delta}$. Let $\delta>0$ and let $n_0>0$ be chosen such that $|I(P_n[\mathcal{H}_n])-x|<\delta$ for every $n>n_0$. Choose $n>n_0$ such that $(B_1,\ldots,B_n)\in D_{n,\delta}$, this means that there exists a stable partition $\mathcal{H}'$ of $(Q,/^{\ast})$ such that $$\Big| I(P_n[\mathcal{H}_n])-\log\frac{|\mathcal{H}'|.||\mathcal{H}'\wedge\mathcal{H}_n||}{||\mathcal{H}_n||}\Big|<\delta.$$
Therefore,
$\displaystyle\Big| x-\log\frac{|\mathcal{H}_n|.||\mathcal{H}'\wedge\mathcal{H}_n||}{||\mathcal{H}'||}\Big|<2\delta$, which implies that $\displaystyle\Big| x-\log\frac{|\mathcal{H}'|.||\mathcal{H}'\wedge\mathcal{H}_n||}{||\mathcal{H}_n||}\Big|$ since $|Q|=|\mathcal{H}'|.||\mathcal{H}'||=|\mathcal{H}_n|.||\mathcal{H}_n||$.

By noticing that $\frac{|\mathcal{H}_n|.||\mathcal{H}'\wedge\mathcal{H}_n||}{||\mathcal{H}'||}$ divides $|\mathcal{H}_n|=|\mathcal{H}|$, we conclude that $d(x,\mathcal{L}_{\mathcal{H}})<2\delta$ for every $\delta>0$. Therefore, $x\in \mathcal{L}_{\mathcal{H}}$.
\end{proof}

\begin{mylem}
Let $\displaystyle P:Q\rightarrow \mathcal{Y}$ be an ordinary channel where $Q$ is a quasigroup with $|Q|\geq 2$. For any stable partition $\mathcal{H}$ of $(Q,/^{\ast})$, we have:
\begin{align*}
\displaystyle\lim_{n\to\infty}\frac{1}{2^n}\bigg|\Big\{s\in\{-,+\}^n:\; \exists \mathcal{H}\;&\textrm{a stable partition of $(Q,/^{\ast})$},\\
&\;\;\;\;\;\;\;\;\;\;\;\;\;\;\;\;\;\;\;\;\;I(P^s[\mathcal{H}])>\log |\mathcal{H}|-\epsilon,\;Z(P^s[\mathcal{H}])\geq2^{-{2^{n\beta}}}\Big\}\bigg|=0,
\end{align*}
for any $\displaystyle 0<\epsilon< \log 2$ and any $0<\beta<\frac{1}{2}$.
\label{batta}
\end{mylem}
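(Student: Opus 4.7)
The plan is to combine a finite union bound over stable partitions, a Bhattacharyya recursion derived from Lemma~\ref{Degraded}, and the almost-sure convergence of $I(P_n[\mathcal{H}_n])$ (from the preceding lemma) with the standard Ar\i kan--Telatar super-exponential decay argument.

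First I would reduce to a fixed initial orbit. The set of stable partitions of $(Q,/^{\ast})$ is finite, and the map $\mathcal{H}\mapsto\mathcal{H}^{/^{\ast}}$ is a bijection on this set, so every partition appearing as $\mathcal{H}_n(s)$ in the forward orbit $\mathcal{H}_0=\mathcal{H}$, $\mathcal{H}_i=\mathcal{H}_{i-1}^{B_i}$ (with $\mathcal{H}^+=\mathcal{H}$, $\mathcal{H}^-=\mathcal{H}^{/^{\ast}}$) is the endpoint of the orbit of a unique initial $\mathcal{H}$. A union bound therefore reduces the claim to showing, for each initial stable partition $\mathcal{H}$,
\[
\frac{1}{2^n}\bigl|\{s\in\{-,+\}^n:I(P^s[\mathcal{H}_n(s)])>\log|\mathcal{H}|-\epsilon,\;Z(P^s[\mathcal{H}_n(s)])\ge 2^{-2^{n\beta}}\}\bigr|\to 0.
\]

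Next I would derive a Bhattacharyya recursion for $Z_n:=Z(P_n[\mathcal{H}_n])$. Lemma~\ref{Degraded} says that $P_{n+1}[\mathcal{H}_{n+1}]$ is equivalent to $P_n[\mathcal{H}_n]^-$ on a $-$ step and is an upgrade of $P_n[\mathcal{H}_n]^+$ on a $+$ step, so in both cases $Z_{n+1}\le Z(P_n[\mathcal{H}_n]^{B_{n+1}})$. Applying a direct Cauchy--Schwarz computation to the defining formulas of $P[\mathcal{H}]^{\pm}$, one obtains the familiar bounds $Z(W^+)\le Z(W)^2$ and $Z(W^-)\le K\cdot Z(W)$ with $K=K(|Q|)$, and hence $Z_{n+1}\le Z_n^2$ on $+$ steps and $Z_{n+1}\le K Z_n$ on $-$ steps.

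Now for the probabilistic analysis. By the preceding lemma, $I(P_n[\mathcal{H}_n])$ converges almost surely to a limit in $\mathcal{L}_{\mathcal{H}}=\{\log d:d\mid|\mathcal{H}|\}$. Since $\epsilon<\log 2$, the only value in $\mathcal{L}_{\mathcal{H}}$ consistent in the limit with $I(P_n[\mathcal{H}_n])>\log|\mathcal{H}|-\epsilon$ is $\log|\mathcal{H}|$ itself, because every other element of $\mathcal{L}_{\mathcal{H}}$ is at most $\log(|\mathcal{H}|/2)=\log|\mathcal{H}|-\log 2$. Hence on the event $\{\lim_n I(P_n[\mathcal{H}_n])<\log|\mathcal{H}|\}$ the condition on $I$ fails eventually; on its complement, continuity of the entropy forces $Z_n\to 0$, and then the standard Ar\i kan--Telatar argument (choose the first $m$ at which $Z_m$ falls below a small threshold, apply Chernoff concentration to the number of $+$ signs among $B_{m+1},\ldots,B_n$, and iterate the squaring/linear recursion) gives $Z_n\le 2^{-2^{n\beta}}$ eventually for every $\beta<\tfrac{1}{2}$, so the condition on $Z$ fails. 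Bounded convergence applied to the indicator of the bad event then delivers the claim.

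The main technical obstacle is the squaring inequality $Z(P[\mathcal{H}]^+)\le Z(P[\mathcal{H}])^2$ on the projected channel, because the operation $H_1\ast H_2$ used in the definition of $P[\mathcal{H}]^+$ maps $\mathcal{H}^{/^{\ast}}\times\mathcal{H}$ into $\mathcal{H}$ rather than endowing $\mathcal{H}$ with a quasigroup law. The Cauchy--Schwarz step requires checking carefully, using that left-translations of $(Q,\ast)$ are bijections and $\mathcal{H}$ is stable, that the pairing $H_1\mapsto(H_1\ast H,H_1\ast H')$ remains injective with distinct coordinates whenever $H\ne H'$. A clean fallback, if that proves awkward, is to avoid the projected recursion entirely by combining the degradation inequality $Z(P_n^+[\mathcal{H}_n])\le Z(P_n[\mathcal{H}_n]^+)$ with the standard unprojected bound $Z(P_n^+)\le Z(P_n)^2$, transferring the rate from the unprojected polarization process at the cost of extra bookkeeping on which stable partition captures each time step.
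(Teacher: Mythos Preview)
Your proposal is correct and matches the paper's proof essentially line for line: reduce by a finite union over stable partitions (using that $\mathcal{H}\mapsto\mathcal{H}^{/^{\ast}}$ is a bijection on them), feed Lemma~\ref{Degraded} into a Bhattacharyya recursion for $Z(P_n[\mathcal{H}_n])$, combine with the almost-sure convergence of $I(P_n[\mathcal{H}_n])$ to $\mathcal{L}_{\mathcal{H}}$ from the preceding lemma, and invoke the Ar\i kan--Telatar rate argument. The paper dispatches your ``main technical obstacle'' simply by citing \c{S}a\c{s}o\u{g}lu's Lemma~3.5, whose proof needs only that $H_2\mapsto H_1\ast H_2$ is a bijection $\mathcal{H}\to\mathcal{H}$ for each fixed $H_1\in\mathcal{H}^{/^{\ast}}$ and dually, both immediate from Lemma~\ref{lemarb}; the squaring bound carries a factor $|\mathcal{H}|-1$ rather than $1$, but this is harmless for the rate argument, and your fallback is not needed.
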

\begin{proof}
Let $\displaystyle 0<\epsilon< \log 2$ and $0<\beta<\frac{1}{2}$, and let $\mathcal{H}$ be a stable partition of $(Q,/^{\ast})$. $I(P_n[\mathcal{H}_n])$ converges almost surely to an element in $\mathcal{L}_{\mathcal{H}}$. Due to the relations between the quantities $I(P)$ and $Z(P)$ (see \emph{proposition 3.3} of \cite{SasogluThesis}) we can see that $Z(P_n[\mathcal{H}_n])$ converges to 0 if and only if $I(P_n[\mathcal{H}_n])$ converges to $\log|\mathcal{H}|$, and there is a number $z_0>0$ such that $\liminf Z(P_n[H])>z_0$ whenever $I(P_n[H])$ converges to a number in $\mathcal{L}_{\mathcal{H}}$ other than $\log|\mathcal{H}|$. Therefore, we can say that almost surely, we have:
$$\lim Z(P_n[\mathcal{H}_n])=0\;\;\textrm{or}\;\;\liminf Z(P_n[H])>z_0$$

$Z(P_n^+[\mathcal{H}_n^+])\leq Z(P_n[\mathcal{H}_n]^+)$ since $P_n[\mathcal{H}_n]^+$ is degraded with respect to $P_n^+[\mathcal{H}_n^+]$, and $Z(P_n^-[\mathcal{H}_n^-])=Z(P_n[\mathcal{H}_n]^-)$ since $P_n[\mathcal{H}_n]^-$ and $P_n^-[\mathcal{H}_n^-]$ are equivalent (see lemma \ref{Degraded}). From lemma 3.5 of \cite{SasogluThesis} we have:
\begin{itemize}
\item $Z(P_n[\mathcal{H}_n]^-)\leq \big(|\mathcal{H}|^2-|\mathcal{H}|+1\big) Z(P_n[\mathcal{H}_n])$.
\item $Z(P_n[\mathcal{H}_n]^+)\leq \big(|\mathcal{H}|-1\big)Z(P_n[\mathcal{H}_n])^2$.
\end{itemize}
Therefore, we have $Z(P_n^-[\mathcal{H}_n])\leq K.Z(P_n[\mathcal{H}_n])$ and $Z(P_n^+[\mathcal{H}_n])\leq K.Z(P_n[\mathcal{H}_n])^2$, where $K$ is equal to $\big(|\mathcal{H}|^2-|\mathcal{H}|+1\big)$. By applying exactly the same techniques that were used to prove theorem 3.5 of \cite{SasogluThesis} we get: $$\displaystyle\lim_{n\to\infty}\textrm{Pr}\Big(I(P_n[\mathcal{H}_n])>\log |\mathcal{H}|-\epsilon, Z(P_n[\mathcal{H}_n])\geq2^{-{2^{n\beta}}}\Big)=0$$ 
But this is true for all stable partitions $\mathcal{H}$. Therefore,
\begin{align*}
\displaystyle\lim_{n\to\infty} \frac{1}{2^n}\bigg|\Big\{s\in\{-,+\}^n:\; \exists \mathcal{H}\; &\textrm{a stable partition of $(Q,/^{\ast})$},\\
&\;\;\;\;\;\;\;\;\;\;\;\;\;\;\;\;\;\;\;\;\;I(P^s[\mathcal{H}^s])>\log |\mathcal{H}|-\epsilon,\; Z(P^s[\mathcal{H}^s])\geq2^{-{2^{n\beta}}}\Big\}\bigg|=0.
\end{align*}
By noticing that for each $s\in\{-,+\}^n$, there exists a stable partition $\mathcal{H}_s$ such that $\mathcal{H}=\mathcal{H}_s^s$, we conclude:
\begin{align*}
\displaystyle\lim_{n\to\infty}\frac{1}{2^n}\bigg|\Big\{s\in\{-,+\}^n:\; \exists \mathcal{H}\; &\textrm{a stable partition of $(Q,/^{\ast})$},\\
&\;\;\;\;\;\;\;\;\;\;\;\;\;\;\;\;\;\;\;\;\;I(P^s[\mathcal{H}])>\log |\mathcal{H}|-\epsilon,\;Z(P^s[\mathcal{H}])\geq2^{-{2^{n\beta}}}\Big\}\bigg|=0.
\end{align*}
\end{proof}

\begin{mythe}
The convergence of $P_n$ to projection channels is almost surely fast:
\begin{align*}
\lim_{n\to\infty} \frac{1}{2^n}\bigg|\big\{ s\in\{-,+\}^n: &\exists \mathcal{H}\;\emph{a stable partition of $(Q,/^{\ast})$},\\
& \big|I(P^s)-\log|\mathcal{H}|\big|<\epsilon, \big|I(P^s[\mathcal{H}])-\log|\mathcal{H}|\big|<\epsilon,\; Z(P^s[\mathcal{H}])<2^{-2^{\beta n}} \big\}\bigg| = 1,
\end{align*}
for any $\displaystyle 0<\epsilon<\log 2$, and any $0<\beta<\frac{1}{2}$.
\label{mainthe2}
\end{mythe}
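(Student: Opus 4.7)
The plan is to combine Theorem \ref{mainthe11} with Lemma \ref{batta} in a direct ``almost every element is in the intersection'' argument. The key observation is that Theorem \ref{mainthe11} already produces a candidate stable partition $\mathcal{H}_s$ that simultaneously controls $I(P^s)$ and $I(P^s[\mathcal{H}_s])$; I then use Lemma \ref{batta} to upgrade the mutual-information bound to a Battacharyya bound for that same partition.

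First, I would reduce to a convenient choice of constants. Given $0<\epsilon<\log 2$ and $0<\beta<1/2$, pick any $\delta$ with $0<\delta<\epsilon$ and apply Theorem \ref{mainthe11} at this $\delta$. This produces a set $D_n\subset\{-,+\}^n$ with $|D_n|/2^n\to 1$ such that for every $s\in D_n$ there exists a stable partition $\mathcal{H}_s$ of $(Q,/^\ast)$ with
\[
\Big|I(P^s[\mathcal{H}'])-\log\tfrac{|\mathcal{H}_s|\cdot\|\mathcal{H}_s\wedge\mathcal{H}'\|}{\|\mathcal{H}'\|}\Big|<\delta
\]
for every stable partition $\mathcal{H}'$. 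Next I specialize this to two particular choices of $\mathcal{H}'$. Taking the finest stable partition $\mathcal{H}'=\{\{x\}:x\in Q\}$ (for which $P^s[\mathcal{H}']$ is equivalent to $P^s$, $\|\mathcal{H}'\|=1$, and $\|\mathcal{H}_s\wedge\mathcal{H}'\|=1$) gives $|I(P^s)-\log|\mathcal{H}_s||<\delta$. Taking $\mathcal{H}'=\mathcal{H}_s$ gives $|I(P^s[\mathcal{H}_s])-\log|\mathcal{H}_s||<\delta$. Since $\delta<\epsilon$, the first two required inequalities hold for every $s\in D_n$.

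Second, I apply Lemma \ref{batta} with the given $\epsilon$ and $\beta$. Let
\[
B_n:=\Big\{s\in\{-,+\}^n:\exists\,\mathcal{H}\text{ stable},\ I(P^s[\mathcal{H}])>\log|\mathcal{H}|-\epsilon\text{ and }Z(P^s[\mathcal{H}])\geq 2^{-2^{\beta n}}\Big\}.
\]
Lemma \ref{batta} says $|B_n|/2^n\to 0$. Now consider any $s\in D_n\setminus B_n$. The partition $\mathcal{H}_s$ provided by Theorem \ref{mainthe11} satisfies $I(P^s[\mathcal{H}_s])>\log|\mathcal{H}_s|-\delta>\log|\mathcal{H}_s|-\epsilon$; since $s\notin B_n$, this forces $Z(P^s[\mathcal{H}_s])<2^{-2^{\beta n}}$. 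Thus all three conditions of the theorem hold for $s$ with the same partition $\mathcal{H}_s$.

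Finally, I conclude by a union-bound: $|D_n\setminus B_n|/2^n\geq |D_n|/2^n-|B_n|/2^n\to 1-0=1$, and the set in the statement contains $D_n\setminus B_n$, so its relative size also tends to $1$. There is no real obstacle here beyond the bookkeeping: the only conceptually delicate point is making sure one uses the \emph{same} stable partition $\mathcal{H}_s$ for the entropy bounds and the Battacharyya bound, which is handled by applying Lemma \ref{batta} in its ``for the same $\mathcal{H}$'' form to the specific $\mathcal{H}_s$ produced by Theorem \ref{mainthe11}.
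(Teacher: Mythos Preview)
Your proof is correct and follows essentially the same approach as the paper: define a ``good'' set (your $D_n$, the paper's $E_1$) from the polarization theorem, a ``bad'' set (your $B_n$, the paper's $E_0$) from Lemma~\ref{batta}, and observe that $D_n\setminus B_n$ lies in the target set. The only cosmetic difference is that the paper cites Theorem~\ref{mainthe1} directly to obtain $|I(P^s)-\log|\mathcal{H}_s||<\epsilon$ and $|I(P^s[\mathcal{H}_s])-\log|\mathcal{H}_s||<\epsilon$, whereas you invoke the stronger Theorem~\ref{mainthe11} and specialize $\mathcal{H}'$ to the finest partition and to $\mathcal{H}_s$ to recover the same two inequalities.
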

\begin{proof}
Let $\displaystyle 0<\epsilon<\log 2$, and $0<\beta<\frac{1}{2}$. Define:
\begin{align*}
E_0=\Big\{&s\in\{-,+\}^n:\;\exists\mathcal{H}\; \textrm{a stable partition of } (Q,/^{\ast}),\; I(P^s[\mathcal{H}])>\log |\mathcal{H}|-\epsilon,\;Z(P^s[\mathcal{H}])\geq2^{-{2^{\beta n}}}\Big\},
\end{align*}

\begin{align*}
E_1 =\Big\{&s\in\{-,+\}^n: \exists \mathcal{H}\;\textrm{a stable partition of $(Q,/^{\ast})$},\big|I(P^s)-\log|\mathcal{H}|\big|<\epsilon, \big|I(P^s[\mathcal{H}])-\log|\mathcal{H}|\big|<\epsilon \Big\},
\end{align*}

\begin{align*}
E_2 =\bigg\{ s\in\{-,+\}^n: &\exists \mathcal{H}\;\textrm{a stable partition of $(Q,/^{\ast})$},\\
& \big|I(P^s)-\log|\mathcal{H}|\big|<\epsilon, \big|I(P^s[\mathcal{H}])-\log|\mathcal{H}|\big|<\epsilon,\;Z(P^s[\mathcal{H}])<2^{-2^{\beta n}} \bigg\}.
\end{align*}

It is easy to see that $E_1 \setminus E_0 \subset E_2$ and $|E_2|\geq|E_1|-|E_0|$. By theorem \ref{mainthe1} and lemma \ref{batta} we get:
$$1\geq\lim_{n\to\infty}\frac{1}{2^n}|E_2|\geq\lim_{n\to\infty}\frac{1}{2^n}\big(|E_1|-|E_0|\big)=1-0=1.$$
\end{proof}

\section{Polar code construction}

Choose $\displaystyle 0<\epsilon<\log 2$ and $0<\beta<\beta'<\frac{1}{2}$, let $n$ be an integer such that $$|Q|2^n2^{-2^{\beta' n}}<2^{-2^{\beta n}}\;\;\textrm{and}\;\;\frac{1}{2^n}|E_n|>1-\frac{\epsilon}{2\log |Q|},$$ where

\begin{align*}
E_n=\bigg\{ s\in\{-,+\}^n: &\exists \mathcal{H}\;\textrm{a stable partition of $(Q,/^{\ast})$},\\
& \big|I(P^s)-\log|\mathcal{H}|\big|<\frac{\epsilon}{2}, \big|I(P^s[\mathcal{H}])-\log|\mathcal{H}|\big|<\frac{\epsilon}{2},\;Z(P^s[\mathcal{H}])<2^{-2^{\beta' n}} \bigg\}.
\end{align*}

Such an integer exists due to theorem \ref{mainthe2}. A polar code is constructed as follows: If $s\notin E_n$, let $U_s$ be a frozen symbol, i.e., we suppose that the receiver knows $U_s$. On the other hand, if $s\in E_n$, there exists a stable partition $\mathcal{H}_s$ of $G$, such that $\big|I(P^s)-\log|\mathcal{H}_s|\big|<\frac{\epsilon}{2}$, $\big|I(P^s[\mathcal{H}_s])-\log|\mathcal{H}_s|\big|<\frac{\epsilon}{2}$, and $Z(P^s[\mathcal{H}_s])<2^{-2^{\beta' n}}$. Let $f_s:\mathcal{H}_s\longrightarrow G$ be a frozen mapping (in the sense that the receiver knows $f_s$) such that $f_s(H)\in H$ for all $H\in \mathcal{H}_s$, we call such mapping \emph{a section mapping}. We choose $U_s'$ uniformly in $\mathcal{H}_s$ and we let $U_s=f_s(U_s')$. Note that if the receiver can determine $\textrm{Proj}_{\mathcal{H}_s}(U_s)=U_s'$ accurately, then he can also determine $U_s$ since he knows $f_s$.

Since we are free to choose any value for the frozen symbols and for the section mappings, we will analyse the performance of the polar code averaged on all the possible choices of the frozen symbols and for the section mappings. Therefore, $U_s$ are independent random variables, uniformly distributed in $Q$. If $s\notin E_n$, the receiver knows $U_s$ and there is nothing to decode, and if $s\in E_n$, the receiver has to determine $\textrm{Proj}_{\mathcal{H}_s}(U_s)$ in order to successfully determine $U_s$.

We associate the set $\{-,+\}^n$ with the strict total order $<$ defined as $(s_1,\ldots,s_n)<(s_1',\ldots,s_n')$ if and only if there exists $i\in\{1,\ldots,n\}$ such that $s_i=-$, $s_i'=+$ and $s_j=s_j'\;\forall j>i$.

\subsection{Encoding}
Let $\{P_s\}_{s\in\{-,+\}^n}$ be a set of $2^n$ independent copies of the channel $P$. $P_s$ should not be confused with $P^s$: $P_s$ is a copy of the channel $P$ and $P^s$ is a polarized channel obtained from $P$ as before.

Define $U_{s_1,s_2}$ for $s_1\in\{-,+\}^{l}, s_2\in\{-,+\}^{n-l}$, $0\leq l\leq n$, inductively as:
\begin{itemize}
  \item $U_{\o,s}=U_{s}$ if $l=0$, $s\in\{-,+\}^n$.
  \item $U_{(s_1;-),s_2}=U_{s_1,(s_2;+)}\ast U_{s_1,(s_2;-)}$ if $l>0$, $s_1\in\{-,+\}^{l-1}$, $s_2\in\{-,+\}^{n-l}$.
  \item $U_{(s_1;+),s_2}=U_{s_1,(s_2;+)}$ if $l>0$, $s_1\in\{-,+\}^{l-1}$, $s_2\in\{-,+\}^{n-l}$.
\end{itemize}

We send $U_{s,\o}$ through the channel $P_s$ for all $s\in\{-,+\}^n$. Let $Y_s$ be the output of the channel $P_s$, and let $Y=\{Y_s\}_{s\in\{-,+\}^n}$. We can prove by induction on $l$ that the channel $U_{s_1,s_2}\rightarrow \big(\{Y_s\}_{s\;\textrm{has}\;s_1\;\textrm{as a prefix}},\{U_{s_1,s'}\}_{s'<s_2}\big)$ is equivalent to the channel $P^{s_2}$. In particular, the channel $U_{s}\rightarrow \big(Y,\{U_{s'}\}_{s'<s}\big)$ is equivalent to the channel $P^s$. Figure 1 is an illustration of a polar code construction for $n=2$ (i.e., the block-length is $N=2^2=4$).

\begin{figure}[h]
	\begin{center}
	\includegraphics[scale=0.27]{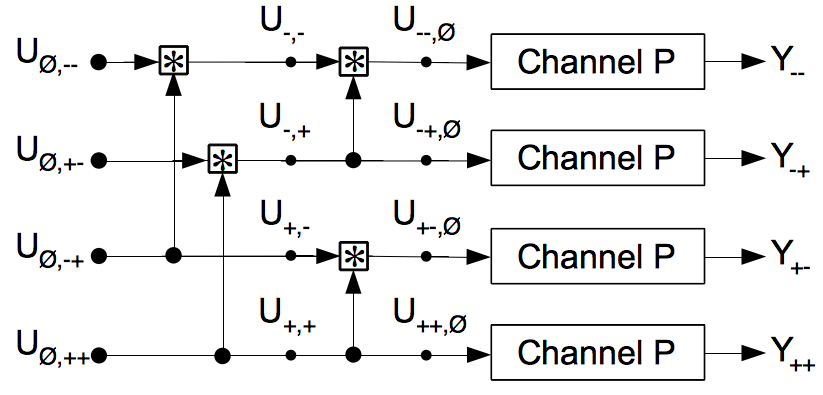}
  \caption{Polar code construction for $n=2$.}  
\label{fig:intro}
\end{center}
\end{figure}

\subsection{Decoding}
If $s\notin E_n$ then the receiver knows $U_s$, there is nothing to decode. Suppose that $s\in E_n$, if we know $\{U_{s'}\}_{s'<s}$ then we can estimate $\textrm{Proj}_{\mathcal{H}_s}(U_s)$ from $\big(Y,\{U_{s'}\}_{s'<s}\big)$ by the maximum likelihood decoder of $P^s[\mathcal{H}_s]$. After that, we estimate $U_s=f_s(\textrm{Proj}_{\mathcal{H}_s}(U_s))$. This motivates the following successive cancellation decoder:

\begin{itemize}
  \item $\hat{U}_s=U_s$ if $s\notin E_n$.
  \item $\hat{U}_s=\mathcal{D}_s(Y,\{\hat{U}_{s'}\}_{s'<s})$ if $s\in E_n$.
\end{itemize}

Where $\mathcal{D}_s(Y,\{U_{s'}\}_{s'<s})$ is the estimate of $U_s$ obtained from $(Y,\{U_{s'}\}_{s'<s})$ by the above procedure.

\subsection{Performance of polar codes}

If $s\in E_n$, the probability of error in estimating $U_s$ is the probability of error in estimating $\textrm{Proj}_{\mathcal{H}_s}(U_s)$ using the maximum likelihood decoder, which is upper bounded by $$|\mathcal{H}_s|.Z(P^s[\mathcal{H}_s])<|Q|2^{-2^{\beta'n}}.$$

Note that $\mathcal{D}_s(Y,\{U_{s'}\}_{s'<s})=U_s\;(\forall s\in E_n) \Leftrightarrow \mathcal{D}_s(Y,\{\hat{U}_{s'}\}_{s'<s})=U_s\;(\forall s\in E_n)$. Therefore, the probability of error of the above successive cancellation decoder is upper bounded by
\begin{align*}
\sum_{s\in E_n} &\textrm{P}\big(\mathcal{D}_s(Y,\{U_{s'}\}_{s'<s})\neq U_{s}\big) < |E_n||Q|2^{-2^{\beta'n}}\leq |Q|2^n2^{-2^{\beta'n}} < 2^{-2^{\beta n}}.
\end{align*}
This upper bound was calculated on average over a random choice of the frozen symbols and of the section mappings. Therefore, there exists at least one choice of the frozen symbols and of the section mappings for which the upper bound of the probability of error still holds.

We should note here that unlike the case of binary input symmetric memoryless channels where the frozen symbols can be chosen arbitrarily, the choice of the frozen symbols and section mappings in our construction of polar codes cannot be arbitrary. The code designer should make sure that his choice of the frozen symbols and section mappings actually yields the desirable probability of error.

The last thing to discuss is the rate of polar codes. The rate at which we are communicating is $\displaystyle R=\frac{1}{2^n}\sum_{s\in E_n}\log |\mathcal{H}_s|$. On the other hand, we have $\big|I(P^s)-\log|\mathcal{H}_s|\big|<\frac{\epsilon}{2}$ for all $s\in E_n$. And since we have $\displaystyle\sum_{s\in\{-,+\}^n} I(P^s)=2^nI(P)$, we conclude:
\begin{align*}
I(P)&=\frac{1}{2^n}\sum_{s\in \{-,+\}^n} I(P^s)= \frac{1}{2^n}\sum_{s\in E_n}I(P^s) + \frac{1}{2^n}\sum_{s\in E_n^c}I(P^s)<\frac{1}{2^n}\sum_{s\in E_n}\Big(\log|\mathcal{H}_s|+\frac{\epsilon}{2}\Big) + \frac{1}{2^n}|E_n^c|\log |Q| \\
&< R + \frac{1}{2^n}|E_n|\frac{\epsilon}{2} + \frac{\epsilon}{2\log |Q|}\log |Q|\leq R+\frac{\epsilon}{2}+\frac{\epsilon}{2}=R+\epsilon.
\end{align*}

To this end we have proven the following theorem which is the main result of this paper:

\begin{mythe}
Let $P:Q\longrightarrow\mathcal{Y}$ be a channel where the input alphabet has a  quasigroup structure. For every $\epsilon>0$ and for every $0<\beta<\frac{1}{2}$, there exists a polar code of length $N$ having a rate $R>I(P)-\epsilon$ and a probability of error $P_e<2^{-N^\beta}$.
\end{mythe}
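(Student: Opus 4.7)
The plan is to package the construction and the calculations that were already carried out in Section~5 into a single clean argument, setting $N=2^n$ so that the doubly-exponential-in-$n$ error bound $2^{-2^{\beta n}}$ translates into the stretched-exponential bound $2^{-N^{\beta}}$. The key ingredient is Theorem~\ref{mainthe2}, which simultaneously gives mutual-information polarization and fast Battacharyya polarization toward deterministic projection channels.

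Given $\epsilon>0$ and $0<\beta<\tfrac{1}{2}$, first choose $\beta'\in(\beta,\tfrac{1}{2})$. Theorem~\ref{mainthe2} lets me pick $n$ large enough that simultaneously $|Q|2^{n}2^{-2^{\beta'n}}<2^{-2^{\beta n}}$ and $\frac{1}{2^{n}}|E_{n}|>1-\frac{\epsilon}{2\log|Q|}$, where $E_{n}$ is the ``good'' set defined in Section~5 (with tolerance $\epsilon/2$ and Battacharyya threshold $2^{-2^{\beta'n}}$). I then set $N=2^{n}$ and build the code exactly as in Section~5: for each $s\notin E_{n}$ declare $U_{s}$ frozen, and for each $s\in E_{n}$ pick a stable partition $\mathcal{H}_{s}$ of $(Q,/^{\ast})$ witnessing membership in $E_{n}$ together with a section mapping $f_{s}:\mathcal{H}_{s}\to Q$, encoding the information symbol $U_{s}'\in\mathcal{H}_{s}$ as $U_{s}=f_{s}(U_{s}')$. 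Encoding is the Ar\i kan-like butterfly transform defined through $\ast$, decoding is successive cancellation using the maximum-likelihood decoder for $P^{s}[\mathcal{H}_{s}]$ at every active index.

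To bound the block error probability, I use the standard $\mathrm{P}_{e}\le |\mathcal{X}|Z$ relation together with the equivalence $U_{s}\to(Y,\{U_{s'}\}_{s'<s})$ and $P^{s}$: the per-index error at $s\in E_{n}$ is at most $|\mathcal{H}_{s}|Z(P^{s}[\mathcal{H}_{s}])<|Q|2^{-2^{\beta'n}}$, and the union bound over $|E_{n}|\le 2^{n}$ active indices gives a total at most $|Q|2^{n}2^{-2^{\beta'n}}<2^{-2^{\beta n}}=2^{-N^{\beta}}$. Since this bound is obtained by averaging over a uniform random choice of frozen symbols and section mappings (so that each $U_{s}$ is uniform on $Q$ and the equivalences used above hold), a standard probabilistic-method step produces a deterministic choice of frozen symbols and section mappings attaining the same bound.

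For the rate, the code transmits $\log|\mathcal{H}_{s}|$ nats per active index, so $R=\frac{1}{2^{n}}\sum_{s\in E_{n}}\log|\mathcal{H}_{s}|$. Using $|I(P^{s})-\log|\mathcal{H}_{s}||<\epsilon/2$ for $s\in E_{n}$, the identity $\sum_{s}I(P^{s})=2^{n}I(P)$, and $I(P^{s})\le\log|Q|$ together with $|E_{n}^{c}|\le \frac{\epsilon}{2\log|Q|}2^{n}$, I obtain $I(P)<R+\epsilon$ exactly as in Section~5. The only really delicate piece is the choice of $n$: I need the two conditions on $n$ to hold simultaneously, but both are immediate from Theorem~\ref{mainthe2} since the fraction of good $s$ tends to $1$ while $2^{-2^{\beta'n}}$ decays much faster than $2^{-n}\cdot 2^{-2^{\beta n}}$, so this is not really an obstacle but merely bookkeeping.
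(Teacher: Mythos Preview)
Your proposal is correct and follows exactly the same route as the paper: it is precisely the argument of Section~5 (choice of $\beta'\in(\beta,\tfrac12)$, choice of $n$ via Theorem~\ref{mainthe2}, the frozen-symbol/section-mapping construction, the union bound on the per-index ML errors, the averaging argument to extract a deterministic code, and the rate calculation using $\sum_s I(P^s)=2^nI(P)$). The only cosmetic point is that Theorem~\ref{mainthe2} is stated for $0<\epsilon<\log 2$, so one should note that for larger $\epsilon$ the statement follows a fortiori by shrinking $\epsilon$.
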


\section{The case of groups}

\begin{mylem}
Let $(G,\ast)$ be a group, and let $\mathcal{H}$ be a stable partition of $(G,/^{\ast})$. There exists a normal subgroup of $G$ such that $\mathcal{H}$ is the quotient group of $G$ by $H$ (also denoted by $G/H$), and $\emph{Proj}_{\mathcal{H}}(x)=x\bmod H$ for all $x\in G$.
\end{mylem}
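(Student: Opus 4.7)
The plan is to take $H$ to be the block of $\mathcal{H}$ containing the identity $e$ and to verify in turn that (i) $H$ is a subgroup, (ii) every block of $\mathcal{H}$ is a coset of $H$, and (iii) $H$ is normal. Let $\mathcal{H}_1 = \mathcal{H}, \mathcal{H}_2, \ldots, \mathcal{H}_n$ be balanced partitions witnessing stability of $\mathcal{H}$ (so $\mathcal{H}_{i+1} = \mathcal{H}_i^{/^{\ast}}$), all with common block size $s := ||\mathcal{H}||$, and write $H_i$ for the block of $\mathcal{H}_i$ containing $e$. In a group, $A /^{\ast} B = AB^{-1}$ for any subsets $A, B \subseteq G$.

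For step (i), I will show $H_{i+1} = H_i$ and $HH^{-1} = H$. The set $H_i H_i^{-1} = H_i /^{\ast} H_i$ is a block of $\mathcal{H}_{i+1}$ containing $e = hh^{-1}$, so it equals $H_{i+1}$. Since $e \in H_i$, we get $H_i = H_i \cdot e \subseteq H_i H_i^{-1}$; both sides have size $s$, forcing $H_{i+1} = H_i H_i^{-1} = H_i$. Consequently $H_1 = \cdots = H_n =: H$ and $HH^{-1} = H$, which encodes both $e \in H$ and $h_1 h_2^{-1} \in H$ for all $h_1, h_2 \in H$; in particular, fixing $h_1 = e$ gives $H^{-1} \subseteq H$, hence $H = H^{-1}$. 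So $H$ is a subgroup of $G$.

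For step (ii), fix $A \in \mathcal{H}$ and $a \in A$. Then $AA^{-1} = A /^{\ast} A$ lies in $\mathcal{H}_2$, contains $aa^{-1} = e$, and hence equals $H_2 = H$. Moreover $aA^{-1} \subseteq AA^{-1}$ with $|aA^{-1}| = |A| = s = |AA^{-1}|$, so $aA^{-1} = H$ and $A = Ha$; thus every block of $\mathcal{H}$ is a coset of $H$. For step (iii), the symmetric form of Lemma \ref{lemarb} (the same proof applied to $Q = Q /^{\ast} H$) gives $\mathcal{H}_2 = \{A /^{\ast} H : A \in \mathcal{H}\} = \{AH : A \in \mathcal{H}\}$, using $H = H^{-1}$. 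Taking $A = Ha$ yields $HaH \in \mathcal{H}_2$; since $\mathcal{H}_2$ is balanced with block size $s$ and $Ha \subseteq HaH$ already has size $s$, we conclude $HaH = Ha$. This yields $aHa^{-1} \subseteq H$, and size equality forces $aHa^{-1} = H$, so $H$ is normal. With $H$ normal, $\emph{Proj}_{\mathcal{H}}(x) = Hx = xH$ is precisely $x \bmod H$ in $G/H$. The only subtle ingredient is the opening balanced-size argument that collapses $H_i H_i^{-1}$ back to $H_i$; everything subsequent is routine coset bookkeeping.
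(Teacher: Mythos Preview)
Your proof is correct. It shares with the paper the same key maneuver---take $H$ to be the block through $e$ and exploit the containment/size-equality trick (using that all $\mathcal{H}_i$ have block size $s$) to collapse $H_i H_i^{-1}$ to $H_i$---but the two arguments are organized differently after that.

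The paper does not track the chain $H_1,\dots,H_n$. It observes in one stroke that $H' \subseteq H'/^{\ast}H$ with equal sizes forces $H'/^{\ast}H = H'$ for \emph{every} block $H'$, so $\mathcal{H}^{/^{\ast}} = \mathcal{H}$ (period $1$). From this it deduces $\mathcal{H}^{\ast} = \mathcal{H}$ and then, by the same containment argument applied to $\ast$, obtains $H\ast H' = H'\ast H = H'$ for all $H'\in\mathcal{H}$; normality and the coset description fall out simultaneously from $kH = Hk$. Your route instead establishes (subgroup)/(right cosets)/(normality) as three separate steps, staying entirely on the $/^{\ast}$ side and invoking the symmetric form of Lemma~\ref{lemarb} for the last one. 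The paper's organization is a touch more economical because it never needs to separate ``blocks are cosets'' from ``$H$ is normal''; on the other hand, your argument avoids passing to the $\ast$-operation altogether and works directly with $AB^{-1}$, which some readers may find more concrete. A small remark: your iteration over all $i$ is harmless but superfluous---already $i=1$ gives $H_1 H_1^{-1} = H_2 = H_1$, which is all you use.
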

\begin{proof}
Let $H$ be the element of $\mathcal{H}$ containing the neutral element $e$ of $G$. For any $H'\in\mathcal{H}$, we have $H'=H'/^{\ast}e\subset H'/^{\ast}H$. Now because of the stability of $\mathcal{H}$, we have $|H'/^{\ast}H|=|H'|$ and so $H'/^{\ast}H=H'$ for all $H'\in\mathcal{H}$. This implies that $\mathcal{H}^{/^{\ast}}=\mathcal{H}$. Now for any $H_1\in\mathcal{H}=\mathcal{H}^{/^{\ast}}$ and $H_2\in\mathcal{H}$, there exists $H_3\in\mathcal{H}$ such that $H_1=H_3/^{\ast}H_2$, and so $H_1\ast H_2=H_3\in\mathcal{H}$. Therefore, we also have $\mathcal{H}^{\ast}=\mathcal{H}$.

Now for any $H'\in\mathcal{H}$, we have $H'=e\ast H'\subset H\ast H'\in\mathcal{H}$, $H'=H'\ast e \subset H'\ast H\in\mathcal{H}$, and $|H'|=|H\ast H'|=|H'\ast H|$, from which we conclude that $H\ast H' = H'\ast H = H'$. This implies that $H\ast H=H$, and $k\ast H = H\ast k$ for any $k\in G$. Therefore, $H$ is a normal subgroup of $G$, and $\mathcal{H}$ is the quotient subgroup of $G$ by $H$.
\end{proof}

By combining the last lemma with theorem \ref{mainthe2}, we get:

\begin{mythe}
Let $P:G\longrightarrow\mathcal{Y}$ be a channel where the input alphabet $G$ has a group structure. $P_n$ converges almost surely to ``homomorphism channels". Moreover, the convergence is almost surely fast:
\begin{align*}
\lim_{n\to\infty} \frac{1}{2^n}\bigg|\big\{ s\in\{-,+&\}^n: \exists H\;\emph{a normal subgroup of $G$},\\
& \big|I(P^s)-\log|G/H|\big|<\epsilon, \big|I(P^s[H])-\log|G/H|\big|<\epsilon,\;Z(P^s[H])<2^{-2^{\beta n}} \big\}\bigg| = 1,
\end{align*}
for any $\displaystyle 0<\epsilon<\log 2$, and any $0<\beta<\frac{1}{2}$. Where $P[H]:G/H\longrightarrow\mathcal{Y}$ is defined as:
$$P[H](y|a)=\frac{1}{|H|}\sum_{\substack{x\in G\\x \bmod H = a}}P(y|x).$$
\label{mainthe22}
\end{mythe}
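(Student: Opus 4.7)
The plan is to directly combine Theorem \ref{mainthe2} with the preceding lemma, which is exactly what the statement hints at. Theorem \ref{mainthe2} already establishes the fast polarization result in the general quasigroup setting, phrased in terms of stable partitions $\mathcal{H}$ of $(Q,/^{\ast})$. Here $Q=G$ carries a group structure, so the preceding lemma applies: every stable partition $\mathcal{H}$ of $(G,/^{\ast})$ is of the form $\{gH : g\in G\}$ for a unique normal subgroup $H\trianglelefteq G$, and $\mathrm{Proj}_{\mathcal{H}}(x)=x\bmod H$. In particular $|\mathcal{H}|=|G/H|$.

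First I would verify that the channel $P[\mathcal{H}]$ defined earlier and the channel $P[H]$ defined in the statement of Theorem \ref{mainthe22} coincide (up to the identification $\mathcal{H}\cong G/H$). This is immediate from the definitions: with $\mathrm{Proj}_{\mathcal{H}}(x)=x\bmod H$, the sum
\[
P[\mathcal{H}](y|\bar a)=\frac{1}{||\mathcal{H}||}\sum_{\substack{x\in G\\ \mathrm{Proj}_{\mathcal{H}}(x)=\bar a}}P(y|x)
\]
becomes exactly the expression given for $P[H](y|a)$ once we identify the coset $\bar a\in\mathcal{H}$ with the corresponding element $a\in G/H$, noting that $||\mathcal{H}||=|H|$. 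Consequently $I(P^s[\mathcal{H}])=I(P^s[H])$ and $Z(P^s[\mathcal{H}])=Z(P^s[H])$, and $\log|\mathcal{H}|=\log|G/H|$.

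Given the above correspondence, the set appearing in Theorem \ref{mainthe2},
\[
\bigl\{s:\exists \mathcal{H}\text{ stable partition of }(G,/^{\ast}),\,|I(P^s)-\log|\mathcal{H}||<\epsilon,\,|I(P^s[\mathcal{H}])-\log|\mathcal{H}||<\epsilon,\,Z(P^s[\mathcal{H}])<2^{-2^{\beta n}}\bigr\},
\]
is in bijective correspondence, via $\mathcal{H}\leftrightarrow H$, with the set appearing in Theorem \ref{mainthe22}. Since Theorem \ref{mainthe2} asserts that the normalized cardinality of the former tends to $1$, so does that of the latter. The almost sure convergence to ``homomorphism channels'' is then just the qualitative reading of the same fact: in the limit, the useful information extracted from $P^s$ is (up to an arbitrarily small entropy gap) precisely the coset $X_s\bmod H$ of the input under some normal subgroup $H$, which is the output of the group homomorphism $G\twoheadrightarrow G/H$.

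There is no genuine obstacle here; the only thing to be careful about is the bookkeeping identification of $\mathcal{H}$ with $G/H$ and the verification that $P[\mathcal{H}]=P[H]$ under this identification, both of which are routine consequences of the preceding lemma.
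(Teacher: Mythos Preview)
Your proposal is correct and follows exactly the approach the paper intends: the paper states the theorem immediately after the lemma with the remark ``By combining the last lemma with theorem \ref{mainthe2}, we get'' and gives no further proof. Your verification that $P[\mathcal{H}]$ coincides with $P[H]$ under the identification $\mathcal{H}\cong G/H$ is the only bookkeeping needed, and you have handled it correctly.
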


\section{Polar codes for arbitrary multiple access channels}

In this section, we construct polar codes for an arbitrary multiple access channel, where there is no constraint on the input alphabet sizes: they can be arbitrary, and possibly different from one user to another.

If we have $|\mathcal{X}_k|=p_1^{r_1}p_2^{r_2}\ldots p_{n_k}^{r_{n_k}}$, where $p_1$, \ldots, $p_{n_k}$ are prime numbers, we can assume that $\mathcal{X}_k=\mathbb{F}_{p_1}^{r_1} \mathbb{F}_{p_2}^{r_2}\ldots \mathbb{F}_{p_{n_k}}^{r_{n_k}}$, and so we can replace the $k^{th}$ user by $r_1+r_2+\ldots+r_{n_k}$ virtual users having $\mathbb{F}_{p_1}$, $\mathbb{F}_{p_2}$, \ldots, or $\mathbb{F}_{p_{n_k}}$ as input alphabet respectively. Therefore, we can assume without loss of generality that $\mathcal{X}_k=\mathbb{F}_{q_k}$ for all $k$, where $q_k$ is a prime number. Let $p_1$, $p_2$, \ldots, $p_l$ be the distinct primes which appear in $q_1$, \ldots, $q_m$, and for each $1\leq i\leq l$ let $m_i$ be the number of times $p_i$ appears in $q_1$, \ldots, $q_m$.

We adopt two notations to indicate the users and their inputs:
\begin{itemize}
\item The first notation is the usual one: we have an index $k$ taking value in $\{1,\ldots,m\}$, and the input of the $k^{th}$ user is denoted by $X_k\in\mathbb{F}_{q_k}$.
\item In the second notation, the $m_i$ users having their inputs in $\mathbb{F}_{p_i}$ will be indexed by $(i,1)$, \ldots , $(i,j)$ , \ldots , $(i,m_i)$, where $1\leq i\leq l$ and $1\leq j\leq m_i$. The input of the $(i,j)^{th}$ user is denoted by $X_{i,j}\in\mathbb{F}_{p_i}$. The vector $(X_{i,1},\ldots,X_{i,m_i})\in\mathbb{F}_{p_i}^{m_i}$ is denoted by $\vec{X}_i$.
\end{itemize}

\begin{mydef}
Let $\displaystyle P:\prod_{k=1}^m \mathbb{F}_{q_k}\rightarrow\mathcal{Y}$ be a discrete $m$-user MAC. We define the two channels $\displaystyle P^-:\prod_{k=1}^m \mathbb{F}_{q_k} \rightarrow \mathcal{Y}^2$ and $\displaystyle P^+:\prod_{k=1}^m \mathbb{F}_{q_k} \rightarrow \mathcal{Y}^2\times\prod_{k=1}^m \mathbb{F}_{q_k}$ as:
\begin{align*}
P^-(y_1,y_2|u_1^1,\ldots,u_m^1)=\frac{1}{q_1\ldots q_m}\sum_{(u_1^2,\ldots,u_m^2)\;\in\; \prod_{k=1}^m \mathbb{F}_{q_k}}
P(y_1|u_1^1+u_1^2,\ldots,u_m^1+u_m^2)P(y_2|u_{1}^2,\ldots,u_m^2),
\end{align*}
\begin{align*}
P^+(y_1,y_2,u_1^1,\ldots,u_m^1|u_1^2,\ldots,u_m^2)=\frac{1}{q_1\ldots q_m}P(y_1|u_1^1+u_1^2,\ldots,u_m^1+u_m^2)P(y_2|u_{1}^2,\ldots,u_m^2),
\end{align*}
where the addition $u_k^1+u_k^2$ takes place in $\mathbb{F}_{q_k}$.
\end{mydef}

$P^-$ and $P^+$ can be constructed from two independent copies of $P$ as follows: The $k^{th}$ user chooses independently and uniformly two symbols $U_k^1$ and $U_k^2$ in $\mathbb{F}_{q_k}$, then he calculates $X^1_k=U_k^1+U_k^2$ and $X^2_k=U_k^2$, and he finally sends $X^1_k$ through the first copy of $P$ and $X^2_k$ through the second copy of $P$. Let $Y_1$ and $Y_2$ be the output of the first and second copy of $P$ respectively. $P^-$ is the conditional probability distribution of $Y_1Y_2$ given $U_1^1\ldots U_m^1$, and $P^+$ is the conditional probability distribution of $Y_1Y_2U_1^1\ldots U_m^1$ given $U_1^2\ldots U_m^2$.

Note that the transformation $(U_1^1,\ldots,U_m^1,U_1^2,\ldots,U_m^2)\rightarrow (X_1^1,\ldots,X_m^1,X_1^2,\ldots,X_m^2)$ is bijective and therefore it induces uniform and independent distributions for $X_1^1,\ldots,X_m^1,X_1^2,\ldots,X_m^2$ which are the inputs of the $P$ channels.

\begin{mydef}
Let $\{B_n\}_{n\geq1}$ be i.i.d. uniform random variables on $\{-,+\}$. We define the $MAC$-valued process $\{P_n\}_{n\geq0}$ by:
\begin{align*}
P_0 &:= P,\\
P_{n} &:=P_{n-1}^{B_n}\;\forall n\geq1.
\end{align*}
\end{mydef}

\begin{myprop}
(\cite{SasogluTelYeh} \cite{AbbeTelatar}) The process $\{I[S](P_n)\}_{n\geq0}$ is a bounded super-martingale for all $S\subset\{1,\ldots,m\}$. Moreover, it's a bounded martingale if $S=\{1,\ldots,m\}$.
\end{myprop}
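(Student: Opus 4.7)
The plan is to reduce everything to the single-step inequality
\[
I[S](P^-) + I[S](P^+) \;\leq\; 2\,I[S](P)
\]
for every $S \subset \{1,\ldots,m\}$, with equality when $S = \{1,\ldots,m\}$. Boundedness is immediate since $0 \leq I[S](P_n) \leq \sum_{k\in S}\log q_k$, and because $B_{n+1}$ is uniform on $\{-,+\}$ and independent of $P_n$, we have $\mathbb{E}[I[S](P_{n+1})\mid P_n] = \tfrac{1}{2}I[S](P_n^-) + \tfrac{1}{2}I[S](P_n^+)$. Hence the one-step inequality yields the super-martingale property, and the equality case yields the martingale claim for $S=\{1,\ldots,m\}$.

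To prove the one-step inequality, I take two independent copies of $P$ with inputs $X^1, X^2$ (each a vector of independent uniform symbols) and outputs $Y_1, Y_2$, and introduce $U_k^1, U_k^2$ independent uniform in $\mathbb{F}_{q_k}$ with $X_k^1 = U_k^1 + U_k^2$ and $X_k^2 = U_k^2$. This substitution is a bijection, so $(U^1,U^2)$ is also jointly uniform and independent. Writing $A = U^1(S)$, $B = U^1(S^c)$, $C = U^2(S)$, $D = U^2(S^c)$, these four tuples are mutually independent, and unpacking the definitions of $P^-, P^+$ (noting that $P^\pm$ are fed uniform independent symbols) gives
\[
I[S](P^-) = I(A; Y_1 Y_2 \mid B), \qquad I[S](P^+) = I(C; Y_1 Y_2 \mid A, B, D).
\]

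Applying the chain rule to $I(A, C; Y_1 Y_2 \mid B, D)$ and subtracting the above sum yields
\[
I(A, C; Y_1 Y_2 \mid B, D) - \bigl(I[S](P^-) + I[S](P^+)\bigr) \;=\; I(A; Y_1 Y_2 \mid B, D) - I(A; Y_1 Y_2 \mid B) \;=\; I(A; D \mid Y_1 Y_2, B) \;\geq\; 0,
\]
where the last equality uses $A \perp (B, D)$. It remains to identify the left-hand side with $2\,I[S](P)$: converting back to the $(X^1,X^2)$ coordinates via the bijection, $I(A,C;Y_1Y_2 \mid B,D) = I(X^1(S), X^2(S); Y_1 Y_2 \mid X^1(S^c), X^2(S^c))$, and the product structure $(X^1, Y_1) \perp (X^2, Y_2)$ (two independent channel uses on independent inputs) splits this via the chain rule into $I(X^1(S); Y_1 \mid X^1(S^c)) + I(X^2(S); Y_2 \mid X^2(S^c)) = 2\,I[S](P)$. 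When $S = \{1,\ldots,m\}$, the sets $B$ and $D$ are empty, the defect term $I(A; D \mid Y_1 Y_2, B)$ vanishes identically, and equality holds. The only delicate step is the reduction $I(X^1(S); Y_1 Y_2 \mid X^1(S^c), X^2(S^c)) = I(X^1(S); Y_1 \mid X^1(S^c))$, which is the ``conditioning on independent garbage does nothing'' fact formalized through the factorization $p(y_1, y_2 \mid x^1, x^2) = P(y_1 \mid x^1)\,P(y_2 \mid x^2)$ and the independence $X^1 \perp X^2$.
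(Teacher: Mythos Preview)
Your proof is correct and follows essentially the same route as the paper: both arguments pass to the bijective $(U^1,U^2)\leftrightarrow(X^1,X^2)$ coordinates, apply the chain rule to $I\bigl(U^1(S),U^2(S);Y_1Y_2\,\big|\,U^1(S^c),U^2(S^c)\bigr)$, and then drop $U^2(S^c)$ from the $P^-$ term via monotonicity, which is exactly your defect term $I(A;D\mid Y_1Y_2,B)\geq 0$. The only cosmetic difference is that the paper writes the quantities as $I\bigl(X(S);Y\,X(S^c)\bigr)$ rather than $I\bigl(X(S);Y\mid X(S^c)\bigr)$, which coincide here since the coordinates are independent; your explicit identification of the defect and its vanishing for $S=\{1,\ldots,m\}$ is a nice touch.
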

\begin{proof}
\begin{align*}
2I[S](P)&= I[S](P)+I[S](P)=I(X^1(S);Y_1X^1(S^c))+I(X^2(S);Y_2X^2(S^c))\\
&=I(X^1(S)X^2(S);Y_1Y_2X^1(S^c)X^2(S^c))=I(U^1(S)U^2(S);Y_1Y_2U^1(S^c)U^2(S^c))\\
&=I(U^1(S);Y_1Y_2U^1(S^c)U^2(S^c))+I(U^2(S);Y_1Y_2U^1(S^c)U^2(S^c)U^1(S))\\
&\geq I(U^1(S);Y_1Y_2U^1(S^c)) + I(U^2(S);Y_1Y_2U_1^1\ldots U_m^1U^2(S^c))=I[S](P^-) + I[S](P^+).
\end{align*}
Thus, $E\big(I[S](P_{n+1})\big|P_{n}\big)=\frac{1}{2}I[S](P_{n}^-)+\frac{1}{2}I[S](P_{n}^+)\leq I[S](P_n)$, and $\displaystyle I[S](P_n)\leq \sum_{i\in S}\log q_i$ for all $S\subset\{1,\ldots,m\}$, which proves that $\{I[S](P_n)\}_{n\geq0}$ is a bounded super-martingale. If $S=\{1,\ldots,m\}$, the inequality becomes equality, and $\{I[S](P_n)\}_{n\geq0}$ is a bounded martingale.
\end{proof}

From the bounded super-martingale convergence theorem, we deduce that the sequences $\{I[S](P_n)\}_{n\geq0}$ converge almost surely for all $S\subset\{1,\ldots,m\}$.

Since $\frac{1}{2}(I[S](P^-)+I[S](P^+))\leq I[S](P)$ $\forall S\subset \{1,\ldots,m\}$, then $\frac{1}{2}\mathcal{J}(P-)+\frac{1}{2}\mathcal{J}(P+)\subset\mathcal{J}(P)$, but this subset relation can be strict if one of the inequalities is strict for a certain $S\subset\{1,\ldots,m\}$. Nevertheless, for $S=\{1,\ldots,m\}$, we have $\frac{1}{2}(I(P^-)+I(P^+))= I(P)$, so at least one point of the dominant face of $\mathcal{J}(P)$ is present in $\frac{1}{2}\mathcal{J}(P-)+\frac{1}{2}\mathcal{J}(P+)$ since the capacity region is a polymatroid. Therefore, the symmetric sum capacity is preserved, but the dominant face might lose some points.

\begin{mydef}
In order to simplify our notation, we will introduce the notion of generalized matrices:
\begin{itemize}
\item A generalized matrix $\displaystyle A=(A_1,\ldots,A_l) \in \prod_{i=1}^l \mathbb{F}_{p_i}^{m_i\times l_i}$ is a collection of $l$ matrices. $\mathbb{F}_{p_i}^{m_i\times l_i}$ denotes the set of $m_i\times l_i$ matrices with coefficients in $\mathbb{F}_{p_i}$.
\item If $l_i=0$ in $\displaystyle A=(A_1,\ldots,A_l) \in \prod_{i=1}^l \mathbb{F}_{p_i}^{m_i\times l_i}$, we write $A_i=\o$. In case $A_i=\o$ for all $i$, we write $A=\o$.
\item A generalized vector $\displaystyle \vec{x}=(\vec{x}_1,\ldots,\vec{x}_l) \in \prod_{i=1}^l \mathbb{F}_{p_i}^{m_i}$ is a collection of $l$ vectors.
\item Addition of generalized vectors is defined as component-wise addition.
\item The transposition of a generalized matrix is obtained by transposing each matrix of it: $A^T=(A_1^T,\ldots,A_l^T)$.
\item A generalized matrix operates on a generalized vector component-wise: if $\displaystyle A\in \prod_{i=1}^l \mathbb{F}_{p_i}^{m_i\times l_i}$ and $\displaystyle \vec{x}\in \prod_{i=1}^l \mathbb{F}_{p_i}^{m_i}$, then $\displaystyle \vec{y}=A^T\vec{x}\in \prod_{i=1}^l \mathbb{F}_{p_i}^{l_i}$ is defined by $\vec{y}=(A_1^T\vec{x}_1,\ldots,A_l^T\vec{x}_l)$. By convention, we have $\o^T\vec{x}_i=\vec{0}$.
\item A generalized matrix $A$ is said to be full rank if and only if each matrix component of it is full rank.
\item The rank of a generalized matrix $\displaystyle A \in \prod_{i=1}^l \mathbb{F}_{p_i}^{m_i\times l_i}$ is defined by: $\displaystyle\emph{rank}(A)=\sum_{i=1}^l \emph{rank}(A_i)$.
\item The logarithmic rank of a generalized matrix is defined by:
$\displaystyle\emph{lrank}(A)=\sum_{i=1}^l \emph{rank}(A_i).\log p_i$.
\item If $A$ is a generalized matrix satisfying $A_i\neq\o$ and $A_j=\o$ for all $j\neq i$, we say that $A$ is an ordinary matrix and we identify $A$ and $A_i$.
\end{itemize}
\end{mydef}

\begin{mydef}
Let $\displaystyle P:\prod_{i=1}^l \mathbb{F}_{p_i}^{m_i}\rightarrow \mathcal{Y}$ be an $m$-user MAC, let $\displaystyle A\in \prod_{i=1}^l \mathbb{F}_{p_i}^{m_i\times l_i}$ be a full rank generalized matrix. We define the $\emph{rank}(A)$-user MAC $\displaystyle P[A]:\prod_{i=1}^l \mathbb{F}_{p_i}^{l_i}\rightarrow \mathcal{Y}$ as follows:
$$P[A](y|\vec{u})=\frac{1}{\prod_{i=1}^l p_i^{m_i-l_i}}\sum_{\substack{\vec{x}\;\in\;\prod_{i=1}^l \mathbb{F}_{p_i}^{m_i}\\ A^T\vec{x}=\vec{u}}}P(y|\vec{x}).$$
\end{mydef}

The main result of this section is that, almost surely, $P_n$ becomes a channel where the output is ``almost determined by a generalized matrix", and the convergence is almost surely fast:

\begin{mythe}
Let $\displaystyle P:\prod_{i=1}^l \mathbb{F}_{p_i}^{m_i}\rightarrow \mathcal{Y}$ be an $m$-user MAC. Then for every $\displaystyle 0<\epsilon<\log 2$, and for every $0<\beta<\frac{1}{2}$ we have:
\begin{align*}
\lim_{n\to\infty} \frac{1}{2^n}\bigg|\Big\{ s\in\{-,+&\}^n: \exists A_{s}\in\prod_{i=1}^l \mathbb{F}_{p_i}^{m_i\times r_{i,s}},\; A_{s} \emph{ is full rank}, \\
&|I(P^s)-\emph{lrank}(A_s)|<\epsilon,\;|I(P^s[A_s])-\emph{lrank}(A_s)|<\epsilon, Z(P^s[A_s])<2^{-2^{\beta n}} \Big\}\bigg| = 1.
\end{align*}
\label{mainthemac}
\end{mythe}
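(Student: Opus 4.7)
My plan is to reduce this MAC polarization statement to the single-user group polarization result (Theorem \ref{mainthe22}) applied to the ``collapsed'' channel $\tilde{P}\colon G \longrightarrow \mathcal{Y}$, where $G := \prod_{i=1}^l \mathbb{F}_{p_i}^{m_i}$ is viewed as a finite Abelian group under componentwise addition, and $\tilde{P}(y|\vec{x}) := P(y|\vec{x})$. That is, $\tilde{P}$ and $P$ are identical as stochastic kernels; only the interpretation of the input as a single symbol versus an $m$-tuple differs.

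First I would verify that the MAC polarization transforms $P \mapsto P^\pm$ coincide exactly with the single-user group polarization transforms $\tilde{P} \mapsto \tilde{P}^\pm$ of Section 3 with respect to the Abelian group $(G,+)$. This is immediate from the two definitions: in both constructions two independent copies $U^1, U^2$ are drawn uniformly from $G$ (equivalently, each of the $m$ users independently and uniformly draws $U_k^1, U_k^2$ from $\mathbb{F}_{q_k}$), the inputs $X^1 = U^1 + U^2$ and $X^2 = U^2$ are formed by componentwise addition, and the outputs pass through two independent copies of $P$. Hence the channel-valued processes $\{P_n\}$ and $\{\tilde{P}_n\}$ are pathwise identical, and Theorem \ref{mainthe22} applied to $\tilde{P}$ gives: for every $0 < \epsilon < \log 2$ and every $0 < \beta < 1/2$, the fraction of $s \in \{-,+\}^n$ for which there exists a subgroup $H_s \leq G$ satisfying $\bigl|I(P^s) - \log|G/H_s|\bigr| < \epsilon$, $\bigl|I(P^s[H_s]) - \log|G/H_s|\bigr| < \epsilon$ and $Z(P^s[H_s]) < 2^{-2^{\beta n}}$ tends to $1$.

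Next I would translate the subgroup $H_s$ into a full rank generalized matrix $A_s$. The crucial algebraic ingredient is that, since $p_1, \ldots, p_l$ are distinct primes, every subgroup $H \leq G$ decomposes as a direct product $H = \prod_{i=1}^l H_i$ with $H_i$ an $\mathbb{F}_{p_i}$-subspace of $\mathbb{F}_{p_i}^{m_i}$; this is the primary decomposition of finite Abelian groups, since $H_i = H \cap \mathbb{F}_{p_i}^{m_i}$ is the $p_i$-Sylow part of $H$. Setting $r_{i,s} := m_i - \dim_{\mathbb{F}_{p_i}} H_{s,i}$, I pick a full rank matrix $A_{s,i} \in \mathbb{F}_{p_i}^{m_i \times r_{i,s}}$ satisfying $\ker(A_{s,i}^T) = H_{s,i}$, and I define $A_s := (A_{s,1}, \ldots, A_{s,l})$. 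Then $\emph{lrank}(A_s) = \sum_i r_{i,s} \log p_i = \log|G/H_s|$, and unwinding the definition of $P[A]$ shows that $P^s[A_s]$ and $P^s[H_s]$ are equivalent channels, since both convey the coset of $\vec{x}$ modulo $H_s$ in $G/H_s$ through the isomorphism $A_s^T\colon G/H_s \to \prod_i \mathbb{F}_{p_i}^{r_{i,s}}$. In particular $I(P^s[A_s]) = I(P^s[H_s])$ and $Z(P^s[A_s]) = Z(P^s[H_s])$, so the three inequalities from Theorem \ref{mainthe22} transfer verbatim to the conclusion of Theorem \ref{mainthemac}.

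The main obstacle is bookkeeping rather than new analysis: one must carefully identify the MAC transforms with their single-user group counterparts so that $\{P_n\}$ can be viewed as a process on a single-user group channel, and one must invoke the primary decomposition to reduce an arbitrary subgroup of $G$ to a tuple of $\mathbb{F}_{p_i}$-subspaces and thus to a full rank generalized matrix. Once these two identifications are in place, no further information-theoretic estimate is needed and the theorem follows directly from Theorem \ref{mainthe22}.
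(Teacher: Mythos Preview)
Your proposal is correct and follows essentially the same approach as the paper: both reduce to the single-user group polarization result (Theorem \ref{mainthe22}) applied to $G=\prod_{i=1}^l\mathbb{F}_{p_i}^{m_i}$ and then translate the subgroup $H_s$ into a full rank generalized matrix $A_s$. The only cosmetic difference is in the algebraic bookkeeping of that translation: you decompose $H_s$ via the primary (Sylow) decomposition and then choose $A_{s,i}$ with $\ker(A_{s,i}^T)=H_{s,i}$, whereas the paper works from the quotient side, picking an isomorphism $G/H_s\cong\prod_i\mathbb{F}_{p_i}^{r_{i,s}}$ and arguing via the order of the images of the standard basis vectors that the resulting surjection $f_s$ is given by a generalized matrix; these are two phrasings of the same fact.
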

\begin{proof}
Since $\displaystyle G:=\prod_{i=1}^l \mathbb{F}_{p_i}^{m_i}$ is an abelian group, we can view $P$ as a channel from the Abelian group $G$ to $\mathcal{Y}$. Note that any subgroup of an Abelian group is normal. Therefore, from theorem \ref{mainthe22} we have:
\begin{align*}
\lim_{n\to\infty} \frac{1}{2^n}\bigg|\Big\{  s\in\{-&,+\}^n: \exists H_s\;\textrm{subgroup of $G$},\\
& \big|I(P^s)-\log|G/H_s|\big|<\epsilon, \big|I(P^s[H_s])-\log|G/H_s|\big|<\epsilon,\; Z(P^s[H_s])<2^{-2^{\beta n}} \Big\}\bigg| = 1.
\end{align*}
Let $s\in\{-,+\}^n$ such that that there exists a subgroup $H_s$ of $G$ satisfying:
\begin{itemize}
\item $\big|I(P^s)-\log|G/H_s|\big|<\epsilon$.
\item $\big|I(P^s[H_s])-\log|G/H_s|\big|<\epsilon$.
\item $Z(P^s[H_s])<2^{-2^{\beta n}}$.
\end{itemize}

From the properties of abelian groups, there exist $l$ integers: $r_{1,s}\leq m_1$, \ldots, and $r_{l,s}\leq m_l$ such that $G/H_s$ is isomorphic to $\displaystyle \prod_{i=1}^l \mathbb{F}_{p_i}^{r_{i,s}}$ (Note that $r_{i,s}$ can be zero). Therefore, there exists a surjective homomorphism $\displaystyle f_s: \prod_{i=1}^l \mathbb{F}_{p_i}^{m_i}\longrightarrow \prod_{i=1}^l \mathbb{F}_{p_i}^{r_{i,s}}$, such that for any $\displaystyle\vec{x}\in  \prod_{i=1}^l \mathbb{F}_{p_i}^{m_i}$, $f_s(\vec{x})$ can be determined from $\vec{x}\bmod H_s$ and vice versa.

For all $1\leq i\leq l$, and all $1\leq j\leq m_i$, define the vector $\displaystyle \vec{e}^{\;i,j}\in \prod_{i=1}^l \mathbb{F}_{p_i}^{m_i}$ as having all its components as zeros except the $(i,j)^{th}$ component which is equal to $1$. The order of $\vec{e}^{\;i,j}$ is $p_i$. Let $\displaystyle\vec{y}^{\;i,j}=(\vec{y}^{\;i,j}_{1},\vec{y}^{\;i,j}_{2},\ldots,\vec{y}^{\;i,j}_{l})=f_s(\vec{e}^{\;i,j})\in \prod_{i=1}^l \mathbb{F}_{p_i}^{r_{i,s}}$, if $\vec{y}^{\;i,j}\neq \vec{0}$ then the order of $\vec{y}^{\;i,j}$ must be equal to $p_i$. If $\vec{y}^{\;i,j}_{i'}\neq \vec{0}$ for a certain $i'\neq i$, then $p_{i'}$ divides the order of $\vec{y}^{\;i,j}$ which is a contradiction. Therefore, we must have $\vec{y}^{\;i,j}_{i'}= \vec{0}$ for all $i'\neq i$.

Now for any $\displaystyle\vec{x}\in \prod_{i=1}^l \mathbb{F}_{p_i}^{m_i}$, we have $\displaystyle \vec{x}=\sum_{i=1}^l\sum_{j=1}^{m_i}x_{i,j}\vec{e}^{\;i,j}$, therefore, $\displaystyle f_s(\vec{x})=\sum_{i=1}^l\sum_{j=1}^{m_i}x_{i,j}\vec{y}^{\;i,j}$. Since $\vec{y}^{\;i,j}_{i'}= 0$ for all $i'\neq i$, then $f_s(\vec{x})=A_s^T\vec{x}$, where $\displaystyle A_s=(A_{1,s},\ldots,A_{l,s})\in \prod_{i=1}^l \mathbb{F}_{p_i}^{m_i\times r_{i,s}}$ is a generalized matrix whose components are given by $A_{i,s}=[\vec{y}^{\;i,1}_{i}\; \vec{y}^{\;i,2}_{i}\;\ldots\; \vec{y}^{\;i,m_i}_{i}]^T$. $A_s$ is full rank since $f_s$ is surjective. Moreover, we have:
$$\textrm{lrank}(A_s)=\sum_{i=1}^l r_{i,s}.\log p_i=\log\Big(\prod_{i=1}^l p_i^{r_{i,s}}\Big)=\log|G/H_s|.$$
Recall that for any $\displaystyle\vec{x}\in  \prod_{i=1}^l \mathbb{F}_{p_i}^{m_i}$, $A_s^T\vec{x}=f_s(\vec{x})$ can be determined from $\vec{x}\bmod H_s$ and vice versa, we conclude that $P^s[H_s]$ is equivalent to $P^s[A_s]$. Therefore:
\begin{align*}
\lim_{n\to\infty} \frac{1}{2^n}\bigg|\Big\{ s\in\{-,+&\}^n: \exists A_{s}\in\prod_{i=1}^l \mathbb{F}_{p_i}^{m_i\times r_{i,s}},\; A_{s} \textrm{ is full rank}, \\
&|I(P^s)-\textrm{lrank}(A_s)|<\epsilon,\;|I(P^s[A_s])-\textrm{lrank}(A_s)|<\epsilon, Z(P^s[A_s])<2^{-2^{\beta n}} \Big\}\bigg| = 1.
\end{align*}
\end{proof}

\subsection{Polar code construction for MACs}

Choose $\displaystyle 0<\epsilon<\log 2$, $0<\beta<\beta'<\frac{1}{2}$, and let $n$ be an integer such that
\begin{itemize}
\item  $\displaystyle\Big(\prod_{i=1}^l p_i^{m_i}\Big)2^n2^{-2^{\beta' n}}<2^{-2^{\beta n}}$.
\item $\displaystyle \frac{1}{2^n}|E_n|>1-\frac{\epsilon}{\displaystyle 2\sum_{i=1}^l m_i\log p_i}$.
\end{itemize}
 where
\begin{align*}
E_n =\bigg\{ s\in\{-,+\}^n: &\exists A_{s}\in\prod_{i=1}^l \mathbb{F}_{p_i}^{m_i\times r_{i,s}}, A_{s} \textrm{ is full rank},\\
& |I(P^s)-\textrm{lrank}(A_s)|<\frac{\epsilon}{2},\;|I(P^s[A_s])-\textrm{lrank}(A_s)|<\frac{\epsilon}{2},\;Z(P^s[A_s])<2^{-2^{\beta' n}} \bigg\}.
\end{align*}

Such an integer exists due to theorem \ref{mainthemac}.

For each $s\in\{-,+\}^n$, if $s\notin E_n$ set $F(s,i,j)=1\;\forall i\in\{1,\ldots,l\}\;\forall j\in\{1,\ldots,m_i\}$, and if $s\in E_n$ choose a generalized matrix $A_s=(A_{1,s},\ldots,A_{l,s})$ that satisfies the conditions in $E_n$. For each $1\leq i\leq l$ choose a set of $r_{i,s}$ indices $$S_{i,s}=\{j_1,\ldots j_{r_{i,s}}\}\subset\{1,\ldots,m_i\}$$ such that the corresponding rows of $A_{i,s}$ are linearly independent, then set $F(s,i,j)=1$ if $j\notin S_{i,s}$, and $F(s,i,j)=0$ if $j\in S_{i,s}$. $F(s,i,j)=1$ indicates that the user $(i,j)$ is frozen in the channel $P^s$, i.e., no useful information is being sent.

A polar code is constructed as follows: The user $(i,j)$ sends a symbol $U_{s,i,j}$ through a channel equivalent to $P^s$. If $F(s,i,j)=0$, $U_{s,i,j}$ is an information symbol, and if $F(s,i,j)=1$, $U_{s,i,j}$ is a certain frozen symbol. Since we are free to choose any value for the frozen symbols, we will analyse the performance of the polar code averaged on all the possible choices of the frozen symbols, so we will consider that $U_{s,i,j}$ are independent random variables, uniformly distributed in $\mathbb{F}_{p_i}$ $\forall s\in\{-,+\}^n,\forall i\in\{1,\ldots,l\},\forall j\in\{1,\ldots,m_i\}$. However, the value of $U_{s,i,j}$ will be revealed to the receiver if $F(s,i,j)=1$, and if $F(s,i,j)=0$ the receiver has to estimate $U_{s,i,j}$ from the output of the channel.

We associate the set $\{-,+\}^n$ with the same strict total order $<$ that we defined earlier. Namely, $s_1\ldots s_n<s_1'\ldots s_n'$ if and only if there exists $i\in\{1,\ldots,n\}$ such that $s_i=-$, $s_i'=+$ and $s_j=s_j'\;\forall j>i$.

\subsubsection{Encoding}
Let $\{P_s\}_{s\in\{-,+\}^n}$ be a set of $2^n$ independent copies of the channel $P$. $P_s$ should not be confused with $P^s$: $P_s$ is a copy of the channel $P$ and $P^s$ is a polarized channel obtained from $P$ as before.

Define $U_{s_1,s_2,i,j}$ for $s_1\in\{-,+\}^{l'},s_2\in\{-,+\}^{n-l'}$, $0\leq l'\leq n$ inductively as:
\begin{itemize}
  \item $U_{\o,s,i,j}=U_{s,i,j}$ if $l'=0$, $s\in\{-,+\}^n$.
  \item $U_{(s_1;-),s_2,i,j}=U_{s_1,(s_2;+),i,j}+U_{s_1,(s_2;-),i,j}$ if $l'>0$, $s_1\in\{-,+\}^{l'-1}$, $s_2\in\{-,+\}^{n-l'}$.
  \item $U_{(s_1;+),s_2,i,j}=U_{s_1,(s_2;+),i,j}$ if $l'>0$, $s_1\in\{-,+\}^{l'-1}$, $s_2\in\{-,+\}^{n-l'}$.
\end{itemize}

The user $(i,j)$ sends $U_{s,\o,i,j}$ through the channel $P_s$ for all $s\in\{-,+\}^n$. Let $Y_s$ be the output of the channel $P_s$, and let $Y=\{Y_s\}_{s\in\{-,+\}^n}$. We can prove by induction on $l'$ that the channel $\vec{U}_{s_1,s_2}\rightarrow \big(\{Y_s\}_{s\;\textrm{has}\;s_1\;\textrm{as a prefix}},\{\vec{U}_{s'}\}_{s'<s_2}\big)$ is equivalent to $P^{s_2}$. In particular, the channel $\vec{U}_{s}\rightarrow \big(Y,\{\vec{U}_{s'}\}_{s'<s}\big)$ is equivalent to the channel $P^s$.

\subsubsection{Decoding}
If $s\notin E_n$ then $F(s,i,j)=1$ for all $(i,j)$, and the receiver knows all $U_{s,i,j}$, there is nothing to decode. Suppose that $s\in E_n$, if we know $\{\vec{U}_{s'}\}_{s'<s}$ then we can estimate $\vec{U}_s$ as follows:

\begin{itemize}
  \item If $F(s,i,j)=1$ then we know $U_{s,i,j}$.
  \item We have $F(s,i,j)=0$ for $r_{i,s}$ values of $j$ corresponding to $r_{i,s}$ linearly independent rows of $A_{i,s}$. So if we know $A_{i,s}^T\vec{U}_s$, we can recover $U_{s,i,j}$ for the indices $j$ satisfying $F(s,i,j)=0$.
  \item Since $A_s^T\vec{U}_s\longrightarrow \big(Y,\{\vec{U}_{s'}\}_{s'<s}\big)$ is equivalent to $P^s[A_s]$, we can estimate $A_s^T\vec{U}_s$ using the maximum likelihood decoder of the channel $P^s[A_s]$.
  \item Let $\mathcal{D}_s(Y,\{\vec{U}_{s'}\}_{s'<s})$ be the estimate of $\vec{U}_s$ obtained from $(Y,\{\vec{U}_{s'}\}_{s'<s})$ by the above procedure.
\end{itemize}

This motivates the following successive cancellation decoder:

\begin{itemize}
  \item $\hat{\vec{U}}_s=\vec{U}_s$ if $s\notin E_n$.
  \item $\hat{\vec{U}}_s=\mathcal{D}_s(Y,\{\hat{\vec{U}}_{s'}\}_{s'<s})$ if $s\in E_n$.
\end{itemize}

\subsubsection{Performance of polar codes}

If $s\in E_n$, the probability of error in estimating $A_s^T\vec{U}_s$ using the maximum likelihood decoder is upper bounded by $\displaystyle\Big(\prod_{i=1}^l p_i^{r_{i,s}}\Big)Z(P^s[A_s])< \Big(\prod_{i=1}^l p_i^{m_i}\Big)2^{-2^{\beta' n}}$. Therefore, the probability of error in estimating $\vec{U}_s$ from $(Y,\{\vec{U}_{s'}\}_{s'<s})$ is upper bounded by $\displaystyle \Big(\prod_{i=1}^l p_i^{m_i}\Big)2^{-2^{\beta' n}}$ when $s\in E_n$.

Note that $\mathcal{D}_s(Y,\{\vec{U}_{s'}\}_{s'<s})=\vec{U}_s,\;(\forall s\in E_n) \Leftrightarrow \mathcal{D}_s(Y,\{\hat{\vec{U}}_{s'}\}_{s'<s})=\vec{U}_s\;(\forall s\in E_n)$, so the probability of error of the above successive cancellation decoder is upper bounded by
\begin{align*}
& \sum_{s\in E_n} \textrm{P}\big(\mathcal{D}_s(Y,\{\vec{U}_{s'}\}_{s'<s})\neq \vec{U}_{s}\big) < |E_n|\Big(\prod_{i=1}^l p_i^{m_i}\Big)2^{-2^{\beta' n}}\leq \Big(\prod_{i=1}^l p_i^{m_i}\Big)2^n2^{-2^{\beta' n}} < 2^{-2^{\beta n}}.
\end{align*}
The above upper bound was calculated on average over a random choice of the frozen symbols. Therefore, there is at least one choice of the frozen symbols for which the upper bound of the probability of error still holds.

The last thing to discuss is the rate vector of polar codes. The rate at which the user $(i,j)$ is communicating is $\displaystyle R_{i,j}=\frac{1}{2^n}\sum_{s\in E_n}\big(1-F(s,i,j)\big)\log p_i$, the sum rate is:
\begin{align*}
R&=\sum_{1\leq i\leq l}\sum_{1\leq j\leq m_i}R_{i,j}=\frac{1}{2^n}\sum_{1\leq i\leq l}\sum_{1\leq j\leq m_i}\sum_{s\in E_n}\big(1-F(s,i,j)\big)\log p_i\\
&=\frac{1}{2^n}\sum_{s\in E_n}\sum_{1\leq i\leq l}r_{i,s}\log p_i = \frac{1}{2^n}\sum_{s\in E_n}\textrm{lrank}(A_s).
\end{align*}
We have $|I(P^s)-\textrm{lrank}(A_s)|<\frac{\epsilon}{2}$ and $I(P^s)<\textrm{lrank}(A_s)+\frac{\epsilon}{2}$ for all $s\in E_n$. And since we have $\displaystyle\sum_{s\in\{-,+\}^n} I(P^s)=2^nI(P)$ we conclude:
\begin{align*}
I(P)&=\frac{1}{2^n}\sum_{s\in \{-,+\}^n} I(P^s) = \frac{1}{2^n}\sum_{s\in E_n}I(P^s) + \frac{1}{2^n}\sum_{s\in E_n^c}I(P^s)\\
&<\frac{1}{2^n}\sum_{s\in E_n}\Big(\textrm{lrank}(A_s)+\frac{\epsilon}{2}\Big) + \frac{1}{2^n}|E_n^c|\sum_{i=1}^l m_i\log p_i \\&< R + \frac{1}{2^n}|E_n|\frac{\epsilon}{2} + \frac{\epsilon}{\displaystyle 2\sum_{i=1}^l m_i\log p_i}\sum_{i=1}^l m_i\log p_i\leq R+\frac{\epsilon}{2}+\frac{\epsilon}{2}=R+\epsilon.
\end{align*}

To this end we have proven the following theorem which is the main result of this subsection:

\begin{mythe}
Let $\displaystyle P:\prod_{i=1}^l \mathbb{F}_{p_i}^{m_i}\rightarrow \mathcal{Y}$ be an $m$-user MAC. For every $\displaystyle \epsilon>0$ and for every $0<\beta<\frac{1}{2}$, there exists a polar code of length $N$ having a sum rate $R>I(P)-\epsilon$ and a probability of error $P_e<2^{-N^\beta}$.
\end{mythe}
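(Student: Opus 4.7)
The plan is to realize the construction described immediately before the theorem and verify that it meets both requirements. First, invoking Theorem~\ref{mainthemac} with some fixed $\beta' \in (\beta, 1/2)$, I would choose $n$ large enough that the polarized index set $E_n$ has density at least $1 - \epsilon / \bigl(2\sum_i m_i \log p_i\bigr)$ in $\{-,+\}^n$ and that $\bigl(\prod_i p_i^{m_i}\bigr) 2^n 2^{-2^{\beta' n}} < 2^{-2^{\beta n}}$. For each $s \in E_n$, fix a full-rank generalized matrix $A_s = (A_{1,s},\ldots,A_{l,s})$ witnessing the three simultaneous bounds on $I(P^s)$, $I(P^s[A_s])$, and $Z(P^s[A_s])$, then for each $i$ select $r_{i,s}$ rows of $A_{i,s}$ that are linearly independent and designate the corresponding coordinates as information positions; the remaining coordinates are frozen.

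Next, I would verify the error and rate bounds in turn. Encoding is the component-wise butterfly transform described above, and by induction the virtual channel $\vec{U}_s \to \bigl(Y, \{\vec{U}_{s'}\}_{s' < s}\bigr)$ is equivalent to $P^s$. Successive-cancellation decoding then estimates $A_s^T \vec{U}_s$ via the maximum-likelihood decoder of $P^s[A_s]$ and inverts the row selection to recover the information symbols. The standard inequality $P_e(P) \leq |\mathcal{X}| Z(P)$ yields per-block error at most $\bigl(\prod_i p_i^{r_{i,s}}\bigr) Z(P^s[A_s]) \leq \bigl(\prod_i p_i^{m_i}\bigr) 2^{-2^{\beta' n}}$; a union bound over $s \in E_n$ gives total average error probability $< 2^{-2^{\beta n}} = 2^{-N^\beta}$ where $N = 2^n$. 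Averaging over a uniform random assignment of the frozen symbols then yields at least one deterministic choice realizing the same bound.

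For the rate, the sum rate of the construction is $R = 2^{-n} \sum_{s \in E_n} \textrm{lrank}(A_s)$. Combining the conservation identity $\sum_s I(P^s) = 2^n I(P)$ with $I(P^s) < \textrm{lrank}(A_s) + \epsilon/2$ on $E_n$, the trivial bound $I(P^s) \leq \sum_i m_i \log p_i$ on $E_n^c$, and the density condition on $|E_n|$, one obtains $I(P) < R + \epsilon$. Almost nothing here is genuinely hard---all the real polarization work lives in Theorem~\ref{mainthemac}---but the one point to handle with care is that the input alphabet of the projected channel $P^s[A_s]$ has size $\prod_i p_i^{r_{i,s}}$ rather than that of $P^s$ itself; this is what allows the crude bound $\prod_i p_i^{m_i}$ to pass through the union bound without destroying the doubly exponential error decay, and what makes the choice of $n$ in the first paragraph simultaneously feasible for both the rate and the error constraints.
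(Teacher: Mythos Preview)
Your proposal is correct and follows essentially the same argument as the paper: the choice of $\beta'\in(\beta,1/2)$ and of $n$ satisfying the two constraints, the per-index error bound $\bigl(\prod_i p_i^{r_{i,s}}\bigr)Z(P^s[A_s])\le\bigl(\prod_i p_i^{m_i}\bigr)2^{-2^{\beta' n}}$ followed by a union bound and averaging over frozen symbols, and the rate computation via $\sum_s I(P^s)=2^n I(P)$ combined with the density of $E_n$, all match the paper's development in Section~7.1 line for line. Your closing observation about the alphabet size of $P^s[A_s]$ is exactly the point that makes the union bound go through.
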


Note that by changing our choice of the indices in $S_{i,s}$, we can achieve all the portion of the dominant face of the capacity region that is achievable by polar codes. However, this portion of the dominant face that is achievable by polar codes can be strictly smaller than the dominant face. In such case, we say that we have a loss in the dominant face.

\section{Case study}

In this section, we are interested in studying the problem of loss in the capacity region by polarization in a special case of MACs, namely, the MACs that are combination of linear channels which are defined below. For simplicity, we will consider MACs where the input alphabet size is a prime number $q$ and which is the same for all the users. Moreover we will use the base-$q$ logarithm in the expression of the mutual information and entropies.

\begin{mydef}
An $m$-user MAC $P$ is said to be a combination of $l$ linear channels, if there are $l$ matrices $A_1,\ldots,A_l$, $(A_k\in\mathbb{F}_q^{m\times m_k})$ such that $P$ is equivalent to the channel $\displaystyle P_{lin}:\mathbb{F}_q^m\rightarrow\bigcup_{k=1}^l \Big(\{k\}\times \mathbb{F}_q^{m_k}\Big)$ defined by:

\begin{equation*}
P_{lin}(k,\vec{y}|\vec{x})=
\begin{cases}
p_k\;&\emph{\textrm{if}}\;A_k^T\vec{x}=\vec{y}\\
0\;&\emph{\textrm{otherwise}}
\end{cases}
\end{equation*}

where $\displaystyle \sum_{k=1}^l p_k=1$ and $p_k\neq0\;\forall k$. The channel $P_{lin}$ is denoted by $P_{lin}=\displaystyle\sum_{k=1}^l p_k \mathcal{C}_{A_k}$.
\end{mydef}

The channel $P_{lin}$ can be seen as a box where we have a collection of matrices. At each channel use, a matrix $A_k$ from the box is chosen randomly according to the probabilities $p_k$, and the output of the channel is $A_k^T\vec{x}$, together with the index $k$ (so the receiver knows which matrix has been used).

\subsection{Characterizing non-losing channels}

We are interested in finding the channels whose capacity region is preserved by the polarization process.

\begin{myprop}
If $\{A_k,A_k':1\leq k\leq l\}$ is a set of matrices such that $\textrm{\emph{span}}(A_k)=\textrm{\emph{span}}(A_k')\;\forall k$, then the two channels $\displaystyle P=\sum_{k=1}^l p_k \mathcal{C}_{A_k}$ and $\displaystyle P'=\sum_{k=1}^l p_k \mathcal{C}_{A_k'}$ are equivalent.
\end{myprop}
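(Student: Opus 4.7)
The goal is to show that the two combination-of-linear channels, which differ only in the choice of representatives $A_k, A_k'$ for the same column spans, are equivalent, meaning each is degraded from the other via a deterministic post-processing map (plus taking into account that equivalence of channels is symmetric under invertible relabelling of the output alphabet once the input and all transition probabilities match).

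My plan is first to reduce the question to each summand $\mathcal{C}_{A_k}$ versus $\mathcal{C}_{A_k'}$ individually. Since both channels produce the index $k$ as part of the output with the same probability $p_k$, and conditional on the value $k$ the output is deterministic in $\vec{x}$ given by $A_k^T \vec{x}$ (respectively $A_k'^T \vec{x}$), it suffices to exhibit, for each $k$, deterministic maps between the two conditional outputs that preserve the value given any input $\vec{x}$. Concretely, I would build a post-processing map $\phi$ from the output alphabet of $P$ to that of $P'$ by $\phi(k,\vec{y}) = (k, M_k^T \vec{y})$ for a suitable $M_k$, and a reverse map $\phi'$ by $\phi'(k,\vec{y}) = (k, (M_k')^T \vec{y})$.

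To construct $M_k$ and $M_k'$, I would exploit the hypothesis $\mathrm{span}(A_k)=\mathrm{span}(A_k')$, where span refers to the column span (a subspace of $\mathbb{F}_q^m$). Since every column of $A_k'$ lies in the span of columns of $A_k$, there exists a matrix $M_k \in \mathbb{F}_q^{m_k \times m_k'}$ with $A_k' = A_k M_k$, and likewise $A_k = A_k' M_k'$ for some $M_k' \in \mathbb{F}_q^{m_k' \times m_k}$. Transposing yields $A_k'^T \vec{x} = M_k^T A_k^T \vec{x}$ and $A_k^T \vec{x} = (M_k')^T A_k'^T \vec{x}$ for every $\vec{x} \in \mathbb{F}_q^m$. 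Thus applying $\phi$ to the output of $P$ gives exactly the output of $P'$ on the same input with the same probability $p_k$, and $\phi'$ does the reverse; this shows $P'$ is degraded from $P$ and $P$ is degraded from $P'$, which is the definition of equivalence of channels.

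The one mildly delicate point is simply keeping the linear-algebra conventions straight: the operator $\vec{x}\mapsto A_k^T \vec{x}$ has kernel $\mathrm{span}(A_k)^{\perp}$, so equality of column spans is exactly equivalent to equality of the kernels of $A_k^T$ and $A_k'^T$, which in turn is equivalent to the existence of the matrices $M_k, M_k'$ above; this is what makes the post-processing entirely deterministic rather than requiring any randomization. No convergence or polarization machinery is needed — the statement is a structural observation about the model $P_{lin}$ of definition of combination of linear channels, and once the two change-of-basis matrices are in hand the equivalence falls out by direct substitution into the formula for $P_{lin}(k,\vec{y}|\vec{x})$.
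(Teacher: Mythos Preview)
Your proposal is correct and follows essentially the same approach as the paper: both argue that since $\mathrm{span}(A_k)=\mathrm{span}(A_k')$, each of $A_k^T\vec{x}$ and $A_k'^T\vec{x}$ can be deterministically recovered from the other, so the outputs of $P$ and $P'$ are related by deterministic post-processing in both directions. Your version simply makes the linear algebra explicit by exhibiting the change-of-basis matrices $M_k$, $M_k'$, whereas the paper states the mutual determinability in one sentence without writing them down.
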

\begin{proof}
If $\textrm{span}(A_k)=\textrm{span}(A_k')$, we can determine $A_k^T \vec{x}$ from ${A'_k}^T \vec{x}$ and vice versa. Therefore, from the output of $P$, we can deterministically obtain the output of $P'$ and vice versa. In this sense, $P$ and $P'$ are equivalent, and have the same capacity region.
\end{proof}

\begin{mynot}
Motivated by the above proposition, we will write $\displaystyle P\equiv\sum_{k=1}^l p_k\mathcal{C}_{V_k}$ (where $\{V_k\}_{1\leq k\leq l}$ is a set of $l$ subspaces of $\mathbb{F}_q^m$), whenever $P$ is equivalent to $\displaystyle \sum_{k=1}^l p_k\mathcal{C}_{A_k}$ and $\textrm{\emph{span}}(A_k)=V_k$.
\end{mynot}

\begin{myprop}
If $\displaystyle P\equiv\sum_{k=1}^l p_k\mathcal{C}_{V_k}$, then $\displaystyle I[S](P)=\sum_{k=1}^l p_k\emph{\textrm{dim}}\big(\emph{\textrm{proj}}_S(V_k)\big)$ for all $S\subset\{1,\ldots,m\}$. Where $\emph{\textrm{proj}}_S$ denotes the canonical projection on $\mathbb{F}_q^S$ defined by $\emph{\textrm{proj}}_S(\vec{x})=\emph{\textrm{proj}}_S(x_1,\ldots,x_m)= (x_{i_1},\ldots,x_{i_{|S|}})$ for $\vec{x}=(x_1,\ldots,x_m)\in \mathbb{F}_q^m$ and $S=\{i_1,\ldots,i_{|S|}\}$.
\end{myprop}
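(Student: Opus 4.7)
The plan is to compute $I[S](P)=I(X(S);Y,X(S^c))$ directly using the fact that $P$ is equivalent to a channel whose output is a pair $(K,A_K^T\vec X)$, where $K\in\{1,\ldots,l\}$ is drawn with probabilities $p_k$ independently of $\vec X$, and for each $k$ the matrix $A_k\in\mathbb{F}_q^{m\times m_k}$ satisfies $\mathrm{span}(A_k)=V_k$. Here $\vec X=(X_1,\ldots,X_m)$ has independent uniform components in $\mathbb{F}_q$, and we work with base-$q$ logarithms so that $H(X(S))=|S|$.

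First I would use independence of $K$ from $\vec X$ together with the chain rule to write
\begin{align*}
I(X(S);K,A_K^T\vec X,X(S^c))
&=I(X(S);X(S^c))+I(X(S);K\mid X(S^c))\\
&\quad+I(X(S);A_K^T\vec X\mid K,X(S^c))\\
&=\sum_{k=1}^l p_k\,I(X(S);A_k^T\vec X\mid X(S^c)),
\end{align*}
the first two terms vanishing by independence. This reduces the problem to evaluating, for each fixed $k$, the conditional mutual information $I(X(S);A_k^T\vec X\mid X(S^c))$.

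Next I would split $A_k$ into row blocks $A_k=\binom{B_k}{C_k}$ corresponding to the index set $S$ and its complement $S^c$, so that $A_k^T\vec X=B_k^T X(S)+C_k^T X(S^c)$. Conditioned on $X(S^c)$, the observation $A_k^T\vec X$ is deterministically equivalent to $B_k^T X(S)$. Because $X(S)$ is uniform on $\mathbb{F}_q^{|S|}$ and the fibers of the linear map $X(S)\mapsto B_k^TX(S)$ are cosets of $\ker B_k^T$, we have $H(X(S)\mid B_k^TX(S),X(S^c))=|S|-\mathrm{rank}(B_k)$, hence
\[
I(X(S);A_k^T\vec X\mid X(S^c))=\mathrm{rank}(B_k).
\]

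Finally I would identify $\mathrm{rank}(B_k)$ with $\dim(\mathrm{proj}_S(V_k))$. Since $V_k=\mathrm{span}(A_k)$ is the column span of $A_k$, the columns of $B_k$ are exactly the images under $\mathrm{proj}_S$ of the columns of $A_k$; hence the column span of $B_k$ equals $\mathrm{proj}_S(V_k)$ and $\mathrm{rank}(B_k)=\dim(\mathrm{proj}_S(V_k))$. Combining this with the previous display gives the claimed formula. The only mildly delicate step is the deterministic equivalence of $A_k^T\vec X$ and $B_k^TX(S)$ given $X(S^c)$; everything else is a routine application of the chain rule for mutual information and elementary linear algebra, and the argument does not depend on the particular choice of $A_k$ with $\mathrm{span}(A_k)=V_k$, consistent with the preceding equivalence proposition.
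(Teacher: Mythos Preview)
Your proof is correct and follows essentially the same approach as the paper: both condition on $K=k$ to reduce to a weighted sum, observe that given $X(S^c)$ the output $A_k^T\vec X$ is equivalent to $B_k^TX(S)$ (the paper calls your $B_k$ ``$A_k(S)$''), compute the resulting mutual information as $\mathrm{rank}(B_k)$, and identify this rank with $\dim(\mathrm{proj}_S(V_k))$. The only cosmetic difference is that the paper computes $H(X(S)\mid K,\vec Y,X(S^c))$ first and subtracts from $|S|$, whereas you apply the chain rule to the mutual information directly.
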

\begin{proof}
Let $X_1,\ldots,X_m$ be the input to the channel $\displaystyle\sum_{k=1}^l p_k\mathcal{C}_{A_k}$ (where $A_k$ spans $V_k$), and let $K,\vec{Y}$ be the output of it. We have:
\begin{align*}
H\big(X(S)|&K,\vec{Y},X(S^c)\big)\\
&=\sum_{k,\vec{y}} \textrm{P}_{K,\vec{Y}}(k,\vec{y})H\big(X(S)|k,\vec{y},X(S^c)\big)=\sum_{k,\vec{y}}\sum_{\vec{x}}\textrm{P}_{K,\vec{Y}|\vec{X}}(k,\vec{y}|\vec{x})
\textrm{P}_{\vec{X}}(\vec{x}) H\big(X(S)|k,\vec{y},X(S^c)\big)\\
&=\sum_{k,\vec{y}}\sum_{\substack{\vec{x},\\A_k^T\vec{x}=\vec{y}}}p_k\textrm{P}_{\vec{X}}(\vec{x}) H\big(X(S)|k,\vec{y},X(S^c)\big)=\sum_{k} p_k H\big(X(S)|A_k^T\vec{X},X(S^c)\big)\\
&=\sum_{k} p_k H\big(X(S)|A_k(S)^T\vec{X}(S),X(S^c)\big) = \sum_{k} p_k H\big(X(S)|A_k(S)^T\vec{X}(S)\big).
\end{align*}
The last equality follows from the fact that $X(S)$ and $X(S^c)$ are independent. $A_k(S)$ is obtained from $A_k$ by taking the rows corresponding to $S$. For a given value of $A_k(S)^T\vec{X}(S)$, we have $q^{d_k}$ possible values of $\vec{X}(S)$ with equal probabilities, where $d_k$ is the dimension of the null space of the mapping $\vec{X}(S)\rightarrow A_k(S)^T\vec{X}(S)$, so we have $H\big(X(S)|A_k(S)^T\vec{X}(S)\big)=d_k$.

On the other hand, $|S|-H\big(X(S)|A_k(S)^T\vec{X}(S)\big)=|S|-d_k$ is the dimension of the range space of the the mapping $\vec{X}(S)\rightarrow A_k(S)^T\vec{X}(S)$, which is also equal to the rank of $A_k(S)^T$. Therefore, we have:
\begin{align*}
|S|-H\big(X(S)|A_k(S)^T\vec{X}(S)\big)&=
\textrm{rank}(A_k(S)^T)=\textrm{rank}\big(A_k(S)\big)=\textrm{dim}\Big(\textrm{span} \big(A_k(S)\big)\Big)\\
&=\textrm{dim}\Big(\textrm{proj}_S\big( \textrm{span}(A_k)\big)\Big)=\textrm{dim}\big(\textrm{proj}_S(V_k)\big).
\end{align*}
We conclude:
\begin{align*}
I(X(S);K,Y,X(S^c))&=H(X(S))-H(X(S)|K,Y,X(S^c))=|S|-\sum_{k} p_k H(X(S)|A_k(S)^T\vec{X}(S))\\
&=\sum_{k} p_k \big(|S|-H(X(S)|A_k(S)^T\vec{X}(S))\big)=\sum_{k} p_k (|S|-d_k)\\
&=\sum_{k} p_k \textrm{dim}(\textrm{proj}_S(V_k)).
\end{align*}
\end{proof}

\begin{myprop}
If $\displaystyle P\equiv\sum_{k=1}^l p_k\mathcal{C}_{V_k}$ then:
\begin{itemize}
  \item $\displaystyle P^-\equiv\sum_{k_1=1}^l\sum_{k_2=1}^l p_{k_1}p_{k_2}\mathcal{C}_{V_{k_1}\cap V_{k_2}}$.
  \item $\displaystyle P^+\equiv\sum_{k_1=1}^l\sum_{k_2=1}^l p_{k_1}p_{k_2}\mathcal{C}_{V_{k_1} + V_{k_2}}$.
\end{itemize}
\end{myprop}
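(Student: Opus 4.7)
The approach is to realize $P$ concretely as $P_{lin}=\sum_k p_k \mathcal{C}_{A_k}$ with $\textrm{span}(A_k)=V_k$ and to build $P^{-}$ and $P^{+}$ from two independent copies of $P$. In that construction, each copy independently selects a matrix index $K_i$ with $\Pr(K_i=k)=p_k$, and both $K_1$ and $K_2$ are revealed by the channel outputs. Hence it suffices to verify, conditional on $(K_1,K_2)=(k_1,k_2)$, that $P^{-}$ is equivalent to $\mathcal{C}_{V_{k_1}\cap V_{k_2}}$ and $P^{+}$ is equivalent to $\mathcal{C}_{V_{k_1}+V_{k_2}}$; averaging over $(K_1,K_2)$ with weights $p_{k_1}p_{k_2}$ then yields the stated decompositions.

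\textbf{The $P^{+}$ case (easy direction).} Conditional on $(k_1,k_2)$, the receiver of $P^{+}$ observes $(Y_1,Y_2,U^1)=(A_{k_1}^T(U^1+U^2),\,A_{k_2}^T U^2,\,U^1)$ and wishes to learn $U^2$. Since $U^1$ is known, we can form $Y_1-A_{k_1}^T U^1=A_{k_1}^T U^2$ and use $Y_2=A_{k_2}^T U^2$ directly. Stacking gives $[A_{k_1}|A_{k_2}]^T U^2$, and the column span of $[A_{k_1}|A_{k_2}]$ is exactly $V_{k_1}+V_{k_2}$; conversely, from $[A_{k_1}|A_{k_2}]^T U^2$ together with a fresh independent uniform $U^1$ one can reconstruct $(Y_1,Y_2,U^1)$. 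This establishes the equivalence with $\mathcal{C}_{V_{k_1}+V_{k_2}}$ by the notation introduced just before the statement.

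\textbf{The $P^{-}$ case (main work).} Conditional on $(k_1,k_2)$, the channel takes input $U^1$ and produces $(Y_1,Y_2)=(A_{k_1}^T(U^1+U^2),\,A_{k_2}^T U^2)$ with $U^2$ uniform on $\mathbb{F}_q^m$. I first show that the projection of $U^1$ onto $V_{k_1}\cap V_{k_2}$ is a deterministic function of $(Y_1,Y_2)$: any $v\in V_{k_1}\cap V_{k_2}$ admits representations $v=A_{k_1}a=A_{k_2}b$, and then $v^T U^1 = a^T Y_1 - b^T Y_2$. Let $C_{k_1,k_2}$ be a matrix whose span is $V_{k_1}\cap V_{k_2}$. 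To obtain full equivalence with $\mathcal{C}_{V_{k_1}\cap V_{k_2}}$, I compute the posterior distribution of $U^1$ given $(Y_1,Y_2)$ and show it is uniform on the affine subspace $\{u:C_{k_1,k_2}^T u = C_{k_1,k_2}^T U^1\}$. The calculation uses the bijective change of variables $W:=U^1+U^2$: the posterior of $(U^1,W)$ is uniform on $\{(u,w):A_{k_1}^T w=Y_1,\,A_{k_2}^T u=A_{k_2}^T w - Y_2\}$. Marginalizing in $u$ reveals that the set of possible $U^1$ is $\{u : A_{k_2}^T u \in A_{k_2}^T(V_{k_1}^\perp) - Y_2 + A_{k_2}^T W_0\}$ for any fixed solution $W_0$, and the key identity
\[
V_{k_1}^\perp + V_{k_2}^\perp = (V_{k_1}\cap V_{k_2})^\perp
\]
shows that this set equals precisely $\{u:C_{k_1,k_2}^T u = C_{k_1,k_2}^T U^1\}$, which has the same dimension $m-\dim(V_{k_1}\cap V_{k_2})$ as required. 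Since the posterior is uniform on this common coset, the two channels induce identical joint distributions of input and a minimal sufficient statistic, establishing the equivalence.

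\textbf{Main obstacle.} The only nontrivial part is the $P^{-}$ posterior calculation: one must be careful with the affine structure of the conditional distribution and invoke the duality identity $V_{k_1}^\perp+V_{k_2}^\perp=(V_{k_1}\cap V_{k_2})^\perp$ to match the subspace appearing in the posterior with the intersection subspace $V_{k_1}\cap V_{k_2}$. Once this identification is made, the dimension count $m-\dim(V_{k_1}\cap V_{k_2})$ forces equality of the two affine cosets, and equivalence follows.
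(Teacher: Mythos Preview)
Your proof is correct, and the $P^{+}$ argument is essentially identical to the paper's. The $P^{-}$ argument, however, takes a genuinely different route. The paper proceeds by choosing an explicit adapted basis: it picks matrices $A_{k_1\wedge k_2}$, $B_{k_1}$, $B_{k_2}$ so that $A_{k_1\wedge k_2}$ spans $V_{k_1}\cap V_{k_2}$, $[A_{k_1\wedge k_2}\;B_{k_1}]$ spans $V_{k_1}$, $[A_{k_1\wedge k_2}\;B_{k_2}]$ spans $V_{k_2}$, and the columns of $[A_{k_1\wedge k_2}\;B_{k_1}\;B_{k_2}]$ are linearly independent. It then rewrites the output as the triple $\big(A_{k_1\wedge k_2}^T U_1,\;B_{k_1}^T(U_1+U_2),\;[A_{k_1\wedge k_2}\;B_{k_2}]^T U_2\big)$ and argues by direct independence computations that the last two coordinates are conditionally independent of $U_1$ given the first, so the first coordinate is a sufficient statistic. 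You instead compute the posterior of $U^1$ directly and invoke the duality identity $V_{k_1}^{\perp}+V_{k_2}^{\perp}=(V_{k_1}\cap V_{k_2})^{\perp}$ to identify its support. The paper's basis-choice argument is more hands-on and makes the sufficiency structure explicit; your duality argument is slicker and avoids choosing complements, at the cost of a slightly more abstract posterior calculation. Both land on the same conclusion that, conditionally on $(k_1,k_2)$, only $C_{k_1,k_2}^T U^1$ carries information about $U^1$.

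Two minor remarks: your phrase ``marginalizing in $u$'' should read ``marginalizing out $w$'' (you are projecting the joint posterior onto the $u$-coordinate); and it is worth noting, as you implicitly use, that for each $u$ in the support the fibre of compatible $w$'s is a coset of $V_{k_1}^{\perp}\cap V_{k_2}^{\perp}$ of fixed size, which is what makes the marginal posterior of $U^1$ uniform rather than merely supported on the correct coset.
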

\begin{proof}
Suppose without lost of generality that $\displaystyle P=\sum_{k=1}^l p_k\mathcal{C}_{A_k}$ where $A_k$ spans $V_k$. Let $\vec{U}_1$ be an arbitrarily distributed random vector in $\mathbb{F}_q^m$ (not necessarily uniform), let $\vec{U}_2$ be a uniformly distributed random vector in $\mathbb{F}_q^m$ and independent of $\vec{U}_1$. Let $\vec{X}_1=\vec{U}_1+\vec{U}_2$ and $\vec{X}_2=\vec{U}_2$. Let $(K_1,A_{K_1}^T\vec{X}_1)$ and $(K_2,A_{K_2}^T\vec{X}_2)$ be the output of $P$ when the input is $X_1$ and $X_2$ respectively. Then the channel $\vec{U_1}\rightarrow(K_1,A_{K_1}^T\vec{X}_1,K_2,A_{K_2}^T\vec{X}_2)$ is equivalent to $P^-$ with $\vec{U_1}$ as input. We did not put any constraint on the distribution of $\vec{U}_1$ (such as saying that $\vec{U}_1$ is uniform) because in general, the model of a channel is characterized by its conditional probabilities and no assumption is made on the input probabilities.

Fix $K_1=k_1$ and $K_2=k_2$, let $A_{k_1\wedge k_2},B_{k_1}$ and $B_{k_2}$ be three matrices chosen such that $A_{k_1\wedge k_2}$ spans $V_{k_1}\cap V_{k_2}$, $A_{k_1}:=[A_{k_1\wedge k_2}\;B_{k_1}]$ spans $V_{k_1}$, $A_{k_2}:=[A_{k_1\wedge k_2}\;B_{k_2}]$ spans $V_{k_2}$, and the columns of $[A_{k_1\wedge k_2}\;B_{k_1}\;B_{k_2}]$ are linearly independent. Then knowing $A_{k_1}^T\vec{X}_1$ and $A_{k_2}^T\vec{X}_2$ is equivalent to knowing $A_{k_1\wedge k_2}^T(\vec{U}_1+\vec{U}_2)$, $B_{k_1}^T(\vec{U}_1+\vec{U}_2)$, $A_{k_1\wedge k_2}^T\vec{U}_2$ and $B_{k_2}^T\vec{U}_2$, which is equivalent to knowing $\vec{T}_{k_1,k_2}^1=A_{k_1\wedge k_2}^T\vec{U}_1$, $\vec{T}_{k_1,k_2}^2=B_{k_1}^T(\vec{U}_1+\vec{U}_2)$ and $\vec{T}_{k_1,k_2}^3=[A_{k_1\wedge k_2}\;B_{k_2}]^T\vec{U}_2$. We conclude that $P^-$ is equivalent to the channel:
$$\vec{U}_1\rightarrow \big(K_1,K_2,\vec{T}_{K_1,K_2}^1,\vec{T}_{K_1,K_2}^2,\vec{T}_{K_1,K_2}^3\big).$$
Conditioned on $(K_1,K_2,\vec{T}_{K_1,K_2}^1)$ we have $[B_{K_1}\;A_{K_1\wedge K_2}\;B_{K_2}]^T\vec{U}_2$ is uniform (since the matrix $[B_{K_1}\;A_{K_1\wedge K_2}\;B_{K_2}]$ is full rank) and independent of $\vec{U}_1$, so $[A_{K_1\wedge K_2}\;B_{K_2}]^T\vec{U}_2$ is independent of $(B_{K_1}^T\vec{U}_2,\vec{U}_1)$, which implies that $[A_{K_1\wedge K_2}\;B_{K_2}]^T\vec{U}_2$ is independent of $\big(B_{K_1}^T(\vec{U}_1+\vec{U}_2),\vec{U}_1\big)$. Also conditioned on $(K_1,K_2,\vec{T}_{K_1,K_2}^1)$, $B_{K_1}^T\vec{U}_2$ is uniform and independent of $\vec{U}_1$, which implies that $\vec{U}_1$ is independent of $B_{K_1}^T(\vec{U}_1+\vec{U}_2)$, and this is because the columns of $B_{K_1}$ and $A_{K_1\wedge K_2}$ are linearly independent. We conclude that conditioned on $(K_1,K_2,\vec{T}_{K_1,K_2}^1)$, $\vec{U}_1$ is independent of $\big(\vec{T}_{K_1,K_2}^2,\vec{T}_{K_1,K_2}^3\big)$. Therefore, $\big(K_1,K_2,\vec{T}_{K_1,K_2}^1\big) =\big(K_1,K_2,A_{k_1\wedge k_2}^T\vec{U}_1\big)$ form sufficient statistics. We conclude that $P^-$ is equivalent to the channel:

$$\vec{U}_1\rightarrow =\big(K_1,K_2,A_{k_1\wedge k_2}^T\vec{U}_1\big).$$

And since $\textrm{P}(K_1=k_1,K_2=k_2)=p_{k_1}p_{k_2}$, and $A_{k_1\wedge k_2}$ spans $V_{k_1}\cap V_{k_2}$ we conclude that $\displaystyle P^-\equiv\sum_{k_1=1}^l \sum_{k_2=1}^l p_{k_1}p_{k_2}\mathcal{C}_{V_{k_1}\cap V_{k_2}}$.

Now let $\vec{U}_2$ be arbitrarily distributed in $\mathbb{F}_q^m$ (not necessarily uniform) and $\vec{U}_1$ be a uniformly distributed random vector in $\mathbb{F}_q^m$ independent of $\vec{U}_2$. Let $\vec{X}_1=\vec{U}_1+\vec{U}_2$ and $\vec{X}_2=\vec{U}_2$. Let $(K_1,A_{K_1}^T\vec{X}_1)$ and $(K_2,A_{K_2}^T\vec{X}_2)$ be the output of $P$ when the input is $X_1$ and $X_2$ respectively. Then the channel $\vec{U_2}\rightarrow(K_1,A_{K_1}^T\vec{X}_1,K_2,A_{K_2}^T\vec{X}_2,\vec{U_1})$ is equivalent to $P^+$ with $\vec{U_2}$ as input. Note that the uniform distribution constraint is now on $\vec{U}_1$ and no constraint is put on the distribution of $\vec{U}_2$, since now $\vec{U}_2$ is the input to the channel $P^+$.

Knowing $A_{K_1}^T\vec{X}_1$, $A_{K_2}^T\vec{X}_2$ and $\vec{U_1}$ is equivalent to knowing $A_{K_1}^T(\vec{U}_1+\vec{U}_2)$, $A_{K_2}^T\vec{U}_2$ and $\vec{U_1}$, which is equivalent to knowing $A_{K_1}^T\vec{U}_2$, $A_{K_2}^T\vec{U}_2$ and $\vec{U_1}$. So $P^+$ is equivalent to the channel:

$$\vec{U_2}\rightarrow \big(K_1,K_2,[A_{k_1}\;A_{k_2}]^T\vec{U}_2,\vec{U}_1\big).$$

And since $\vec{U}_1$ is independent of $\vec{U}_2$, the above channel (and hence $P+$) is equivalent to the channel:

$$\vec{U_2}\rightarrow \big(K_1,K_2,[A_{k_1}\;A_{k_2}]^T\vec{U}_2\big).$$

We also have $\textrm{P}(K_1=k_1,K_2=k_2)=p_{k_1}p_{k_2}$, and $[A_{k_1}\;A_{k_2}]$ spans $V_{k_1}+V_{k_2}$. We conclude that $\displaystyle P^+\equiv\sum_{k_1=1}^l \sum_{k_2=1}^l p_{k_1}p_{k_2}\mathcal{C}_{V_{k_1} + V_{k_2}}$.
\end{proof}

\begin{mylem}
Let $\displaystyle P\equiv\sum_{k=1}^l p_k\mathcal{C}_{V_k}$ and $S\subset\{1,\ldots,m\}$, then
\begin{align*}
&\frac{1}{2}\big(I[S](P^-)+I[S](P^+)\big)=I[S](P) \; \Leftrightarrow \; \Big(\forall (k_1,k_2);\;\emph{\textrm{proj}}_S(V_{k_1}\cap V_{k_2})=\emph{\textrm{proj}}_S(V_{k_1})\cap\emph{\textrm{proj}}_S(V_{k_2})\Big).
\end{align*}
\end{mylem}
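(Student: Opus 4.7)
The plan is to convert the identity $\frac{1}{2}(I[S](P^-)+I[S](P^+)) = I[S](P)$ into a pointwise dimension statement, using the explicit descriptions of $P^-$ and $P^+$ given by the previous proposition together with the mutual information formula $I[S](P) = \sum_k p_k \dim(\mathrm{proj}_S(V_k))$.

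First, I apply the preceding two propositions to write
\begin{align*}
I[S](P^-) &= \sum_{k_1,k_2} p_{k_1}p_{k_2}\,\dim\bigl(\mathrm{proj}_S(V_{k_1}\cap V_{k_2})\bigr),\\
I[S](P^+) &= \sum_{k_1,k_2} p_{k_1}p_{k_2}\,\dim\bigl(\mathrm{proj}_S(V_{k_1}+V_{k_2})\bigr),\\
2I[S](P) &= \sum_{k_1,k_2} p_{k_1}p_{k_2}\bigl[\dim(\mathrm{proj}_S V_{k_1}) + \dim(\mathrm{proj}_S V_{k_2})\bigr],
\end{align*}
where the last line uses $\sum_k p_k=1$. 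Next I use two standard facts about projections of subspaces: (i) the projection $\mathrm{proj}_S$ is linear and surjective on sums, so $\mathrm{proj}_S(V_{k_1}+V_{k_2}) = \mathrm{proj}_S(V_{k_1}) + \mathrm{proj}_S(V_{k_2})$; (ii) the modular identity $\dim(W_1+W_2)+\dim(W_1\cap W_2)=\dim W_1+\dim W_2$.

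Combining these, I can rewrite
\[
I[S](P^-)+I[S](P^+) = 2I[S](P) - \sum_{k_1,k_2} p_{k_1}p_{k_2}\,\Delta_{k_1,k_2},
\]
where
\[
\Delta_{k_1,k_2} := \dim\bigl(\mathrm{proj}_S(V_{k_1})\cap\mathrm{proj}_S(V_{k_2})\bigr) - \dim\bigl(\mathrm{proj}_S(V_{k_1}\cap V_{k_2})\bigr).
\]
Since $\mathrm{proj}_S(V_{k_1}\cap V_{k_2}) \subseteq \mathrm{proj}_S(V_{k_1})\cap\mathrm{proj}_S(V_{k_2})$ always, each $\Delta_{k_1,k_2}\geq 0$, and equality of dimensions combined with this inclusion forces equality of the two subspaces.

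Finally, since all $p_{k_1}p_{k_2}>0$, the identity $\frac{1}{2}(I[S](P^-)+I[S](P^+))=I[S](P)$ is equivalent to $\Delta_{k_1,k_2}=0$ for every pair $(k_1,k_2)$, i.e., $\mathrm{proj}_S(V_{k_1}\cap V_{k_2}) = \mathrm{proj}_S(V_{k_1})\cap \mathrm{proj}_S(V_{k_2})$ for all $(k_1,k_2)$. There is no real obstacle here; the only point that deserves care is the justification of $\mathrm{proj}_S(V_{k_1}+V_{k_2}) = \mathrm{proj}_S(V_{k_1})+\mathrm{proj}_S(V_{k_2})$ and the observation that the always-true containment $\mathrm{proj}_S(V_{k_1}\cap V_{k_2})\subseteq \mathrm{proj}_S(V_{k_1})\cap \mathrm{proj}_S(V_{k_2})$ can be strict, which is exactly what makes the condition non-trivial.
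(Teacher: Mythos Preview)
Your proof is correct and follows essentially the same approach as the paper: both compute $I[S](P^\pm)$ via the previous propositions, use $\mathrm{proj}_S(V_1+V_2)=\mathrm{proj}_S(V_1)+\mathrm{proj}_S(V_2)$ together with the modular dimension identity, and reduce the question to whether a sum of nonnegative terms $p_{k_1}p_{k_2}\Delta_{k_1,k_2}$ vanishes. Your explicit introduction of $\Delta_{k_1,k_2}$ and the remark that $p_{k_1}p_{k_2}>0$ forces each term to zero make the equality case slightly more transparent than in the paper, but the argument is the same.
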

\begin{proof}
We know that if $V$ and $V'$ are two subspaces of $\mathbb{F}_q^m$, then $\textrm{proj}_S(V\cap V')\subset\textrm{proj}_S(V)\cap\textrm{proj}_S(V')$ and $\textrm{proj}_S(V+V')=\textrm{proj}_S(V)+\textrm{proj}_S(V')$, which implies that:
\begin{itemize}
\item $\textrm{dim}\big(\textrm{proj}_S(V\cap V')\big)\leq\textrm{dim}\big(\textrm{proj}_S(V)\cap\textrm{proj}_S(V')\big)$.
\item $\textrm{dim}\big(\textrm{proj}_S(V+ V')\big)=\textrm{dim}\big(\textrm{proj}_S(V)+\textrm{proj}_S(V')\big)$.
\end{itemize}
We conclude:
\vspace*{-2mm}
\begin{align*}
\textrm{dim}&\big(\textrm{proj}_S (V\cap V')\big) + \textrm{dim}\big(\textrm{proj}_S(V+ V')\big)\\
&\leq \textrm{dim}\big(\textrm{proj}_S(V)\cap\textrm{proj}_S(V')\big)+\textrm{dim} \big(\textrm{proj}_S(V)+\textrm{proj}_S(V')\big)=\textrm{dim}\big(\textrm{proj}_S(V)\big) + \textrm{dim}\big(\textrm{proj}_S(V')\big).
\end{align*}
Therefore:
\begin{align}
\frac{1}{2}&\big(I[S](P^-)+I[S](P^+)\big)\notag\\
&= \frac{1}{2}\sum_{k_1=1}^l\sum_{k_2=1}^l p_{k_1}p_{k_2}\textrm{dim}\big(\textrm{proj}_S(V_{k_1}\cap V_{k_2})\big) + \frac{1}{2} \sum_{k_1=1}^l\sum_{k_2=1}^l \notag p_{k_1}p_{k_2}\textrm{dim}\big(\textrm{proj}_S(V_{k_1}+ V_{k_2})\big)\Big)\\
&=\frac{1}{2}\bigg(\sum_{k_1=1}^l\sum_{k_2=1}^l p_{k_1}p_{k_2}\Big(\textrm{dim}\big(\textrm{proj}_S(V_{k_1}\cap V_{k_2})\big)+\textrm{dim}\big(\textrm{proj}_S(V_{k_1}+ V_{k_2})\big)\Big)\bigg)\notag\\
\label{ineqb}&\leq \frac{1}{2}\bigg(\sum_{k_1=1}^l\sum_{k_2=1}^l p_{k_1}p_{k_2}\Big(\textrm{dim}\big(\textrm{proj}_S( V_{k_1})\big)+\textrm{dim}\big(\textrm{proj}_S(V_{k_2})\big)\Big)\bigg) \\
& =  \frac{1}{2}\Big(\sum_{k_1=1}^l p_{k_1}\textrm{dim}\big(\textrm{proj}_S( V_{k_1})\big)+ \sum_{k_2=1}^l p_{k_2}\textrm{dim}\big(\textrm{proj}_S(V_{k_2})\big)\Big)= \frac{1}{2}(I[S](P)+I[S](P))=I[S](P).\notag
\end{align}

Thus, if we have $\textrm{proj}_S(V_{k_1}\cap V_{k_2})\subsetneq\textrm{proj}_S(V_{k_1})\cap\textrm{proj}_S(V_{k_2})$ for some $k_1,k_2$, then we have $\textrm{dim}\big(\textrm{proj}_S(V_{k_1}\cap V_{k_2})\big)<\textrm{dim}\big(\textrm{proj}_S(V_{k_1})\cap \textrm{proj}_S(V_{k_2})\big)$, and the inequality \ref{ineqb} is strict. We conclude that:
\begin{align*}
&\frac{1}{2}\big(I[S](P^-)+I[S](P^+)\big)=I[S](P) \; \Leftrightarrow \; \Big(\forall (k_1,k_2),\;\textrm{proj}_S(V_{k_1}\cap V_{k_2})=\textrm{proj}_S(V_{k_1})\cap\textrm{proj}_S(V_{k_2})\Big).
\end{align*}
\end{proof}

\begin{mydef}
Let $\mathcal{V}$ be a set of subspaces of $\mathbb{F}_q^m$, we define the closure of $\mathcal{V}$, $cl(\mathcal{V})$, as being the minimal set of subspaces of $\mathbb{F}_q^m$ closed under the two operations $\cap$ and $+$, and including $\mathcal{V}$. We say that the set $\mathcal{V}$ is consistent with respect to $S\subset\{1,\ldots,m\}$ if and only if it satisfies the following property:
\begin{align*}
\forall V_1,V_2&\in cl(\mathcal{V});\;\emph{proj}_S(V_1\cap V_2)=\emph{proj}_S(V_1)\cap\emph{proj}_S(V_2).
\end{align*}
\end{mydef}

\begin{mycor}
If $\mathcal{V}=\{V_k:1\leq k\leq l\}$. $I[S](P)$ is preserved by the polarization process if and only if $\mathcal{V}$ is consistent with respect to $S$.
\end{mycor}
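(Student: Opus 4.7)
The plan is to iterate the preceding lemma along the polarization tree. The formulas $P^-\equiv\sum p_{k_1}p_{k_2}\mathcal{C}_{V_{k_1}\cap V_{k_2}}$ and $P^+\equiv\sum p_{k_1}p_{k_2}\mathcal{C}_{V_{k_1}+V_{k_2}}$ imply by induction that, for every $s\in\{-,+\}^n$, the channel $P^s$ is equivalent to $\sum p_{\vec{k}}\mathcal{C}_{F_s(\vec{V}_{\vec{k}})}$, where $F_s$ is a fixed expression in $\cap,+$ coming from the balanced binary tree that carries the operation $s_{n-d}$ at every internal node of depth $d$ from the root; in particular every subspace appearing in the collection of $P^s$ lies in $cl(\mathcal{V})$. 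Since $\{I[S](P_n)\}$ is a bounded super-martingale, preservation of $I[S]$ along polarization is equivalent to $\tfrac{1}{2}(I[S](P^{s-})+I[S](P^{s+}))=I[S](P^s)$ for every $s$, and by the preceding lemma this is in turn equivalent to $\textrm{proj}_S(V'\cap V'')=\textrm{proj}_S(V')\cap\textrm{proj}_S(V'')$ for every pair $(V',V'')$ drawn from the collection of $P^s$.

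The $(\Leftarrow)$ direction is then immediate: if $\mathcal{V}$ is consistent with respect to $S$, every pair of subspaces in $cl(\mathcal{V})$ satisfies the projection identity, and each $P^s$-collection lies in $cl(\mathcal{V})$.

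For the $(\Rightarrow)$ direction I must upgrade the projection identity from pairs living in individual $P^s$-collections to arbitrary pairs in $cl(\mathcal{V})$. The key ingredient is the idempotence $V\cap V=V+V=V$, which yields a padding principle: any realization $V'=F_{s'}(\vec{V}_{\vec{k}})$ can be rewritten as $F_{ts'u}(\vec{V}_{\vec{k}'})$ for arbitrary strings $t,u$, by repeating the block $\vec{V}_{\vec{k}}$ when appending $u$ and by filling each length-$2^{|t|}$ input block with a single constant entry when prepending $t$. An easy induction on parenthesization depth then shows that every $V'\in cl(\mathcal{V})$ is of the form $F_s(\vec{V}_{\vec{k}})$ for some $s$. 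Given $V_1',V_2'\in cl(\mathcal{V})$ with realizations via strings $s_1$ and $s_2$, the concatenation $s=s_1 s_2$ contains both as substrings, so the padding principle places both $V_1'$ and $V_2'$ in the collection of $P^s$; the projection identity for this $P^s$ applied to the pair $(V_1',V_2')$ then delivers the required consistency.

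The main obstacle is precisely this last step: arranging for two arbitrary elements of $cl(\mathcal{V})$ to appear in the \emph{same} $P^s$-collection. Once one recognizes that $F_s$ is a balanced binary tree and that idempotence permits free padding on either side of $s$, the rest is bookkeeping.
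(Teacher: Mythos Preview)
Your proof is correct and follows the same approach as the paper: iterate the preceding lemma, using that the subspaces appearing in the various $P^s$ exhaust $cl(\mathcal{V})$. Your treatment of the $(\Rightarrow)$ direction is in fact more careful than the paper's two-line argument, since you explicitly justify (via the padding/idempotence trick) that any two elements of $cl(\mathcal{V})$ can be realized simultaneously in the collection of a single $P^s$, a point the paper leaves implicit in the phrase ``we'll reach the closure of $\mathcal{V}$ after a finite number of steps.''
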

\begin{proof}
During the polarization process, we are performing successively the $\cap$ and $+$ operators, which means that we'll reach the closure of $\mathcal{V}$ after a finite number of steps. So $I[S](P)$ is preserved if and only if the above lemma applies to $cl(\mathcal{V})$.
\end{proof}

The above corollary gives a characterization for a combination of linear channels to preserve $I[S](P)$. However, this characterization involves using the closure operator. The next proposition gives a sufficient condition that uses only the initial configuration of subspaces $\mathcal{V}$. This proposition gives a certain ``geometric'' view of what the subspaces should look like if we don't want to lose.

\begin{myprop}
\label{proptop}
Suppose there exists a subspace $V_S$ of dimension $|S|$ such that $\emph{proj}_S(V_S)=\mathbb{F}_q^S$, and suppose that for every $V\in\mathcal{V}$ we have $\emph{proj}_S(V_S\cap V)=\emph{proj}_S(V)$, then $I[S](P)$ is preserved by the polarization process.
\end{myprop}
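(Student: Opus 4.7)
By the preceding corollary, it suffices to show that $\mathcal{V}$ is consistent with respect to $S$, i.e., that every pair $W_1,W_2\in cl(\mathcal{V})$ satisfies $\mathrm{proj}_S(W_1\cap W_2)=\mathrm{proj}_S(W_1)\cap\mathrm{proj}_S(W_2)$. The plan is to introduce the auxiliary property
\[
(\star)\qquad \mathrm{proj}_S(V_S\cap W)=\mathrm{proj}_S(W),
\]
show that $(\star)$ holds for every $W\in\mathcal{V}$ by hypothesis and propagates through both $\cap$ and $+$, and then use the crucial fact that $\mathrm{proj}_S$ is injective on $V_S$ (because $\dim V_S=|S|=\dim\mathrm{proj}_S(V_S)$) to get consistency.

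First I would observe that the restriction $\mathrm{proj}_S|_{V_S}:V_S\to\mathbb{F}_q^S$ is a vector space isomorphism. The key consequence is that if $v_1,v_2\in V_S$ satisfy $\mathrm{proj}_S(v_1)=\mathrm{proj}_S(v_2)$, then $v_1=v_2$; in particular, for any subspaces $W_1,W_2$ one has $(V_S\cap W_1)\cap(V_S\cap W_2)=V_S\cap W_1\cap W_2$ and the map $\mathrm{proj}_S$ intertwines intersections \emph{inside $V_S$} with intersections in $\mathbb{F}_q^S$.

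Next I would show $(\star)$ propagates. For sums, if $W_1,W_2$ satisfy $(\star)$, then
\[
\mathrm{proj}_S(W_1+W_2)=\mathrm{proj}_S(W_1)+\mathrm{proj}_S(W_2)=\mathrm{proj}_S(V_S\cap W_1)+\mathrm{proj}_S(V_S\cap W_2),
\]
and since $(V_S\cap W_1)+(V_S\cap W_2)\subseteq V_S\cap(W_1+W_2)$, squeezing yields $(\star)$ for $W_1+W_2$. For intersections, take $y\in\mathrm{proj}_S(W_1\cap W_2)$; then $y\in\mathrm{proj}_S(W_i)$ for $i=1,2$, so by $(\star)$ there exist $v_i\in V_S\cap W_i$ with $\mathrm{proj}_S(v_i)=y$. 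The injectivity of $\mathrm{proj}_S$ on $V_S$ forces $v_1=v_2$, giving a vector in $V_S\cap W_1\cap W_2$ projecting to $y$. Hence $\mathrm{proj}_S(W_1\cap W_2)\subseteq\mathrm{proj}_S(V_S\cap W_1\cap W_2)$, and the reverse inclusion is trivial, so $(\star)$ holds for $W_1\cap W_2$. By induction on the depth of the expression producing an element of $cl(\mathcal{V})$, every $W\in cl(\mathcal{V})$ satisfies $(\star)$.

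Finally, for $W_1,W_2\in cl(\mathcal{V})$, take $y\in\mathrm{proj}_S(W_1)\cap\mathrm{proj}_S(W_2)$. By $(\star)$ applied to each $W_i$, choose $v_i\in V_S\cap W_i$ with $\mathrm{proj}_S(v_i)=y$; injectivity of $\mathrm{proj}_S$ on $V_S$ gives $v_1=v_2\in V_S\cap W_1\cap W_2$, so $y\in\mathrm{proj}_S(W_1\cap W_2)$. This is the nontrivial inclusion needed for consistency (the other direction is automatic), so $\mathcal{V}$ is consistent with respect to $S$, and the corollary finishes the proof.

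The only place where something beyond formal manipulation is needed is the intersection step: one must extract a \emph{single} vector of $V_S$ lying in both $W_1$ and $W_2$, and this is precisely what injectivity of $\mathrm{proj}_S$ on $V_S$ delivers. That is the heart of the argument; the rest is bookkeeping.
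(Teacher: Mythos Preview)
Your proof is correct and follows essentially the same approach as the paper: introduce the auxiliary property $\mathrm{proj}_S(V_S\cap W)=\mathrm{proj}_S(W)$, verify it is closed under $+$ and $\cap$ using injectivity of $\mathrm{proj}_S|_{V_S}$, conclude it holds on $cl(\mathcal{V})$, and then use injectivity once more to derive consistency. The paper merges the intersection-propagation and consistency steps into a single computation, whereas you separate them, but the substance is identical.
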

\begin{proof}
Let $V_S$ be a subspace satisfying the hypothesis, then it satisfies also the hypothesis if we replace $\mathcal{V}$ by it's closure: If $V_1$ and $V_2$ are two arbitrary subspaces satisfying $$\textrm{proj}_S(V_S\cap V_1)=\textrm{proj}_S(V_1)\;\textrm{and}\;\textrm{proj}_S(V_S\cap V_2)=\textrm{proj}_S(V_2),$$ then $\textrm{proj}_S(V_1)\subset \textrm{proj}_S\big(V_S\cap(V_1+V_2)\big)$ and $\textrm{proj}_S(V_2)\subset \textrm{proj}_S\big(V_S\cap(V_1+V_2)\big)$, which implies $\textrm{proj}_S(V_1+V_2)=\textrm{proj}_S(V_1)+\textrm{proj}_S(V_2)\subset \textrm{proj}_S\big(V_S\cap(V_1+V_2)\big)$. Therefore, $\textrm{proj}_S\big(V_S\cap(V_1+V_2)\big)=\textrm{proj}_S(V_1+V_2)$ since the inverse inclusion is trivial.

Now let $\vec{x}\in\textrm{proj}_S(V_1) \cap \textrm{proj}_S(V_2)$, then $\vec{x}\in\textrm{proj}_S(V_1)=\textrm{proj}_S(V_1\cap V_S)$ and similarly $\vec{x}\in\textrm{proj}_S(V_2\cap V_S)$ which implies that there are two vectors $\vec{x_1}\in V_1\cap V_S$ and $\vec{x_2}\in V_2\cap V_S$ such that $\vec{x}=\textrm{proj}_S(\vec{x}_1)=\textrm{proj}_S(\vec{x_2})$. And since $\textrm{proj}_S(V_S)=\mathbb{F}_q^S$ and $\textrm{dim}(V_S)=|S|$, then the mapping $\textrm{proj}_S:V_S\rightarrow\mathbb{F}_q^S$ is invertible and so $\vec{x}_1=\vec{x}_2$ which implies that $\vec{x}\in\textrm{proj}_S(V_1\cap V_2\cap V_S)$. Thus $\textrm{proj}_S(V_1) \cap \textrm{proj}_S(V_2)\subset\textrm{proj}_S(V_1\cap V_2)\subset\textrm{proj}_S(V_1\cap V_2\cap V_S)$. We conclude that $\textrm{proj}_S(V_1) \cap \textrm{proj}_S(V_2)=\textrm{proj}_S(V_1\cap V_2)=\textrm{proj}_S(V_1\cap V_2\cap V_S)$ since the inverse inclusions are trivial.

We conclude that the set of subspaces $V$ satisfying $\textrm{proj}_S(V\cap V_S)=\textrm{proj}_S(V)$ is closed under the two operators $\cap$ and $+$. And since $\mathcal{V}$ is a subset of this set, $cl(\mathcal{V})$ is a subset as well. Now let $V_1,V_2\in cl(\mathcal{V})$, then $\textrm{proj}_S(V_S\cap V_1)=\textrm{proj}_S(V_1)$ and $\textrm{proj}_S(V_S\cap V_2)=\textrm{proj}_S(V_2)$. Then $\textrm{proj}_S(V_1) \cap \textrm{proj}_S(V_2)=\textrm{proj}_S(V_1\cap V_2)$ as we have seen in the previous paragraph. We conclude that $\mathcal{V}$ is consistent with respect to $S$ and so $I[S](P)$ is preserved.
\end{proof}

\begin{myconj}
The condition in proposition \ref{proptop} is necessary: If $I[S](P)$ is preserved by the polarization process, there must exist a subspace $V_S$ of dimension $|S|$ such that $\emph{proj}_S(V_S)=\mathbb{F}_q^S$, and for every $V\in\mathcal{V}$ we have $\emph{proj}_S(V_S\cap V)=\emph{proj}_S(V)$.
\end{myconj}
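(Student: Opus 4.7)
My first step is to reformulate. Write $\pi = \textrm{proj}_S$ and $K = \ker\pi$, so $\dim K = |S^c|$. The conditions $\dim V_S = |S|$ and $\pi(V_S) = \mathbb{F}_q^S$ say exactly that $V_S$ is a linear complement of $K$ in $\mathbb{F}_q^m$, and then $\pi(V_S \cap V) = \pi(V)$ is equivalent to the direct-sum decomposition $V = (V \cap V_S) \oplus (V \cap K)$, or equivalently to the existence of a linear section $s : \mathbb{F}_q^S \to \mathbb{F}_q^m$ of $\pi$ with image $V_S$ satisfying $s(\pi(V)) \subseteq V$ for every $V \in \mathcal{V}$. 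By the preceding corollary, the hypothesis that $I[S](P)$ is preserved says that $\pi$ restricted to $cl(\mathcal{V})$ is a lattice homomorphism preserving both meets and joins. A useful preliminary observation: once $s(\pi(V)) \subseteq V$ holds for all $V \in \mathcal{V}$, it extends automatically to all of $cl(\mathcal{V})$, since $s$ commutes with sums and $s(\pi(V_1 \cap V_2)) \subseteq s(\pi(V_1)) \cap s(\pi(V_2)) \subseteq V_1 \cap V_2$.

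My main strategy is to build the section $s$ by induction along the lattice $cl(\mathcal{V})$. For each $V \in cl(\mathcal{V})$ I would pick a complement $W_V$ of $V \cap K$ inside $V$ (so $\pi|_{W_V}$ is a bijection onto $\pi(V)$) and try to do so compatibly with the lattice, i.e.
\[
W_{V_1 \cap V_2} = W_{V_1} \cap W_{V_2}, \qquad W_{V_1 + V_2} = W_{V_1} + W_{V_2}.
\]
Using consistency, a dimension count shows $\dim(W_{V_1} + W_{V_2}) = \dim W_{V_1+V_2}$ whenever $W_{V_1} \cap W_{V_2} = W_{V_1 \cap V_2}$, so the sum compatibility is automatic from the intersection compatibility. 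Once such a family exists, I take $V_S$ to be any extension of $W_{V_{\max}}$ (with $V_{\max} = \sum_{V \in \mathcal{V}} V$) to a complement of $K$ in $\mathbb{F}_q^m$; compatibility in the sum forces $W_V \subseteq W_{V_{\max}} \subseteq V_S$ for every $V \in \mathcal{V}$, so $\pi(V_S \cap V) \supseteq \pi(W_V) = \pi(V)$, which is what the conjecture demands.

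The construction of the compatible family $(W_V)$ is the heart of the matter, and is where I expect the main obstacle. The natural plan is to induct on $|cl(\mathcal{V})|$ processed from smallest to largest: at the step of adding a new element $V^{\ast}$, I want to pick $W_{V^{\ast}}$ containing $W_{V^{\ast} \cap V'}$ for each already-handled $V' \in cl(\mathcal{V})$, while still being a complement of $V^{\ast} \cap K$ in $V^{\ast}$. Consistency controls the dimensions of the relevant intersections, but it is not immediate that the simultaneous system of constraints ``the intersection with $W_{V'}$ equals $W_{V^{\ast} \cap V'}$'' has a common solution even when each pairwise constraint does. The cleanest route I see is to interpret $(W_V)_{V \in cl(\mathcal{V})}$ as a zero-cocycle on the poset $cl(\mathcal{V})$ valued in an appropriate sheaf of $\mathbb{F}_q$-vector spaces and to show that the lattice-homomorphism property of $\pi$ forces the corresponding obstruction class in $H^1$ to vanish. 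A more direct alternative is to pick a basis $\bar e_1, \ldots, \bar e_{|S|}$ of $\mathbb{F}_q^S$ and construct $s$ one coordinate at a time: having fixed $s(\bar e_1), \ldots, s(\bar e_{i-1})$, the set of valid values for $s(\bar e_i)$ is an intersection of affine subspaces indexed by $\{V \in \mathcal{V} : \bar e_i \in \pi(V)\}$, and the identity $\pi(V_1 \cap V_2) = \pi(V_1) \cap \pi(V_2)$ should be exactly what guarantees that this intersection remains non-empty at every step.
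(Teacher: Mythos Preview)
This statement is a \emph{conjecture} in the paper; the authors do not prove it and explicitly leave it open. There is therefore no paper proof to compare your attempt against --- you are trying to settle an open problem.

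Your reformulation is correct and clarifying: the desired $V_S$ is exactly the image of a linear section $s$ of $\pi = \textrm{proj}_S$ satisfying $s(\pi(V)) \subseteq V$ for every $V \in \mathcal{V}$, and the hypothesis amounts to $\pi$ being a lattice homomorphism on $cl(\mathcal{V})$. The observation that sum-compatibility of the $W_V$'s follows from intersection-compatibility by a dimension count is also right. But the proposal is not a proof, and it stops precisely at the point you yourself flag. The cohomological suggestion is only a name for the obstruction, not a computation showing it vanishes; nothing in the argument uses finiteness of $\mathbb{F}_q$ or any specific feature of the subspace lattice that would force $H^1 = 0$. The coordinate-by-coordinate route has a more concrete flaw: you assert that the constraints on $s(\bar e_i)$ come only from those $V$ with $\bar e_i \in \pi(V)$, but if some $\pi(V)$ is not a coordinate subspace in your chosen basis --- e.g.\ $\pi(V) = \textrm{span}\{\bar e_1 + \bar e_2\}$ with neither $\bar e_1$ nor $\bar e_2$ in $\pi(V)$ --- then the requirement $s(\bar e_1) + s(\bar e_2) \in V$ couples the two choices and cannot be imposed one index at a time. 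Since the family $\{\pi(V) : V \in cl(\mathcal{V})\}$ need not consist of coordinate subspaces in any single basis, this decoupling is not generally available, and the induction as written does not close. The conjecture remains open on this proposal.
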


\subsection{Maximal loss in the dominant face}

After characterizing the non-losing channels, we are now interested in studying the amount of loss in the capacity region. In order to simplify the problem, we only study it in the case of binary input 2-user MAC since the $q$-ary case is similar.

Since we only have 5 subspaces of $\mathbb{F}_2^2$, we write $\displaystyle P\equiv \sum_{k=0}^4 p_k\mathcal{C}_{V_k}$ (here $p_k$ are allowed to be zero), where $V_0$, \ldots, $V_4$ are the 5 possible subspaces of $\mathbb{F}_2^2$:
\begin{itemize}
\item $V_0=\{(0,0)\}$.
\item $V_1=\{(0,0),(1,0)\}$.
\item $V_2=\{(0,0),(0,1)\}$.
\item $V_3=\{(0,0),(1,1)\}$.
\item $V_4=\{(0,0),(1,0),(0,1),(1,1)\}$.
\end{itemize}

We have $I[\{1\}](P)=p_1+p_3+p_4$, $I[\{2\}](P)=p_2+p_3+p_4$ and $I(P)=I[\{1,2\}](P)=p_1+p_2+p_3+2p_4$.

\begin{mydef}
Let $\displaystyle P\equiv\sum_{k=0}^4 p_k\mathcal{C}_{V_k}$ and $s\in\{-,+\}^n$, we write $p_k^s$ to denote the component of $V_k$ in $P^s$, i.e. we have $\displaystyle P^s\equiv\sum_{k=0}^4 p_k^s\mathcal{C}_{V_k}$.

We denote the average of $p_k^s$ on all possible $s\in\{-,+\}^n$ by $p_k^{(n)}$. i.e. $\displaystyle p_k^{(n)}=\frac{1}{2^n}\sum_{s\in\{-,+\}^n}p_k^s$. $p_k^{(\infty)}$ is the limit of $p_k^{(n)}$ as $n$ tends to infinity. We will see later that $p_k^{(n)}$ is increasing if $k\in\{0,4\}$ and decreasing if $k\in\{1,2,3\}$. This shows that the limit of $p_k^{(n)}$ as $n$ tends to infinity always exists, and $p_k^{(\infty)}$ is well defined.

We denote the average of $I[\{1\}](P^s)$ (resp. $I[\{2\}](P^s)$ and $I(P^s)$) on all possible $s\in\{-,+\}^n$ by $I_1^{(n)}$ (resp. $I_2^{(n)}$ and $I^{(n)}$). We have $I_1^{(n)}=p_1^{(n)}+p_3^{(n)}+p_4^{(n)}$, $I_2^{(n)}=p_2^{(n)}+p_3^{(n)}+p_4^{(n)}$ and $I^{(n)}=p_1^{(n)}+p_2^{(n)}+p_3^{(n)}+2p_4^{(n)}$. If $n$ tends to infinity we get $I_1^{(\infty)}=p_1^{(\infty)}+p_3^{(\infty)}+p_4^{(\infty)}$, $I_2^{(\infty)}=p_2^{(\infty)}+p_3^{(\infty)}+p_4^{(\infty)}$ and $I^{(\infty)}=p_1^{(\infty)}+p_2^{(\infty)}+p_3^{(\infty)}+2p_4^{(\infty)}$.
\end{mydef}

\begin{mydef}
We say that we have maximal loss in the dominant face in the polarization process, if the dominant face of the capacity region converges to a single point.
\end{mydef}

\begin{myrem}
The symmetric capacity region after $n$ polarization steps is the average of the symmetric capacity regions of all the channels $P^s$ obtained after $n$ polarization steps ($s\in\{-,+\}^n$). Therefore, this capacity region is given by:
\begin{align*}
\mathcal{J}(P^{(n)}):=\Big\{(R_1,& R_2):\;0\leq R_1\leq I_1^{(n)},\;0\leq R_2\leq I_2^{(n)},\;0\leq R_1+R_2\leq I^{(n)}\Big\}.
\end{align*}
The above capacity region converges to the ``final capacity region'':
\begin{align*}
\mathcal{J}(P^{(\infty)}):=\Big\{(R_1,& R_2):\;0\leq R_1\leq I_1^{(\infty)},\;0\leq R_2\leq I_2^{(\infty)},\;0\leq R_1+R_2\leq I^{(\infty)}\Big\}.
\end{align*}
The dominant face converges to a single point if and only if $I^{(\infty)}=I_1^{(\infty)}+I_2^{(\infty)}$, which is equivalent to $p_1^{(\infty)}+p_2^{(\infty)}+ p_3^{(\infty)}+2p_4^{(\infty)}=p_1^{(\infty)}+p_2^{(\infty)}+2p_3^{(\infty)} +2p_4^{(\infty)}$. We conclude that we have maximal loss in the dominant face if and only if $p_3^{(\infty)}=0$.
\label{remtem}
\end{myrem}

\begin{mylem}
The order of $p_1,p_2$ and $p_3$ remains the same by the polarization process. e.g. if $p_1<p_3<p_2$ then $p_1^s<p_3^s<p_2^s$, and if $p_2=p_3<p_1$ then $p_2^s=p_3^s<p_1^s$ for all $s\in\{-,+\}^n$.
\end{mylem}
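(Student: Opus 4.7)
The strategy is to derive explicit polynomial formulas for $p_k^-$ and $p_k^+$ in terms of $p_0,\ldots,p_4$, and then show that for $i,j\in\{1,2,3\}$ the difference $p_i^{\pm}-p_j^{\pm}$ factors as $(p_i-p_j)\cdot(\text{nonnegative quantity})$. Iterating over a string $s\in\{-,+\}^n$ then preserves the ordering.

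First I would use the previous proposition characterizing $P^-$ and $P^+$: namely, $P^-\equiv\sum_{k_1,k_2}p_{k_1}p_{k_2}\mathcal{C}_{V_{k_1}\cap V_{k_2}}$ and $P^+\equiv\sum_{k_1,k_2}p_{k_1}p_{k_2}\mathcal{C}_{V_{k_1}+V_{k_2}}$. For the five subspaces $V_0,\ldots,V_4$ of $\mathbb{F}_2^2$ listed just above the lemma, the tables of $\cap$ and $+$ are immediate: among $\{V_1,V_2,V_3\}$ distinct elements intersect in $V_0$ and sum to $V_4$, each $V_i$ ($i\in\{1,2,3\}$) satisfies $V_i\cap V_4=V_i$, $V_i+V_0=V_i$, and so on. Collecting the pairs $(k_1,k_2)$ whose intersection (resp.\ sum) equals $V_i$ for $i\in\{1,2,3\}$ yields
\begin{equation*}
p_i^- = p_i^2 + 2 p_i p_4,\qquad p_i^+ = p_i^2 + 2 p_i p_0.
\end{equation*}

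Now for any distinct $i,j\in\{1,2,3\}$ I would compute
\begin{equation*}
p_i^- - p_j^- = (p_i-p_j)(p_i + p_j + 2p_4),\qquad p_i^+ - p_j^+ = (p_i-p_j)(p_i + p_j + 2p_0).
\end{equation*}
Both second factors are nonnegative since $p_0,\ldots,p_4\geq0$. Moreover, if the factor $p_i+p_j+2p_4$ vanishes then $p_i=p_j=0$, which already forces $p_i^-=p_j^-=0$ and hence preserves the (in this case trivial) ordering; the same remark applies to $p_i+p_j+2p_0$. Consequently $\mathrm{sign}(p_i^{\pm}-p_j^{\pm})=\mathrm{sign}(p_i-p_j)$, and similarly equalities are preserved.

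Finally, an easy induction on the length of $s\in\{-,+\}^n$ transfers the pairwise comparisons from $(p_1,p_2,p_3)$ to $(p_1^s,p_2^s,p_3^s)$, completing the proof. There is no real obstacle here: the only thing to be careful about is the degenerate case where the nonnegative factor is zero, which I handle by observing that it forces both $p_i$ and $p_j$ to be zero and thus keeps the (weak) inequality in place.
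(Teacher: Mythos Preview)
Your proof is correct and follows essentially the same route as the paper: derive the explicit formulas $p_i^-=p_i^2+2p_ip_4$ and $p_i^+=p_i^2+2p_ip_0$ for $i\in\{1,2,3\}$ from the $\cap$/$+$ description of $P^-$ and $P^+$, observe that the ordering of $p_1,p_2,p_3$ is preserved, and induct on $s$. Your explicit factorization $p_i^{\pm}-p_j^{\pm}=(p_i-p_j)(p_i+p_j+2p_{4\text{ or }0})$ makes the ``easily see'' step in the paper fully transparent, and your degenerate-case remark is fine (indeed, if the second factor vanishes then $p_i=p_j=0$, so there was no strict inequality to begin with).
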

\begin{proof}
We have $\displaystyle P^-=\sum_{k=0}^4\sum_{k'=0}^4 p_kp_{k'}\mathcal{C}_{V_k\cap V_{k'}}$ and $\displaystyle P^+=\sum_{k=0}^4\sum_{k'=0}^4 p_kp_{k'}\mathcal{C}_{V_k+V_{k'}}$. Therefore, we have:
\begin{align*}
p_0^-&=p_0^2+2p_0(p_1+p_2+p_3+p_4)+2(p_1p_2+p_2p_3+p_1p_3),\\
p_1^-&=p_1^2+2p_1p_4,\\
p_2^-&=p_2^2+2p_2p_4,\\
p_3^-&=p_3^2+2p_3p_4,\\
p_4^-&=p_4^2,\\
\\
p_0^+&=p_0^2,\\
p_1^+&=p_1^2+2p_1p_0,\\
p_2^+&=p_2^2+2p_2p_0,\\
p_3^+&=p_3^2+2p_3p_0,\\
p_4^+&=p_4^2+2p_4(p_1+p_2+p_3+p_4)+2(p_1p_2+p_2p_3+p_1p_3).
\end{align*}

We can easily see that the order of $p_1^-$,$p_2^-$ and $p_3^-$ is the same as that of $p_1,p_2$ and $p_3$. This is also true for $p_1^+$,$p_2^+$ and $p_3^+$. By using a simple induction on $s$, we conclude that the order of $p_1^s,p_2^s$ and $p_3^s$ is the same as that of $p_1,p_2$ and $p_3$ for all $s\in\{-,+\}^n$.
\end{proof}

\begin{myrem}
The equations that give $\{p_k^-\}_{0\leq k\leq 4}$ and $\{p_k^+\}_{0\leq k\leq 4}$ in terms of $\{p_k\}_{0\leq k\leq 4}$ clearly show that $p_k^{(n)}$ is increasing if $k\in\{0,4\}$ and decreasing if $k\in\{1,2,3\}$.
\end{myrem}

\begin{mylem}
For $k\in\{1,2,3\}$, if $\exists k'\in\{1,2,3\}\setminus\{k\}$ such that $p_k\leq p_{k'}$ then $$p_k^{(\infty)}=\lim_{l\rightarrow\infty}\frac{1}{2^n}\sum_{s\in\{-,+\}^n}p_k^s=0.$$ In other words, the component of $V_k$ is killed by that of $V_{k'}$.
\end{mylem}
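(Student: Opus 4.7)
The plan is to upgrade the almost sure convergence of $p_k(P_n)$ to the statement that $p_k^\infty := \lim_n p_k(P_n) \in \{0,1\}$ almost surely for each $k \in \{0,1,2,3,4\}$, and then to combine this with the preceding order-preservation lemma in order to exclude $p_k^\infty = 1$ under the hypothesis.

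First I would observe that the explicit recursions recorded in the preceding lemma make each $\{p_k(P_n)\}_n$ a bounded martingale (super-martingale when $k \in \{1,2,3\}$, sub-martingale when $k \in \{0,4\}$). Indeed, for $k \in \{1,2,3\}$ one has $\mathbb{E}[p_k(P_{n+1}) \mid P_n] = \frac{1}{2}(p_k^- + p_k^+) = p_k(p_k + p_0 + p_4) = p_k(1 - p_{j_1} - p_{j_2}) \le p_k$, where $\{k,j_1,j_2\} = \{1,2,3\}$, and an analogous computation handles $k \in \{0,4\}$. The bounded martingale convergence theorem then supplies the almost sure limits $p_k^\infty$.

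Next I would show that $p_k^\infty \in \{0,1\}$ almost surely. Since each of the events $\{B_{n+1} = +\}$ and $\{B_{n+1} = -\}$ occurs infinitely often almost surely (Borel--Cantelli), one may pass to the limit along each of these subsequences inside the recursions. The identities $p_0^+ = p_0^2$ and $p_4^- = p_4^2$ immediately give, in the limit, $p_0^\infty = (p_0^\infty)^2$ and $p_4^\infty = (p_4^\infty)^2$, so $p_0^\infty, p_4^\infty \in \{0,1\}$. For $k \in \{1,2,3\}$ the recursions $p_k^- = p_k(p_k + 2p_4)$ and $p_k^+ = p_k(p_k + 2p_0)$ yield, in the limit, $p_k^\infty(p_k^\infty + 2p_4^\infty - 1) = 0$ and $p_k^\infty(p_k^\infty + 2p_0^\infty - 1) = 0$; combined with the already known $p_0^\infty, p_4^\infty \in \{0,1\}$, a short case inspection forces $p_k^\infty \in \{0,1\}$ as well.

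To conclude, I would invoke the identity $\sum_j p_j(P_n) = 1$, valid for every $n$, which in the limit gives $\sum_j p_j^\infty = 1$ almost surely; since every summand lies in $\{0,1\}$, exactly one of the five indices equals $1$. By the preceding lemma the order of $p_k$ and $p_{k'}$ is preserved, so $p_k(P_n) \le p_{k'}(P_n)$ for all $n$ and hence $p_k^\infty \le p_{k'}^\infty$ almost surely; if $p_k^\infty = 1$ then $p_{k'}^\infty = 1$ as well, contradicting the uniqueness of the surviving index. Therefore $p_k^\infty = 0$ almost surely, and by bounded convergence $p_k^{(\infty)} = \mathbb{E}[p_k^\infty] = 0$. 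The only slightly delicate point is the passage to the limit inside the recursions along the ``$+$-only'' and ``$-$-only'' subsequences in the second step; this is a routine consequence of almost sure convergence together with Borel--Cantelli for the Bernoulli sequence $\{B_{n+1}\}$, and I do not anticipate any deeper obstruction.
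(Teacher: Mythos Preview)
Your proof is correct, but it takes a different route from the paper's. The paper simply invokes the main polarization theorem (Theorem~\ref{mainthemac}) to conclude directly that $(p_0^s,\ldots,p_4^s)$ converges almost surely to one of the five standard basis vectors, and then uses the order-preservation lemma exactly as you do in your final step. You instead reprove the $\{0,1\}$-valued limit from scratch, using only the explicit quadratic recursions: bounded super/sub-martingale convergence gives the almost sure limits, and passing to the limit along the $\{B_{n+1}=+\}$ and $\{B_{n+1}=-\}$ subsequences in the identities $p_0^+=p_0^2$, $p_4^-=p_4^2$, $p_k^\pm=p_k(p_k+2p_{0\text{ or }4})$ forces each $p_j^\infty\in\{0,1\}$.

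What your approach buys is self-containment: it does not rely on the full polarization machinery and would work for any recursion with this algebraic structure. What the paper's approach buys is brevity, since Theorem~\ref{mainthemac} is already available and gives the basis-vector conclusion in one line. Both arguments close identically via order preservation and the constraint $\sum_j p_j^\infty=1$.
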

\begin{proof}
We know from theorem \ref{mainthemac} that the channel $P^s$ converges almost surely to a deterministic linear channel as $n$ tends to infinity (we treat $s$ as being a uniform random variable in $\{-,+\}^n$). Therefore, the vector $(p_0^s,p_1^s,p_2^s,p_3^s,p_4^s)$ converges almost surely to one of the following vectors: $(1,0,0,0,0)$, $(0,1,0,0,0)$, $(0,0,1,0,0)$, $(0,0,0,1,0)$ or $(0,0,0,0,1)$. In particular, $p_k^s$ converges almost surely to 0 or 1.

Since $p_k\leq p_{k'}$ then $p_k^s\leq p_{k'}^s$ for any $s$, and so $p_k^s$ cannot converge to 1 because otherwise the limit of $p_{k'}^s$ would also be equal to 1, which is not possible since none of the 5 possible vectors contain two ones. We conclude that $p_k^s$ converges almost surely to 0, which means that $p_k^{(n)}$ (the average of $p_k^s$ on all possible $s\in\{-,+\}^n$) converges to 0. Therefore, $p_k^{(\infty)}=0$.
\end{proof}

\begin{myprop}
If $p_3\leq\max\{p_1,p_2\}$, then we have maximal loss in the dominant face.
\end{myprop}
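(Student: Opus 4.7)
My plan is that this proposition is an almost immediate corollary of the two preceding results and requires essentially no new work beyond identifying the right instantiation.

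First I would recall from Remark \ref{remtem} that maximal loss in the dominant face is characterized by the single analytic condition $p_3^{(\infty)} = 0$. So the entire proposition reduces to verifying this one limit identity under the hypothesis $p_3 \leq \max\{p_1, p_2\}$.

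Next, the hypothesis $p_3 \leq \max\{p_1, p_2\}$ says precisely that there exists $k' \in \{1,2\} \subset \{1,2,3\} \setminus \{3\}$ with $p_3 \leq p_{k'}$. This is exactly the condition of the preceding lemma applied with $k = 3$. That lemma then directly yields $p_3^{(\infty)} = 0$, which by Remark \ref{remtem} is equivalent to maximal loss in the dominant face.

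There is no real obstacle here: the two pieces of machinery (the almost-sure convergence of $(p_0^s,\ldots,p_4^s)$ to a standard basis vector coming from Theorem \ref{mainthemac}, and the monotonicity/order-preservation under the $\pm$ transforms) were already assembled in the preceding lemma and remark, and the proposition is just the instantiation $k=3$. The only thing worth explicitly noting in the write-up is the translation between the hypothesis ``$p_3 \leq \max\{p_1,p_2\}$'' and the existential form required by the lemma, which is immediate.
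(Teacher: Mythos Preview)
Your proposal is correct and follows exactly the same approach as the paper: apply the preceding lemma with $k=3$ (using the hypothesis to furnish the required $k'\in\{1,2\}$) to get $p_3^{(\infty)}=0$, then invoke Remark~\ref{remtem}. There is nothing to add.
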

\begin{proof}
If $p_3\leq\max\{p_1,p_2\}$, then by the previous lemma we have $p_3^{(\infty)}=0$. Therefore, we have maximal loss in the dominant face (see remark \ref{remtem}).
\end{proof}

\begin{mycor}
If we do not have maximal loss in the dominant face then the final capacity region (to which the capacity region is converging) must be symmetric.
\end{mycor}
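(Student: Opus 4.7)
The plan is to deduce the corollary directly from the previous proposition and the lemma that precedes it. The key observation is that symmetry of the final capacity region reduces to the equality $I_1^{(\infty)} = I_2^{(\infty)}$, which in turn reduces to $p_1^{(\infty)} = p_2^{(\infty)}$.

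First I would take the contrapositive of the previous proposition: if we do not have maximal loss in the dominant face, then $p_3 > \max\{p_1,p_2\}$. (Indeed, the proposition asserts that $p_3 \le \max\{p_1,p_2\}$ forces maximal loss, so the only remaining case in which there is no maximal loss is $p_3 > p_1$ and $p_3 > p_2$.)

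Next I would apply the lemma asserting that for $k\in\{1,2,3\}$, the existence of some $k'\in\{1,2,3\}\setminus\{k\}$ with $p_k\le p_{k'}$ forces $p_k^{(\infty)}=0$. Taking $k=1,\ k'=3$ and then $k=2,\ k'=3$, the inequalities $p_1 \le p_3$ and $p_2 \le p_3$ (from the previous step) yield
\begin{equation*}
p_1^{(\infty)} = 0, \qquad p_2^{(\infty)} = 0.
\end{equation*}

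Finally I would plug these into the expressions for the limiting individual rates:
\begin{equation*}
I_1^{(\infty)} = p_1^{(\infty)} + p_3^{(\infty)} + p_4^{(\infty)} = p_3^{(\infty)} + p_4^{(\infty)} = p_2^{(\infty)} + p_3^{(\infty)} + p_4^{(\infty)} = I_2^{(\infty)}.
\end{equation*}
Since the final capacity region $\mathcal{J}(P^{(\infty)})$ is determined by the three numbers $I_1^{(\infty)}$, $I_2^{(\infty)}$, $I^{(\infty)}$ as a polymatroid, and the first two coincide, the region is invariant under swapping the two user coordinates, i.e.\ symmetric. There is no substantive obstacle: everything needed has been established in the two preceding results, and the argument is essentially a contrapositive plus a direct substitution.
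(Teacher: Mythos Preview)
Your proof is correct and follows essentially the same approach as the paper: take the contrapositive of the preceding proposition to get $p_3>\max\{p_1,p_2\}$, apply the lemma with $k'=3$ to conclude $p_1^{(\infty)}=p_2^{(\infty)}=0$, and deduce $I_1^{(\infty)}=I_2^{(\infty)}=p_3^{(\infty)}+p_4^{(\infty)}$, hence symmetry of the final region.
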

\begin{proof}
From the above proposition we conclude that $p_3>\max\{p_1,p_2\}$ and from lemma 9 we conclude that $p_1^{(\infty)}=p_2^{(\infty)}=0$. Thus, $I_1^{(\infty)}=I_2^{(\infty)}=p_3^{(\infty)}+p_4^{(\infty)}$ and the final capacity region is symmetric.  In particular, it contains the ``equal-rates'' rate vector.
\end{proof}

\begin{myconj}
The condition in \emph{proposition 9} is necessary for having maximal loss in the dominant face. i.e. if $p_3>\max\{p_1,p_2\}$, then we do not have maximal loss in the dominant face.
\end{myconj}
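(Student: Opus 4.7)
The plan is to reduce the conjecture to a strict positivity bound on a non\-negative series via an exact algebraic identity. By hypothesis $p_3>\max\{p_1,p_2\}$, so the preceding lemma gives $p_1^{(\infty)}=p_2^{(\infty)}=0$, and by Remark \ref{remtem} the conjecture is equivalent to $p_3^{(\infty)}>0$.

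First, I would derive a closed-form identity for $p_3^{(\infty)}$ from the averaged recursion. Reading off the explicit formulas for $p_k^{\pm}$, one checks that for each $k\in\{1,2,3\}$ with $\{k,j,l\}=\{1,2,3\}$ one has $\tfrac{1}{2}(p_k^{-}+p_k^{+})=p_k(1-p_j-p_l)$. Taking expectations and telescoping the resulting recursion
\[
 p_k^{(n+1)}\;=\;p_k^{(n)}-\mathbb{E}\bigl[p_k^{B_1\cdots B_n}\bigl(p_j^{B_1\cdots B_n}+p_l^{B_1\cdots B_n}\bigr)\bigr]
\]
from $n=0$ to $\infty$, and setting
\[
 A:=\sum_{n\ge 0}\mathbb{E}\bigl[p_1^{B_1\cdots B_n}p_2^{B_1\cdots B_n}\bigr],\quad B:=\sum_{n\ge 0}\mathbb{E}\bigl[p_1^{B_1\cdots B_n}p_3^{B_1\cdots B_n}\bigr],\quad C:=\sum_{n\ge 0}\mathbb{E}\bigl[p_2^{B_1\cdots B_n}p_3^{B_1\cdots B_n}\bigr],
\]
the fact that $p_1^{(\infty)}=p_2^{(\infty)}=0$ yields the three linear relations $p_1=A+B$, $p_2=A+C$, and $p_3-p_3^{(\infty)}=B+C$. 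Eliminating $B,C$ gives the closed form
\[
 p_3^{(\infty)}\;=\;p_3-p_1-p_2+2A.
\]

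The identity immediately resolves the easy regime $p_3\ge p_1+p_2$: since $A$ is a sum of non\-negative terms and its $n=0$ term equals $p_1p_2$, one has $p_3^{(\infty)}\ge(p_3-p_1-p_2)+2p_1p_2>0$ whenever $p_1,p_2>0$; and if $\min\{p_1,p_2\}=0$ then $p_3^{(\infty)}=p_3-\max\{p_1,p_2\}>0$ directly. The non\-trivial regime is therefore $p_1,p_2>0$ with $\max\{p_1,p_2\}<p_3<p_1+p_2$, where the conjecture reduces to the strict bound $A>(p_1+p_2-p_3)/2$. A boundary check supports the conjecture: when $p_3=p_1\ge p_2$, the order\-preservation lemma forces $p_3^{s}=p_1^{s}$ along every path, and the same telescoping yields $A=p_2/2$ exactly and $p_3^{(\infty)}=0$, matching the previous proposition.

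The main obstacle is upgrading this boundary equality into a strict inequality throughout the open region $p_3>\max\{p_1,p_2\}$. The most natural algebraic tool, namely the sample\-path ratios $p_3^{s}/p_1^{s}$ and $p_3^{s}/p_2^{s}$ (non\-decreasing along the polarization tree and strictly increasing at every step with $p_0^{s}+p_4^{s}>0$), gives $B\ge(p_3/p_2)A$ and $C\ge(p_3/p_1)A$, which produce only \emph{upper} bounds on $A$ such as $A\le p_1p_2/(p_2+p_3)$, not the needed lower bound. A plausible route is to exploit that $p\mapsto p_3^{(\infty)}(p)$, being the limit of a bounded martingale, is harmonic for the polarization operator $\phi\mapsto\tfrac12\bigl(\phi(p^{-})+\phi(p^{+})\bigr)$ on the $4$\-simplex with vertex values $\mathbf{1}_{e_3}$, and to apply a strong minimum principle ruling out the vanishing of such a harmonic function on a set strictly larger than the known zero set $\{p_3\le\max\{p_1,p_2\}\}$. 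Alternatively one could attempt a comparison with the restricted dynamics on the invariant sub\-simplex $\{p_1=p_2=0\}$, where $p_3^{(n)}$ is an honest martingale giving $p_3^{(\infty)}=p_3$, and transfer positivity to the interior by a coupling or domination argument. I expect this step --- bridging the exact identity to the strict lower bound on $A$ --- to be the principal technical difficulty, since the one\-step identities that are available are tight at the boundary and do not by themselves separate the open region from it.
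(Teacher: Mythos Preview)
The statement you are attempting to prove is explicitly labeled a \emph{conjecture} in the paper, and the paper offers no proof for it; there is therefore no ``paper's own proof'' against which to compare your proposal. What you have written is not a proof but a partial attack, and you are candid about this: your argument settles the regime $p_3\ge p_1+p_2$ (and the degenerate case $\min\{p_1,p_2\}=0$) via the telescoping identity $p_3^{(\infty)}=p_3-p_1-p_2+2A$, but leaves the regime $\max\{p_1,p_2\}<p_3<p_1+p_2$ open.

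Your algebra is correct. The one\-step average $\tfrac12(p_k^-+p_k^+)=p_k(1-p_j-p_l)$ follows directly from the explicit formulas in the paper, and the telescoping to $p_1=A+B$, $p_2=A+C$, $p_3-p_3^{(\infty)}=B+C$ is valid because the $p_k^{(n)}$ are monotone and bounded, so the series converge. The resulting closed form $p_3^{(\infty)}=p_3-p_1-p_2+2A$ is a genuine reduction, and your observation that the needed strict lower bound $A>(p_1+p_2-p_3)/2$ is tight on the boundary $p_3=\max\{p_1,p_2\}$ correctly identifies why the remaining step is delicate. However, neither of the strategies you sketch for that step (a minimum principle for the polarization\-harmonic function, or a coupling to the invariant sub\-simplex $\{p_1=p_2=0\}$) is carried out, and the monotonicity of the ratios $p_3^s/p_1^s$, $p_3^s/p_2^s$ indeed goes the wrong way for a lower bound on $A$, as you note. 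So the proposal does not close the gap: the conjecture remains open after your analysis, exactly as it does in the paper.
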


\section{Conclusion}

We have shown that quasigroup is a sufficient property for an operation to ensure polarization when it is used in an Ar{\i}kan-like construction. The determination of a more general property that is both necessary and sufficient  remains an open problem.

In the case of MACs, we have shown that while the symmetric sum capacity is achievable by polar codes, we may lose some rate vectors from the capacity region by polarization. We have studied this loss in the case where the channel is a combination of linear channels, and we derived a characterization of non-losing channels in this special case. We have also derived a sufficient condition for having maximal loss in the dominant face in the capacity region in the case of binary input 2-user MAC.

It is possible to achieve the whole capacity region of any MAC by applying time sharing of polar codes. An important question, which remains open, is whether it is possible to find a coding scheme, based only on an Arikan-like construction, which achieves the whole symmetric capacity region.

\bibliographystyle{IEEEtran}
\bibliography{bibliofile}
\end{document}